\newcommand{\textcite}{\citet}
\newcommand{\Textcite}{\Citet}
\let\originalleft\left
\let\originalright\right
\renewcommand{\left}{\mathopen{}\mathclose\bgroup\originalleft}
\renewcommand{\right}{\aftergroup\egroup\originalright}
\setlist{noitemsep}
\setlist[enumerate,1]{label=(\alph*)}
\setlist[enumerate,2]{label=(\roman*)}
\newtheorem{theorem}{Theorem}[section]
\newtheorem{lemma}[theorem]{Lemma}
\newtheorem{observation}[theorem]{Observation}
\newtheorem{claim}[theorem]{Claim}
\let\@pklingPr\Pr
  \xdef\pr{\protect\expandafter\noexpand\csname pr \endcsname}
\gdef\csname pr \endcsname#1{\mathinner
        {\@pklingPr({\mathcode`\|32768\let|\midvert #1})}}
  \xdef\Pr{\protect\expandafter\noexpand\csname Pr \endcsname}
\gdef\csname Pr \endcsname#1{\@pklingPr\left(%
     \ifx\SavedDoubleVert\relax \let\SavedDoubleVert\|\fi
     {\let\|\SetDoubleVert
     \mathcode`\|32768\let|\SetVert
     #1}\right)}
  \xdef\ex{\protect\expandafter\noexpand\csname ex \endcsname}
\gdef\csname ex \endcsname#1{\mathinner
        {\mathbb{E}[{\mathcode`\|32768\let|\midvert #1}]}}
  \xdef\Ex{\protect\expandafter\noexpand\csname Ex \endcsname}
\gdef\csname Ex \endcsname#1{\mathbb{E}\left[%
     \ifx\SavedDoubleVert\relax \let\SavedDoubleVert\|\fi
     {\let\|\SetDoubleVert
     \mathcode`\|32768\let|\SetVert
     #1}\right]}
  \xdef\var{\protect\expandafter\noexpand\csname var \endcsname}
\gdef\csname var \endcsname#1{\mathinner
        {\operatorname{Var}[{\mathcode`\|32768\let|\midvert #1}]}}
  \xdef\Var{\protect\expandafter\noexpand\csname Var \endcsname}
\gdef\csname Var \endcsname#1{\operatorname{Var}\left[%
     \ifx\SavedDoubleVert\relax \let\SavedDoubleVert\|\fi
     {\let\|\SetDoubleVert
     \mathcode`\|32768\let|\SetVert
     #1}\right]}
  \xdef\cov{\protect\expandafter\noexpand\csname cov \endcsname}
\gdef\csname cov \endcsname#1{\mathinner
        {\operatorname{Cov}[{\mathcode`\|32768\let|\midvert #1}]}}
  \xdef\Cov{\protect\expandafter\noexpand\csname Cov \endcsname}
\gdef\csname Cov \endcsname#1{\operatorname{Cov}\left[%
     \ifx\SavedDoubleVert\relax \let\SavedDoubleVert\|\fi
     {\let\|\SetDoubleVert
     \mathcode`\|32768\let|\SetVert
     #1}\right]}
\newcommand*{\poly}{\mathsf{poly}}
\newcommand*{\polylog}{\mathsf{polylog}}
\DeclarePairedDelimiter\abs{\lvert}{\rvert}
\DeclarePairedDelimiterX{\norm}[1]{\lVert}{\rVert}{#1}
\DeclarePairedDelimiter\ceil{\lceil}{\rceil}
\DeclarePairedDelimiter\intcc{[}{]}
\DeclarePairedDelimiter\intco{[}{)}
\DeclarePairedDelimiter\intoc{(}{]}
\DeclarePairedDelimiter\intoo{(}{)}
\newcommandx*{\LDAUOmicron}[2][1=@pkling_false]{\mathrm{O}\ifthenelse{\equal{#1}{small}}{\bigl(#2\bigr)}{\left(#2\right)}}
\newcommandx*{\LDAUomicron}[2][1=@pkling_false]{\mathrm{o}\ifthenelse{\equal{#1}{small}}{\bigl(#2\bigr)}{\left(#2\right)}}
\newcommandx*{\LDAUOmega}[2][1=@pkling_false]{\Omega\ifthenelse{\equal{#1}{small}}{\bigl(#2\bigr)}{\left(#2\right)}}
\newcommandx*{\LDAUomega}[2][1=@pkling_false]{\omega\ifthenelse{\equal{#1}{small}}{\bigl(#2\bigr)}{\left(#2\right)}}
\newcommandx*{\LDAUTheta}[2][1=@pkling_false]{\Theta\ifthenelse{\equal{#1}{small}}{\bigl(#2\bigr)}{\left(#2\right)}}
\newcommand*{\N}{\mathbb{N}}
\newcommand*{\greedy}[1]{\ensuremath{\textsc{Greedy}[{#1}]}\xspace}
\newcommand*{\load}{X}
\title{Self-stabilizing Balls \& Bins in Batches\\{\Large The Power of Leaky Bins}}
\date{}
\author[1]{Petra Berenbrink\thanks{petra@cs.sfu.ca.}}
\author[2]{Tom Friedetzky\thanks{tom.friedetzky@dur.ac.uk.}}
\author[1]{Peter Kling\thanks{pkling@sfu.ca. Supported in part by the Pacific Institute for the Mathematical Sciences.}}
\author[1,3]{\\ Frederik Mallmann-Trenn\thanks{mallmann@di.ens.fr}}
\author[4]{Lars Nagel\thanks{nagell@uni-mainz.de. Supported by the German Ministry of Education and Research under Grant 01IH13004.}} 
\author[2]{Chris Wastell\thanks{christopher.wastell@dur.ac.uk. Supported in part by EPSRC.}}
\affil[1]{Simon Fraser University, Burnaby, B.C., V5A 1S6, Canada}
\affil[2]{Durham University, Durham DH1 3LE, U.K.}
\affil[3]{École normale supérieure, 75005 Paris,  France} 
\affil[4]{Johannes Gutenberg-Universität Mainz, 55128 Mainz, Germany}
\begin{document}
\maketitle
\begin{abstract}
A fundamental problem in distributed computing is the distribution of requests to a set of uniform servers without a centralized controller.
Classically, such problems are modelled as static balls into bins processes, where $m$ balls (tasks) are to be distributed to $n$ bins (servers).
In a seminal work, \textcite{ABKU99} proposed the sequential strategy \greedy{d} for $n=m$.
When thrown, a ball queries the load of $d$ random bins and is allocated to a least loaded of these.
\Citeauthor{ABKU99} showed that $d=2$ yields an exponential improvement compared to $d=1$.
\Textcite{BCSV06} extended this to $m\gg n$, showing that the maximal load difference is independent of $m$ for $d=2$ (in contrast to $d=1$).

We propose a new variant of an \emph{infinite} balls into bins process.
Each round an expected number of $\lambda n$ new balls arrive and are distributed (in parallel) to the bins.
Each non-empty bin deletes one of its balls.
This setting models a set of servers processing incoming requests, where clients can query a server's current load but receive no information about parallel requests.
We study the \greedy{d} distribution scheme in this setting and show a strong self-stabilizing property:
For \emph{any} arrival rate $\lambda=\lambda(n)<1$, the system load is time-invariant.
Moreover, for \emph{any} (even super-exponential) round $t$, the maximum system load is (w.h.p.) $\LDAUOmicron[small]{\frac{1}{1-\lambda}\cdot\log\frac{n}{1-\lambda}}$ for $d=1$ and $\LDAUOmicron[small]{\log\frac{n}{1-\lambda}}$ for $d=2$.
In particular, \greedy{2} has an exponentially smaller system load for high arrival rates.
\end{abstract}

\clearpage

\section{Introduction}\label{sec:introduction}
One of the fundamental problems in distributed computing is the distribution of requests, tasks, or data items to a set of uniform servers.
In order to simplify this process and to avoid a single point of failure, it is often advisable to use a simple, randomized strategy instead of a complex, centralized controller to allocate the requests to the servers.
In the most naïve strategy (\emph{1-choice}), each client chooses the server where to send its request uniformly at random.
A more elaborate scheme (\emph{2-choice}) chooses two (or more) servers, queries their current loads, and sends the request to a least loaded of these.
Both approaches are typically modelled as balls-into-bins processes~\cite{Gonnet81,RS98,ABKU99,BCSV06,ACMR98,Stemann96,BCNPP15}, where requests are represented as balls and servers as bins.
While the latter approach leads to considerably better load distributions~\cite{ABKU99,BCSV06}, it loses some of its power in parallel settings, where requests arrive in parallel and cannot take each other into account~\cite{ACMR98,Stemann96}.

We propose and study a new infinite and batchwise balls-into-bins process to model the client-server scenario.
In a round, each server (bin) consumes one of its current tasks (balls).
Afterward, (expectedly) $\lambda n$ tasks arrive and are allocated using a given distribution scheme.
The \emph{arrival rate} $\lambda$ is allowed to be a function of $n$ (e.g., $\lambda=1-1/\poly(n)$).
Standard balls-into-bins results imply that, for high arrival rates, with high probability\footnote{%
	An event $\mathcal{E}$ occurs \emph{with high probability} (w.h.p.) if $\Pr{\mathcal{E}} = 1-n^{-\LDAUOmega{1}}$.
} (w.h.p.) each round there is a bin that receives $\LDAUTheta{\log n}$ balls.

Most other infinite processes limit the total number of concurrent balls in the system by $n$~\cite{ABKU99,BCNPP15} and show a fast recovery.
Since we do not limit the number of balls, our process can, in principle, result in an arbitrarily high system load.
In particular, if starting in a high-load situation (e.g., exponentially many balls), we cannot recover in a polynomial number of steps.
Instead, we adapt the following notion of \emph{self-stabilization}:
The system is positive recurrent (expected return time to a low-load situation is finite) and taking a snapshot of the load situation at an \emph{arbitrary} (even super-exponential large) time yields (w.h.p.) a low maximum load.
Positive recurrence is a standard notion for stability and basically states that the system load is time-invariant.
For irreducible, aperiodic Markov chains it implies the existence of a unique stationary distribution (cf.~Section~\ref{sec:model}).
While this alone does not guarantee a good load in the stationary distribution, together with the snapshot property we can look at an arbitrarily time window of polynomial size (even if it is exponentially far away from the start situation) and give strong load guarantees.
In particular, we give the following bounds on the load in addition to showing positive recurrence:
\begin{description}
\item[1-choice Process:]
	The maximum load at an arbitrary time is (w.h.p.) bounded by $\LDAUOmicron[small]{\frac{1}{1-\lambda}\cdot\log\frac{n}{1-\lambda}}$.
	We also provide a lower bound which is asymptotically tight for $\lambda \leq 1-1/\poly(n)$.
	While this implies that already the simple 1-choice process is self-stabilizing, 
	 the load properties in a \enquote{typical} state are poor: even an arrival rate of only $\lambda=1-1/n$ yields a superlinear maximum load.
\item[2-choice Process:]
	The maximum load at an arbitrary time is (w.h.p.) bounded by $\LDAUOmicron[small]{\log\frac{n}{1-\lambda}}$.
	This allows to maintain an exponentially better system load compared to the 1-choice process; for any $\lambda=1-1/\poly(n)$ the maximum load remains logarithmic.
\end{description}

\subsection{Related Work}\label{sec:relatedwork}
Let us continue with an overview of related work.
We start with classical results for sequential and finite balls-into-bins processes, go over to parallel settings, and give an overview over infinite and batch-based processes similar to ours.
We also briefly mention some results from queuing theory (which is related but studies slightly different quality of service measures and system models).

\paragraph{Sequential Setting.}
There are many strong, well-known results for the classical, sequential balls-into-bins process.
In the sequential setting, $m$ balls are thrown one after another and allocated to $n$ bins.
For $m=n$, the maximum load of any bin is known to be (w.h.p.) $(1+\LDAUomicron{1})\cdot\ln(n)/\ln\ln n$ for the 1-choice process~\cite{Gonnet81, RS98} and $\ln\ln(n)/\ln d+\LDAUTheta{1}$ for the $d$-choice process with $d\geq2$~\cite{ABKU99}.
If $m\geq n\cdot\ln n$, the maximum load increases to $m/n+\LDAUTheta[small]{\sqrt{m\cdot\ln(n)/n}}$~\cite{RS98} and $m/n+\ln\ln(n)/\ln d+\LDAUTheta{1}$~\cite{BCSV06}, respectively.
In particular, note that the number of balls above the average grows with $m$ for $d=1$ but is independent of $m$ for $d\geq2$.
This fundamental difference is known as the \emph{power of two choices}.
A similar (if slightly weaker) result was shown by \textcite{TW14} using a quite elegant proof technique (which we also employ and generalize for our analysis in Section~\ref{sec:two_choice}).
\Textcite{CS97} study adaptive allocation processes where the number of a ball's choices depends on the load of queried bins.
The authors subsequently analyze a scenario that allows reallocations.

\Textcite{BKSS13} adapt the threshold protocol from~\cite{ACMR98} (see below) to a sequential setting and $m\geq n$ bins.
Here, ball $i$ randomly choose a bin until it sees a load smaller than $1+i/n$.
While this is a relatively strong assumption on the balls, this protocol needs only $\LDAUOmicron{m}$ choices in total (allocation time) and achieves an almost optimal maximum load of $\ceil{m/n}+1$.

\paragraph{Parallel Setting.}
Several papers (e.g.~\cite{ACMR98, Stemann96}) investigated parallel settings of multiple-choice games for the case $m=n$.
Here, all $m$ balls have to be allocated in parallel, but balls and bins might employ some (limited) communication.
\Textcite{ACMR98} consider a trade-off between the maximum load and the number of communication rounds $r$ the balls need to decide for a target bin.
Basically, bounds that are close to the classical (sequential) processes can only be achieved if $r$ is close to the maximum load~\cite{ACMR98}.
The authors also give a lower bound on the maximum load if $r$ communication rounds are allowed, and \textcite{Stemann96} provides a matching upper bound via a collision-based protocol.

\paragraph{Infinite Processes.}
In infinite processes, the number of balls to be thrown is not fixed.
Instead, in each of infinitely many rounds, balls are thrown or reallocated while and bins possibly delete old balls.
\Textcite{ABKU99} consider an infinite, sequential process starting with $n$ balls arbitrarily assigned to $n$ bins.
In each round one random ball is reallocated using the $d$-choice process.
For any $t>cn^2\log\log n$, the maximum load at time $t$ is (w.h.p.) $\ln\ln(n)/\ln d+\LDAUOmicron{1}$.

\Textcite{ABS98} consider a system where in each round $m\leq n/9$ balls are allocated.
Bins have a FIFO-queue, and each arriving ball is stored in the queue of two random bins.
After each round, every non-empty bin deletes its frontmost ball (which automatically removes its copy from the second random bin).
It is shown that the expected waiting time is constant and the maximum waiting time is (w.h.p.) $\ln\ln(n)/\ln d+\LDAUOmicron{1}$.
The restriction $m\leq n/9$ is the major drawback of this process.
A differential and experimental study of this process was conducted in~\cite{BCFV00}.
The balls' arrival times are binomially distributed with parameters $n$ and $\lambda=m/n$.
Their results indicate a stable behaviour for $\lambda\leq0.86$.
A similar model was considered by \textcite{Mitzenmacher01}, who considers ball arrivals as a Poisson stream of rate $\lambda n$ for $\lambda<1$.
It is shown that the $2$-choice process reduces the waiting time exponentially compared to the $1$-choice process.

\Textcite{Czumaj98} presents a framework to study the recovery time of discrete-time dynamic allocation processes.
In each round one of $n$ balls is reallocated using the $d$-choice process.
The ball is chosen either by selecting a random bin or by selecting a random ball.
From an arbitrary initial assignment, the system is shown to recover to the maximum load from~\cite{ABKU99} within $\LDAUOmicron{n^2\ln n}$ rounds in the former and $\LDAUOmicron{n\ln n}$ rounds in the latter case.
\Textcite{BCNPP15} consider a similar process with only one random choice per ball, also starting from an arbitrary initial assignment of $n$ balls.
In each round, one ball is chosen from every non-empty bin and reallocated randomly.
The authors define a configuration to be \emph{legitimate} if the maximum load is $\LDAUOmicron{\log n}$.
They show that (w.h.p.) any state recovers in linear time to a legitimate state and maintain such a state for $\poly(n)$ rounds.

\paragraph{Batch-Processes.}
Batch-based processes allocate $m$ balls to $n$ bins in batches of (usually) $n$ balls each, where each batch is allocated in parallel.
They lie between (pure) parallel and sequential processes.
For $m=\tau\cdot n$, \textcite{Stemann96} investigates a scenario with $n$ players each having $m/n$ balls.
To allocate a ball, every player independently chooses two bins and allocates copies of the ball to both of them.
Every bin has two queues (one for first copies, one for second copies) and processes one ball from each queue per round.
When a ball is processed, its copy is removed from the system and the player is allowed to initiate the allocation of the next ball.
If $\tau=\ln n$, all balls are processed in $\LDAUOmicron{\ln n}$ rounds and the waiting time is (w.h.p.) $\LDAUOmicron{\ln\ln n}$.
\Textcite{BCEFN12} study the $d$-choice process in a scenario where $m$ balls are allocated to $n$ bins in batches of size $n$ each. 
The authors show that the load of every bin is (w.h.p.) $m/n\pm\LDAUOmicron{\log n}$.
As noted in Lemma~\ref{lem:two_choice:smoothness}, our analysis can be used to derive the same result by easier means.

\paragraph{Queuing Processes.}
Batch arrival processes have also been considered in the context of queuing systems.
A key motivation for such models stems from the asynchronous transfer mode (ATM) in telecommunication systems.
Tasks arrive in batches and are stored in a FIFO queue.
Several papers~\cite{Sohraby92,Kamal96,Kim12,Alfa03} consider scenarios where the number of arriving tasks is determined by a finite state Markov chain.
Results study steady state properties of the system to determine properties of interest (e.g., waiting times or queue lengths). 
\Textcite{Sohraby92} use spectral techniques to study a multi-server scenario with an infinite queue.
\Textcite{Alfa03} considers a discrete-time process for $n$ identical servers and tasks with constant service time $s\geq1$.
To ensure a stable system, the arrival rate $\lambda$ is assumed to be $\leq n/s$ and tasks are assigned cyclical, allowing to study an arbitrary server (instead of the complete system).
\Textcite{Kamal96} and \Textcite{Kim12} study a system with a finite capacity.
Tasks arriving when the buffer is full are lost.
The authors study the steady state probability and give empirical results to show the decay of waiting times as $n$ increases.

\subsection{Model \& Preliminaries}\label{sec:model}
We model our load balancing problem as an infinite, parallel balls-into-bins processes.
Time is divided into discrete, synchronous rounds.
There are $n$ bins and $n$ generators, and the initial system is assumed to be empty.
At the start of each round, every non-empty bins deletes one ball.
Afterward, every generator generates a ball with a probability of $\lambda=\lambda(n)\in\intcc{0,1}$ (the \emph{arrival rate}).
This generation scheme allows us to consider arrival rates that are arbitrarily close to one (like $1-1/\poly(n)$).
Generated balls are distributed in the system using a distribution process.
In this paper we analyze two specific distribution processes:
\begin{enumerate*}
\item The $1$-choice process \greedy{1} assigns every ball to a randomly chosen bin.
\item The $2$-choice process \greedy{2} assigns every ball to a least loaded among two randomly chosen bins.
\end{enumerate*}

\paragraph{Notation.}
The random variable $\load_i(t)$ denotes the load (number of balls) of the $i$-th fullest bin at the end of round $t$.
Thus, the load situation (configuration) after round $t$ can be described by the load vector $\bm{\load}(t)=(\load_i(t))_{i\in\intcc{n}}\in\N^n$.
We define $\varnothing(t)\coloneqq\frac1n \sum_{i=1}^n\load_i(t)$ as the average load at the end of round $t$.
The value $\nu(t)$ denotes the fraction of non-empty bins after round $t$ and $\eta(t)\coloneqq1-\nu(t)$ the fraction of empty bins after round $t$.
It will be useful to define $\mathbbm{1}_i(t)\coloneqq\min\bigl(1,\load_i(t)\bigr)$ and $\eta_i(t)\coloneqq \mathbbm{1}_i(t)-\nu(t)$ (which equals $\eta(t)$ if $i$ is a non-empty bin and $-\nu(t)$ otherwise).

\paragraph{Markov Chain Preliminaries.}
The evolution of the load vector over time can be interpreted as a Markov chain, since $\bm{\load}(t)$ depends only on $\bm{\load}(t-1)$ and the random choices during round $t$.
We refer to this Markov chain as $\bm{\load}$.
Note that $\bm{\load}$ is time-homogeneous (transition probabilities are time-independent), irreducible (every state is reachable from every other state), and aperiodic (path lengths have no period; in fact, our chain is lazy).
Recall that such a Markov chain is positive recurrent (or ergodic) if the probability to return to the start state is $1$ and the expected return time is finite.
In particular, this implies the existence of a unique stationary distribution.
Positive recurrence is a standard formalization of the intuitive concept of stability.
See~\cite{Levin:2008} for an excellent introduction into Markov chains and the involved terminology.


\section{The 1-Choice Process}\label{sec:one_choice}
We present two main results for the $1$-choice process:
Theorem~\ref{thm:one_choice:stability} states the stability of the system under the 1-choice process for an arbitrary $\lambda$, using the standard notion of positive recurrence (cf.~Section~\ref{sec:introduction}).
In particular, this implies the existence of a stationary distribution for the 1-choice process.
Theorem~\ref{thm:one_choice:maxload} strengthens this by giving a high probability bound on the maximum load for an \emph{arbitrary} round $t\in\N$.
Together, both results imply that the 1-choice process is self-stabilizing.
\begin{theorem}[Stability]\label{thm:one_choice:stability}
Let $\lambda=\lambda(n)<1$.
The Markov chain $\bm{\load}$ of the 1-choice process is positive recurrent.
\end{theorem}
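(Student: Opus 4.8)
The plan is to prove positive recurrence via the Foster--Lyapunov drift criterion (Foster's theorem): it suffices to produce a potential $\Phi\colon\N^n\to\R_{\ge0}$, a \emph{finite} set $S_0$ of configurations, and a constant $\varepsilon>0$ with $\ex{\Phi(\bm{\load}(t+1))\mid\bm{\load}(t)=\bm{x}}\le\Phi(\bm{x})-\varepsilon$ for every $\bm{x}\notin S_0$ and $\ex{\Phi(\bm{\load}(t+1))\mid\bm{\load}(t)=\bm{x}}<\infty$ for every $\bm{x}$. Together with the facts recorded in Section~\ref{sec:model} --- $\bm{\load}$ is time-homogeneous, irreducible and aperiodic on the countable state space $\N^n$ --- this yields the theorem (and the existence of the stationary distribution mentioned in the introduction).

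The obvious candidate, the total load $\Phi(\bm{x})=\sum_i x_i$, does \emph{not} work: in one round exactly one ball leaves each non-empty bin and an expected $\lambda n$ balls arrive, so the drift of the total load is $\lambda n-\nu(t)\,n=(\lambda-\nu(t))\,n$, which is positive whenever fewer than a $\lambda$-fraction of the bins are occupied --- and infinitely many configurations are of this kind (e.g.\ a single bin holding arbitrarily many balls). The fix is an \emph{exponential} potential $\Phi(\bm{x})=\sum_{i=1}^{n}e^{\alpha x_i}$ for a suitably small constant $\alpha=\alpha(\lambda)>0$: deleting a ball from a single very full bin then contributes a negative term that swamps the bounded positive contribution of the at most $n$ empty bins.

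The heart of the matter is the one-step drift, which is pleasantly clean. The number $A_i$ of balls landing in bin $i$ during the arrival phase is \emph{independent of the current configuration} --- it is a function only of the generators' fresh coins and uniform choices --- and, being a sum of $n$ i.i.d.\ $\Bernoulli(\lambda/n)$ variables, has the same moment generating function $\ex{e^{\alpha A_i}}=\bigl(1+\lambda(e^\alpha-1)/n\bigr)^{n}=:\mu$ for every $i$. After the deletion step bin $i$ holds $(\load_i(t)-1)^+$ balls, so $\ex{e^{\alpha\load_i(t+1)}\mid\bm{\load}(t)}=\mu\cdot e^{\alpha(\load_i(t)-1)^+}$, and summing over $i$ (only linearity of expectation is used, no independence across bins) gives
\begin{equation*}
	\ex{\Phi(\bm{\load}(t+1))\mid\bm{\load}(t)}
	=\mu e^{-\alpha}\,\Phi(\bm{\load}(t))+\mu\bigl(1-e^{-\alpha}\bigr)\,\eta(t)\,n.
\end{equation*}
Set $\theta\coloneqq\mu e^{-\alpha}$. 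Using $\eta(t)\,n\le n$, the drift is at most $-(1-\theta)\,\Phi(\bm{\load}(t))+\mu(1-e^{-\alpha})\,n$; once $\theta<1$ this is $\le-1$ as soon as $\Phi(\bm{\load}(t))$ exceeds the fixed threshold $K\coloneqq\bigl(1+\mu(1-e^{-\alpha})n\bigr)/(1-\theta)$. Since $e^{\alpha x_i}\le\Phi(\bm{x})\le K$ forces $x_i\le\alpha^{-1}\ln K$, the set $S_0\coloneqq\{\bm{x}:\Phi(\bm{x})\le K\}$ is finite, and the displayed right-hand side is plainly finite for every $\bm{x}$; Foster's theorem applies with $\varepsilon=1$.

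The single point needing genuine care --- and the only place the hypothesis $\lambda<1$ enters --- is to secure $\theta=\mu e^{-\alpha}<1$, i.e.\ that an admissible $\alpha$ exists at all. From $\mu\le e^{\lambda(e^\alpha-1)}$ it suffices that $\lambda(e^\alpha-1)<\alpha$, equivalently $\lambda<\alpha/(e^\alpha-1)$; since $\alpha/(e^\alpha-1)\nearrow 1$ as $\alpha\searrow 0$, such an $\alpha>0$ exists for every $\lambda<1$, and one may concretely take $\alpha$ to be a small constant multiple of $1-\lambda$ (which also keeps $\mu,\theta,K$ under explicit control). Beyond verifying that these constants are positive and finite, the only mildly delicate step is the elementary estimate $e^{\alpha x_i}\le\Phi(\bm{x})$ pinning down $S_0$ as finite. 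As a bonus, the displayed identity iterates to $\ex{\Phi(\bm{\load}(t+1))}\le\theta\,\ex{\Phi(\bm{\load}(t))}+\mu(1-e^{-\alpha})n$, giving a uniform-in-$t$ bound $\ex{\Phi(\bm{\load}(t))}=\LDAUOmicron[small]{n/(1-\theta)}$ from the empty start; since $\max_i e^{\alpha\load_i(t)}\le\Phi(\bm{\load}(t))$ already absorbs the union bound over bins, Markov's inequality then recovers the pointwise maximum-load bound of Theorem~\ref{thm:one_choice:maxload}.
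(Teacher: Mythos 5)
Your proof is correct, and it takes a genuinely different route from the paper's. The paper keeps the linear potential $\Psi(\bm{x})=\sum_i x_i$ and compensates for its lack of one-step drift by looking ahead $\beta(\bm{x})=\Delta=n^3/(1-\lambda)^2$ rounds: outside the finite set there is a bin with load at least $\Delta$, which deletes deterministically throughout and contributes $-(1-\lambda)\Delta$, while the growth of the remaining bins is majorized by an empty-start system and controlled via Lemma~\ref{lem:one_choice:expected_bin_load}, which in turn rests on the Hajek-type bound \eqref{eq:one_choice:maxload:eq1} from the proof of Theorem~\ref{thm:one_choice:maxload}. You instead switch to the exponential Lyapunov function $\sum_i e^{\alpha x_i}$ and obtain a clean one-step drift; the key observation is that in the 1-choice process the arrival vector is independent of the current configuration, so each bin's contribution factors as $\mu\cdot e^{\alpha(x_i-1)^+}$ with $\mu=\bigl(1+\lambda(e^\alpha-1)/n\bigr)^n$, and only linearity of expectation is needed across bins. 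Your one-step identity, the choice $\alpha=\LDAUTheta{1-\lambda}$ forcing $\theta=\mu e^{-\alpha}<1$, the finiteness of the level set $\{\bm{x}:\Phi(\bm{x})\le K\}$, and the application of Foster's criterion with $\beta\equiv1$ are all sound. What your approach buys: it is self-contained (no dependence on Theorem~\ref{thm:one_choice:maxload} or on Hajek's theorem), and, as you note, iterating the same identity from the empty start gives a uniform-in-$t$ bound on $\Ex{\Phi(\bm{\load}(t))}$ of order $n/(1-\theta)$, from which Markov's inequality re-derives the maximum-load bound of Theorem~\ref{thm:one_choice:maxload} with the correct dependence on $1-\lambda$. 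What the paper's approach buys is robustness: the multi-step linear-potential argument does not require arrivals to be oblivious to the configuration, which is why a variant of it (via the combined potential $\Gamma$) carries over to the 2-choice process, where your factorization of the moment generating function would break down.
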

\begin{theorem}[Maximum Load]\label{thm:one_choice:maxload}
Let $\lambda=\lambda(n)<1$.
Fix an arbitrary round $t$ of the 1-choice process.
The maximum load of all bins is (w.h.p.) bounded by $\LDAUOmicron[small]{\frac{1}{1-\lambda}\cdot\log\frac{n}{1-\lambda}}$.
\end{theorem}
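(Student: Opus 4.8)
The plan is to combine the stability result (Theorem~\ref{thm:one_choice:stability}) with a careful union bound over the past. Since the chain $\bm{\load}$ is positive recurrent, it has a unique stationary distribution $\pi$; but we cannot simply assume the process has mixed, because we want a bound at an \emph{arbitrary} (possibly super-exponentially large) round $t$, and the mixing time could be huge when $\lambda$ is very close to $1$. Instead, I would argue directly about round $t$ by tracking, for a fixed bin $i$, how its load could have been built up over the rounds $t-s, t-s+1, \dots, t$ for a suitable window length $s = s(n,\lambda)$. The key observation is that bin $i$ deletes a ball in every round in which it is non-empty, so if $\load_i(t) \ge L$ then over the last $s$ rounds bin $i$ must have received at least $L + (\text{number of rounds it was non-empty}) \ge L$ balls — in fact, if it has been continuously non-empty for the whole window it received at least $L + s$ balls, and more generally the load at time $t$ is at most (balls received in the window) $-$ (rounds non-empty in the window), plus the load at the start of the window.

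First I would set up the window argument: choose $s = \Theta\bigl(\frac{1}{1-\lambda}\log\frac{n}{1-\lambda}\bigr)$ (the target bound, up to constants). The number of balls bin $i$ receives in a single round is stochastically dominated by $\Binomial(n, 1/n)$ (each of the $\le \lambda n \le n$ generated balls lands in bin $i$ independently with probability $1/n$), so over $s$ rounds it is dominated by $\Binomial(sn, 1/n)$, which has mean $s$ and is tightly concentrated: the probability it exceeds $s + c\,s$ for a suitable constant is at most $n^{-\Omega(1)}$ by a Chernoff bound, as long as $s = \Omega(\log n)$, which holds. More precisely, since each round bin $i$ loses one ball whenever non-empty, I want: either bin $i$ was empty at some round in the window (and then its load at $t$ is at most the number of balls received since that last empty round, which is dominated by $\Binomial(sn,1/n)$ and hence $O(s)$ w.h.p.), or bin $i$ was non-empty throughout the entire window of length $s$. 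I would then show the latter event is itself unlikely (w.h.p. it does not happen for any bin) via a drift/potential argument on a single bin's queue: a bin behaves like a queue with arrival rate $\le \lambda < 1$ and service rate $1$, so the probability it stays non-empty for $s$ consecutive rounds decays geometrically in $s(1-\lambda)$ — choosing the constant in $s$ large enough makes this $\le n^{-c}$. A clean way to get this is to note that the total number of balls in bin $i$ over the window performs a random walk with negative drift $-(1-\lambda)$ per step while non-empty, so by a standard supermartingale/Chernoff-for-sums argument the probability it never returns to $0$ in $s$ steps is exponentially small in $s(1-\lambda)$.

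Then I would take a union bound: over the $n$ bins, and — crucially — over the starting point of the window. Actually the window start is fixed at $t-s$, so no union over time is needed for the "received at most $O(s)$ balls" part; but to handle the "was continuously non-empty" alternative cleanly I may instead argue: with probability $1 - n^{-\Omega(1)}$, \emph{every} bin is empty at \emph{some} point during the window $[t-s, t]$ (this is where the geometric decay in $s(1-\lambda)$ and the union over $n$ bins is spent), and \emph{simultaneously} every bin receives at most $s + O(s) = O(s)$ balls during the window (union over $n$ bins, Chernoff). On this good event, every bin's load at time $t$ is at most $O(s) = O\bigl(\frac{1}{1-\lambda}\log\frac{n}{1-\lambda}\bigr)$, since after its last empty round in the window it can only have accumulated the balls it received thereafter.

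The main obstacle I expect is making the "every bin is empty at some point in the window" step rigorous and with the right window length $s = \Theta\bigl(\frac{1}{1-\lambda}\log\frac{n}{1-\lambda}\bigr)$: one must handle the dependence between bins (the numbers of balls going to different bins in one round are negatively associated, not independent) and the fact that the per-round arrival count is itself random (governed by $\Binomial(n,\lambda)$), and one needs the geometric decay rate to be exactly $\Theta(1-\lambda)$ rather than something weaker, so that $s$ of order $\frac{1}{1-\lambda}\log\frac{n}{1-\lambda}$ suffices to beat the $n$ bins in the union bound and give a high-probability statement. Negative association lets the Chernoff/union-bound steps go through as if independent, so the real work is the single-queue excursion estimate with the sharp rate. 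Once that is in hand, the result follows by combining the two good events and reading off the bound.
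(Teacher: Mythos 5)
Your overall strategy --- reduce to a single-bin queue, locate the last time the bin was empty, and bound the arrivals since then --- is the right intuition, and it is essentially what Hajek's drift theorem (Theorem~\ref{thm:Hajek}), which the paper invokes, formalizes. But the specific execution has a circularity that breaks the argument. You fix a window $[t-s,t]$ of length $s=\Theta\bigl(\frac{1}{1-\lambda}\log\frac{n}{1-\lambda}\bigr)$ and split into ``bin $i$ was empty at some point in the window'' versus ``bin $i$ was non-empty throughout''. In the second case you can conclude nothing without a bound on $\load_i(t-s)$, which is exactly the quantity the theorem is about; and your claimed escape --- that w.h.p.\ \emph{every} bin is empty at some point in the window, because an excursion of a negatively drifting walk has length geometric in $s(1-\lambda)$ --- implicitly assumes the walk starts the window at (or near) zero. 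Worse, the claim itself is false for $\lambda$ close to $1$: a bin with the typical load $\Theta\bigl(\frac{1}{1-\lambda}\bigr)$ (cf.\ Lemma~\ref{lem:one_choice:expected_bin_load}) drains at rate only $1-\lambda$ per round, so it needs order $\frac{1}{(1-\lambda)^2}$ rounds to empty, which exceeds your window length once $\frac{1}{1-\lambda}\gg\log\frac{n}{1-\lambda}$ (e.g.\ $\lambda=1-1/n$). In precisely the regime where the theorem is interesting, most bins never empty during your window, and the dichotomy collapses.

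The fix is to not fix a single window length. Since the system starts empty, let $\Delta$ be the time since bin $i$ was last empty; then $\load_i(t)=B_i(t-\Delta,t]-\Delta$ exactly, where $B_i(t-\Delta,t]\sim\Binomial(\Delta n,\lambda/n)$ is the number of balls bin $i$ receives in that interval, and hence $\Pr{\load_i(t)\geq L}\leq\sum_{\Delta\geq0}\Pr{B_i(t-\Delta,t]\geq L+\Delta}$. Summing the Chernoff bounds over all $\Delta$ (the required deviation is $L+(1-\lambda)\Delta$ above the mean $\lambda\Delta$) gives a tail of the form $\frac{c}{(1-\lambda)^2}\cdot e^{-(1-\lambda)L/c}$, after which the union bound over the $n$ bins goes through as you describe. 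This is precisely the bound the paper obtains in packaged form by verifying the Majorization and Negative Bias conditions of Theorem~\ref{thm:Hajek} for the single-bin load process.
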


Note that for high arrival rates of the form $\lambda(n)=1-\varepsilon(n)$, the bound given in Theorem~\ref{thm:one_choice:maxload} is inversely proportional to $\varepsilon(n)$.
For example, for $\varepsilon(n)=1/n$ the maximal load is $\LDAUOmicron{n\log n}$.
Theorem~\ref{thm:one_choice:lowerbound} shows that this dependence is unavoidable: the bound given in Theorem~\ref{thm:one_choice:maxload} is tight for large values of $\lambda$.
In Section~\ref{sec:two_choice}, we will see that the 2-choice process features an exponentially better behaviour for large $\lambda$.
\begin{theorem}\label{thm:one_choice:lowerbound}
Let $\lambda=\lambda(n)\geq0.5$ and define $t\coloneqq9\lambda\log\left(n\right)/(64(1-\lambda)^2)$.
With probability $1-\LDAUomicron{1}$ there is a bin $i$ in step $t$ with load $\LDAUOmega[small]{\frac{1}{1-\lambda}\cdot\log n}$.
\end{theorem}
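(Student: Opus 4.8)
The plan is to reduce to a single fixed bin and then boost to \enquote{some bin} using that distinct bins behave almost independently. Fix a labelled bin $b$ and let $W_b$ be the total number of balls sent to $b$ during rounds $1,\dots,t$. Since the system starts empty and $b$ deletes at most one ball per round, its load at the end of round $t$ is at least $W_b-t$. In the $1$-choice process every generator, in every round, sends a ball to $b$ with probability exactly $\lambda/n$, independently over the $nt$ generator-round pairs, so $W_b\sim\Binomial(nt,\lambda/n)$ with mean $\mu\coloneqq\lambda t$. It therefore suffices to prove, for $h\coloneqq c\cdot\frac{\log n}{1-\lambda}$ with $c>0$ a small absolute constant: (i) $\Pr{W_b\ge h+t}\ge p_0$ for some $p_0=p_0(n)$ with $np_0\to\infty$; and (ii) $W_1,\dots,W_n$ are negatively associated. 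Given these, using $\{L_b<h\}\subseteq\{W_b<h+t\}$ and then (ii), the event that \emph{no} bin has load $\ge h$ after round $t$ has probability at most $\Pr{\bigcap_b\{W_b<h+t\}}\le(1-p_0)^n\le e^{-np_0}=\LDAUomicron{1}$.

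For (i), write $h+t=\mu+a$ with $a\coloneqq h+(1-\lambda)t>0$, so we need the upper tail of a binomial at $a/\sqrt{\mu}$ standard deviations above the mean. The particular value $t=9\lambda\log(n)/\left(64(1-\lambda)^2\right)$ is chosen precisely so that the $\lambda$-dependence cancels: a direct computation gives $a=\frac{\log n}{1-\lambda}\left(c+\frac{9\lambda}{64}\right)$ and
\[
  \frac{a^2}{2\mu}=\frac{64}{18}\left(\frac{c}{\lambda}+\frac{9}{64}\right)^{2}\log n\le\frac{64}{18}\left(2c+\frac{9}{64}\right)^{2}\log n=:(1-\delta)\log n ,
\]
where we used $1/2\le\lambda<1$, and $\delta>0$ is an absolute constant once $c$ is small enough (e.g.\ $c=1/8$ gives $\delta>0.4$). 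Since $\lambda/n\le1/4$, a reverse Chernoff bound (Slud's inequality) gives $\Pr{W_b\ge\mu+a}\ge\bar\Phi\!\left(a/\sqrt{\mu(1-\lambda/n)}\right)$, where $\bar\Phi$ is the upper tail of the standard normal; as $a/\sqrt{\mu}=\LDAUTheta[small]{\sqrt{\log n}}\to\infty$, the Mills-ratio bound $\bar\Phi(x)\ge\frac{1}{2\sqrt{2\pi}\,x}e^{-x^{2}/2}$ (valid for $x\ge\sqrt{2}$) yields $\Pr{W_b\ge h+t}\ge\frac{c'}{\sqrt{\log n}}\,n^{-(1-\delta')}$ for absolute constants $c',\delta'>0$. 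Hence $np_0=\LDAUOmega[small]{n^{\delta'}/\sqrt{\log n}}\to\infty$, uniformly over $\lambda\in\intco{1/2,1}$.

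For (ii), observe that the contribution of each generator-round pair to the vector $(W_1,\dots,W_n)$ is a random $0/1$ unit vector (the generated ball, if any, lands in exactly one bin), whose coordinates are negatively associated; since the $nt$ generator-round pairs are mutually independent, by the standard closure properties of negative association $(W_1,\dots,W_n)$ is negatively associated as well. In particular the (decreasing) events $\{W_b<h+t\}$ satisfy $\Pr{\bigcap_b\{W_b<h+t\}}\le\prod_b\Pr{W_b<h+t}\le(1-p_0)^n$, as used above. (Equivalently, one may run the second moment method on $Y\coloneqq$ the number of bins $b$ with $W_b\ge h+t$: $\ex{Y}=np_0\to\infty$ and, by pairwise negative correlation, $\var{Y}\le\ex{Y}$, so $\Pr{Y=0}\le1/\ex{Y}\to0$ by Chebyshev.) Either way, with probability $1-\LDAUomicron{1}$ some bin ends round $t$ with load at least $h=\LDAUOmega[small]{\frac{1}{1-\lambda}\log n}$, which is the claim.

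The main obstacle is the binomial anti-concentration estimate of step (i): the relevant deviation is on the order of $\sqrt{\log n}$ standard deviations, exactly the regime where one must control the exponent precisely and --- crucially --- \emph{uniformly} in $\lambda$. The chosen $t$ is what makes this exponent $(1-\delta)\log n$ with $\delta$ bounded away from $0$ for all $\lambda\in\intco{1/2,1}$, and it is this uniformity that forces $c$ to be a small constant, so that only an $\LDAUOmega{\cdot}$-bound is obtained (the worst case being $\lambda\to1/2$). One must also use an anti-concentration bound carrying the correct polynomial prefactor, which is of order $\sqrt{\mu}/a=\LDAUTheta[small]{1/\sqrt{\log n}}$: a crude entropy estimate on $\binom{nt}{\cdot}$, which loses a factor $\approx 1/(nt)$, would be useless when $\lambda$ is so close to $1$ that $t$ is polynomially (or more) large in $n$. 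The negative-association step is routine.
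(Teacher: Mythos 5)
Your proof is correct, and it takes a genuinely different route from the paper's. The paper treats the $m=\lambda tn$ balls generated during the first $t$ rounds as a static balls-into-bins instance and invokes the Raab--Steger maximum-load lower bound (Theorem~\ref{kuchen}, Case~3) as a black box: some bin receives $\lambda t+\LDAUOmega[small]{\sqrt{\lambda t\log n}}$ balls, and subtracting the at most $t$ deletions leaves $\LDAUOmega[small]{\frac{\lambda}{1-\lambda}\log n}$, precisely because $t$ is tuned so that the $\sqrt{\cdot}$ fluctuation beats the net drift $-(1-\lambda)t$. You instead prove the required anti-concentration from scratch: Slud's inequality applied to the single binomial $W_b\sim\Binomial(nt,\lambda/n)$ at $\LDAUTheta[small]{\sqrt{\log n}}$ standard deviations gives a per-bin success probability $p_0=\LDAUOmega[small]{n^{-1+\delta'}/\sqrt{\log n}}$, and negative association of the occupancy vector (or, equivalently, the second-moment method with pairwise negative correlation) boosts this to \emph{some} bin; the reduction $\load_b(t)\geq W_b-t$ is the same as the paper's. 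Your exponent computation checks out, and the worst case $\lambda\to1/2$ is correctly identified. What your version buys: it is self-contained, and it covers the whole range $\lambda\in\intco{1/2,1}$ uniformly, whereas the paper's proof explicitly assumes $\lambda nt\leq n(\log n)^{\LDAUOmicron{1}}$ so that Case~3 of Theorem~\ref{kuchen} applies --- i.e., it only argues the case $1-\lambda\geq1/\polylog(n)$, and for larger $\lambda$ one would have to switch to Case~4. What the paper's version buys is brevity, given the black box. Two cosmetic points: round $h+t$ up to an integer before applying Slud's inequality, and note that your bound $1-e^{-np_0}$ actually delivers much more than the claimed $1-\LDAUomicron{1}$.
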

The proofs of these results can be found in the following subsections. We first prove a bound on the maximum load (Theorem~\ref{thm:one_choice:maxload}), afterwards we prove stability of the system (Theorem~\ref{thm:one_choice:stability}), and finally we prove the lower bound (Theorem~\ref{thm:one_choice:lowerbound}).

\subsection{Maximum Load – Proof of Theorem~\ref{thm:one_choice:maxload}}
\begin{proof}[Proof of Theorem~\ref{thm:one_choice:maxload} (Maximum Load)]
We prove Theorem~\ref{thm:one_choice:maxload} using a (slightly simplified) drift theorem from~\textcite{H82} (cf.~Theorem~\ref{thm:Hajek} in Appendix~\ref{app:auxiliary}).
Remember that, as mentioned in Section~\ref{sec:model}, our process is a Markov chain, such that we need to condition only on the previous state (instead of the full filtration from Theorem~\eqref{thm:Hajek}).
Our goal is to bound the load of a fixed bin $i$ at time $t$ using Theorem~\ref{thm:Hajek} and, subsequently, to use this with a union bound to bound the maximum load over all bins.
To apply Theorem~\ref{thm:Hajek}, we have to prove that the maximum load difference of bin $i$ between two rounds is is exponentially bounded (Majorization) and that, given a high enough load, the system tends to loose load (Negative Bias).
We start with the majorization.
The load difference $\abs{\load_i(t+1)-\load_i(t)}$ is bounded by $\max(1,B_i(t))\leq1+B_i(t)$, where $B_i(t)$ is the number of tokens resource $i$ receives during round $t+1$.
In particular, we have $(\abs{\load_i(t+1)-\load_i(t)}\,|\,\bm{\load}(t))\prec1+B_i(t)$.
Note that $B_i(t)$ is binomially distributed with parameters $n$ and $\lambda/n$ (each of the $n$ balls has probability of $\lambda\cdot1/n$ to end up in $i$).
Using standard inequalities we bound
\begin{equation}
\Pr{B_i(t)=k}\leq\binom{n}{k}\cdot\left(\frac{\lambda}{n}\right)^k\leq\left(\frac{e\cdot n}{k}\right)^k\cdot\left(\frac{1}{n}\right)^k=\frac{e^k}{k^k}
\end{equation}
and calculate
\begin{equation}
     \Ex{e^{B_i(t)+1}}
=    e\cdot\sum_{k=0}^ne^k\cdot\frac{e^k}{k^k}
\leq e\cdot\sum_{k=0}^{\ceil{e^3-1}}\frac{e^{2k}}{k^k}+e\cdot\sum_{k=e^3}^{\infty}\frac{e^{2k}}{k^k}
\leq \LDAUTheta{1}+\sum_{k=1}^{\infty}e^{-k}
=    \LDAUTheta{1}
.
\end{equation}
This shows that the Majorization condition from Theorem~\ref{thm:Hajek} holds (with $\lambda'=1$ and $D=\LDAUTheta{1}$).
To see that the Negative Bias condition is also given, note that if bin $i$ has non-zero load, it is guaranteed to delete one ball and receives in expectation $n\cdot\lambda/n=\lambda$ balls.
We get $\Ex{\load_i(t+1)-\load_i(t)|\load_i(t)>0}\leq\lambda-1<0$, establishing the Negative Bias condition (with $\varepsilon_0=1-\lambda$).
We finally can apply Theorem~\ref{thm:Hajek} with $\eta\coloneqq\min(1,(1-\lambda)/2D,1/(2-2\lambda))=(1-\lambda)/(2D)$ and get for $b\geq0$
\begin{equation}\label{eq:one_choice:maxload:eq1}
     \Pr{\load_i(t)\geq b}
\leq e^{-b\cdot\eta}+\frac{2D}{\eta\cdot(1-\lambda)}\cdot e^{\eta\cdot(1-b)}
\leq \frac{2\cdot(2D)^2}{(1-\lambda)^2}\cdot e^{\frac{(1-\lambda)\cdot(1-b)}{2D}}
=    \frac{c}{(1-\lambda)^2}\cdot e^{-\frac{b\cdot(1-\lambda)}{c}}
,
\end{equation}
where $c$ denotes a suitable constant.
Applying a union bound to all $n$ bins and choosing $b\coloneqq\frac{c}{1-\lambda}\cdot\ln\bigl(\frac{c\cdot n^{a+1}}{(1-\lambda)^2}\bigr)$ yields $\Pr{\max_{i\in\intcc{n}}\load_i(t)\geq b}\leq n^{-a}$.
The theorem's statement now follows from
\begin{equation}
     b
=    \frac{c}{1-\lambda}\cdot\ln\left(\frac{c\cdot n^{a+1}}{(1-\lambda)^2}\right)
\leq \frac{c\cdot(a+1)+1}{1-\lambda}\cdot\ln\left(\frac{n}{1-\lambda}\right)
=    \LDAUOmicron{\frac{1}{1-\lambda}\cdot\ln\left(\frac{n}{1-\lambda}\right)}
.
\end{equation}
\end{proof}

\subsection{Stability – Proof of Theorem~\ref{thm:one_choice:stability} }

In the following, we provide an auxiliary lemma that will prove useful to derive the stability of the 1-choice process.
\begin{lemma}\label{lem:one_choice:expected_bin_load}
Let $\lambda=\lambda(n)<1$.
Fix an arbitrary round $t$ of the 1-choice process and a bin $i$.
There is a constant $c>0$ such that the expected load of bin $i$ is bounded by $\frac{6c}{1-\lambda}\cdot\ln\bigl(\frac{e\cdot c}{1-\lambda}\bigr)$.
\end{lemma}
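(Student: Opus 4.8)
The plan is to obtain the bound by integrating the tail estimate on $\load_i(t)$ that was already established inside the proof of Theorem~\ref{thm:one_choice:maxload}. Concretely, inequality~\eqref{eq:one_choice:maxload:eq1} shows that there is a constant $c>0$ — which we may additionally assume satisfies $c\geq1$ — such that for every real $b\geq0$
\[
\Pr{\load_i(t)\geq b}\leq\frac{c}{(1-\lambda)^2}\cdot e^{-\frac{b(1-\lambda)}{c}}.
\]
Since $\load_i(t)$ takes values in $\N$, we have $\Ex{\load_i(t)}=\sum_{b\geq1}\Pr{\load_i(t)\geq b}$. I would split this sum at a threshold $b_0$, bounding the first $\lceil b_0\rceil$ terms trivially by $1$ each and bounding the remaining tail by the geometric series $\frac{c}{(1-\lambda)^2}\sum_{b\geq b_0}e^{-b(1-\lambda)/c}$.

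The geometric series has ratio $q\coloneqq e^{-(1-\lambda)/c}<1$; since $(1-\lambda)/c\in(0,1]$ (using $c\geq1$), the elementary inequality $1-e^{-x}\geq x/2$ on $[0,1]$ yields $1-q\geq(1-\lambda)/(2c)$, so the tail is at most $\frac{c}{(1-\lambda)^2}\cdot\frac{2c}{1-\lambda}\cdot e^{-b_0(1-\lambda)/c}=\frac{2c^2}{(1-\lambda)^3}\,e^{-b_0(1-\lambda)/c}$. Choosing $b_0\coloneqq\frac{c}{1-\lambda}\cdot\ln\!\bigl(\frac{2c^2}{(1-\lambda)^3}\bigr)$ drives this tail down to at most $1$, hence $\Ex{\load_i(t)}\leq b_0+2$ (the extra constant absorbing the ceiling).

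It remains to simplify $b_0+2$ into the stated form. Expanding $\ln\!\bigl(\frac{2c^2}{(1-\lambda)^3}\bigr)=\ln2+2\ln c+3\ln\frac{1}{1-\lambda}$ and using $\frac{1}{1-\lambda}\geq1$ (so every logarithm is non-negative), $c\geq1$, and $\frac{c}{1-\lambda}\geq1$, one verifies termwise that $b_0+2\leq\frac{6c}{1-\lambda}\bigl(1+\ln c+\ln\frac{1}{1-\lambda}\bigr)=\frac{6c}{1-\lambda}\cdot\ln\!\bigl(\frac{e\cdot c}{1-\lambda}\bigr)$, which is exactly the bound claimed.

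I do not anticipate a real obstacle: the argument is a routine tail integration on top of Theorem~\ref{thm:one_choice:maxload}. The only point requiring some care is the bookkeeping of constants — ensuring that one and the same $c$ can serve both as the constant appearing in~\eqref{eq:one_choice:maxload:eq1} and in the final estimate (which is precisely why the statement carries the generous slack factor $6$ and the extra $e$ inside the logarithm), and choosing the threshold $b_0$ so that the leftover geometric tail is bounded by an absolute constant.
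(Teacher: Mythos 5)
Your proposal is correct and follows essentially the same route as the paper: both reuse the exponential tail bound~\eqref{eq:one_choice:maxload:eq1} from the proof of Theorem~\ref{thm:one_choice:maxload} and integrate it, the only (cosmetic) difference being that you sum $\Pr{\load_i(t)\geq b}$ directly via a geometric series with threshold $b_0$, whereas the paper groups the tail into blocks of length $\gamma$ and bounds each block. Your constant bookkeeping (taking $c\geq1$ without loss of generality, the estimate $1-e^{-x}\geq x/2$ on $\intcc{0,1}$, and the termwise comparison yielding the factor $6$ and the $e$ inside the logarithm) all checks out.
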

\begin{proof}
To get a bound on the expected load of bin $i$, note that the probability in Equation~\eqref{eq:one_choice:maxload:eq1} (see proof of Theorem~\ref{thm:one_choice:maxload}) is $1$ for $b\leq\gamma\coloneqq\frac{c}{1-\lambda}\cdot\ln\bigl(\frac{e\cdot c}{(1-\lambda)^2}\bigr)$.

Considering time windows of $\gamma$ rounds each, we calculate
\begin{equation}
\begin{aligned}
      \Ex{\load_i(t)}
&\leq \sum_{b=1}^{\gamma}b\cdot\Pr{\load_i(t)=b}+\sum_{k=1}^\infty \sum_{b=k\cdot \gamma}^{(k+1)\gamma}b\cdot\Pr{\load_i(t)=b}\\
&\leq \gamma+\sum_{k=1}^{\infty}(k+1)\cdot\gamma\cdot\Pr{\load_i(t)\geq k\cdot\gamma}
 \leq \gamma+\sum_{k=1}^{\infty}(k+1)\cdot\gamma\cdot e^{-k}\\
&\leq 3\gamma
 \leq \frac{6c}{1-\lambda}\cdot\ln\left(\frac{e\cdot c}{1-\lambda}\right)
.
\end{aligned}
\end{equation}
This finishes the proof.
\end{proof}

\begin{proof}[Proof of Theorem~\ref{thm:one_choice:stability} (Stability)]
We prove Theorem~\ref{thm:one_choice:stability} using a result from~\textcite{FMM95} (cf.~Theorem~\ref{thm:foster} in Appendix~\ref{app:auxiliary}).
Note that $\bm{\load}$ is a time-homogenous irreducible Markov chain with a countable state space.
For a configuration $\bm{x}$ we define the auxiliary potential $\Psi(\bm{x})\coloneqq\sum_{i=1}^nx_i$ as the total system load of configuration $\bm{x}$.
Consider the (finite) set $C\coloneqq\set{\bm{x}|\Psi(\bm{x})\leq n^4/(1-\lambda)^2}$ of all configurations with not too much load.
To prove positive recurrence, it remains to show that Condition~\ref{thm:foster:a} (expected potential drop if not in a high-load configuration) and Condition~\ref{thm:foster:b} (finite potential) of Theorem~\ref{thm:foster} hold.
In the following, let $\Delta\coloneqq\frac{n^3}{(1-\lambda)^2}$.

Let us start with Condition~\ref{thm:foster:a}.
So fix a round $t$ and let $\bm{x}=\bm{\load}(t)\not\in C$.
By definition of $C$, we have $\Psi(\bm{x})>n^4/(1-\lambda)^3$, such that there is at least one bin $i$ with load $x_i\geq\Psi(\bm{x})/n>n^3/(1-\lambda)^2$.
In particular, note that $x_i\geq\Delta$, such that during each of the next $\Delta$ rounds exactly one ball is deleted.
On the other hand, bin $i$ receives in expectation $\Delta\cdot\lambda n\cdot\frac{1}{n}=\lambda\Delta$ balls during the next $\Delta$ rounds.
We get $\Ex{\load_i(t+\Delta)-x_i|\bm{\load}(t)=\bm{x}}=\lambda\Delta-\Delta=-(1-\lambda)\cdot\Delta$.
For any bin $j\neq i$, we assume pessimistically that no ball is deleted.
Note that the expected load increase of each of these bins can be majorized by the load increase in an empty system running for $\Delta$ rounds.
Thus, we can use Lemma~\ref{lem:one_choice:expected_bin_load} to bound the expected load increase in each of these bins by $\frac{6c}{1-\lambda}\cdot\ln\bigl(\frac{e\cdot c}{1-\lambda}\bigr)\leq\frac{6e\cdot c^2}{(1-\lambda)^2}\leq\Delta/n^2$.
We get
\begin{equation*}
     \Ex{\Psi(\bm{X}(t+\Delta))|\bm{X}(t)=\bm{x}}
\leq -(1-\lambda)\cdot\Delta+(n-1)\cdot\frac{\Delta}{n^2}
=    -\Delta\cdot\left(1-\lambda-\frac{1}{n}\right)
\leq -\Delta\cdot\frac{1-\lambda}{2}
.
\end{equation*}
This proves Condition~\ref{thm:foster:a} of Theorem~\ref{thm:foster}.
For Condition~\ref{thm:foster:b}, assume $\bm{x}=\bm{\load}(t)\in C$.
We bounds the system load after $\Delta$ rounds trivially by
\begin{equation}
     \Ex{\Psi(\bm{\load}({t+\Delta}))|\bm{\load}(t)=\bm{x}}
\leq \Psi(\bm{x})+\Delta\cdot n
\leq \frac{n^4}{(1-\lambda)^2}+\Delta\cdot n
<    \infty
\end{equation}
(note that the finiteness in Theorem~\ref{thm:foster} is with respect to time, not $n$).
This finishes the proof.
\end{proof}

\subsection{Lower Bound on Maximum Load – Proof of Theorem~\ref{thm:one_choice:lowerbound} }

\begin{proof}[Proof of Theorem~\ref{thm:one_choice:lowerbound} (Lower Bound)]
To show this result we will use the bound of Theorem~\ref{kuchen} which lower bounds the the maximum number 
of balls a bin receives when $m$ balls are allocated into $m$ bins.
The idea of the proof is as follows.
We assume that we start at an empty system and apply Theorem~\ref{kuchen} on $m=\lambda t n$ many balls.
The theorem says that one of the bins is likely to get much more than $\lambda t$ many balls, which allows us 
to show that the load of this bin is large, even if the bin was able to delete a ball during each of the $t$ 
observed time steps.

Let $m(t')$ the the number of balls allocated during the first $t'$ steps and let $b_u(t')$ the number of these balls that are allocated to bin $u$.
Set $t=9\lambda \log\left(n\right)/(64(1-\lambda)^2)$, assume $\lambda>0.5$ and assume
$\lambda n t\le n\cdot (\log n)^c$ for a constant $c$.
Since the expected number of balls is
$\lambda n t\ge  n\log n$ we can use Chernoff bounds to show that w.h.p. at least $(1-\epsilon) \cdot m(t)$ balls are generated for very small $\epsilon$.

Then $$\Ex{m(t')}= \tfrac{9\lambda \log n}{64(1-\lambda)^2}\cdot \lambda n.$$ Using Chernoff's inequality we can show that w.h.p. $m(t)\ge (1-\epsilon) \cdot \frac{9\lambda \log n}{64(1-\lambda)^2}\cdot \lambda n$
for an arbitrary small constant $\epsilon$.
By Theorem~\ref{kuchen} (Case 3) with $\alpha=\sqrt{8/9}$ we get (w.h.p.)
\begin{align}\label{apfelkuchen}
b_u(t)\geq  (1-\epsilon) \cdot \frac{9\lambda \log n}{64(1-\lambda)^2}\cdot \lambda + 
\sqrt{(1-\epsilon) \cdot \frac{16\cdot 9\lambda (\log n)^2}{9 \cdot 64(1-\lambda)^2}\cdot \lambda}
.
\end{align}
We derive 
\begin{equation}\label{capful}
\begin{aligned}
\load_u(t) 
&\geq  (1-\epsilon)\cdot \frac{9}{64}\frac{\lambda^2\log n}{(1-\lambda)^2} + 
\sqrt{(1-\epsilon)\cdot \frac{16\cdot 9\lambda^2\cdot (\log n)^2}{9\cdot 64 (1-\lambda)^2}}   -   \frac{9\lambda \log n}{64(1-\lambda)^2}  \notag\\
&=  \lambda\cdot (1-\epsilon)\cdot \frac{9}{64}\frac{\lambda\log n}{(1-\lambda)^2} + 
\sqrt{\frac{(1-\epsilon)}{4 }} \cdot\frac{\lambda \log n}{(1-\lambda)}   -   \frac{9\lambda \log n}{64(1-\lambda)^2} 
\notag\\
&= \sqrt{\frac{(1-\epsilon)}{4 }} \cdot\frac{\lambda \log n}{(1-\lambda)} +  (\lambda\cdot (1-\epsilon)-1)\cdot
\frac{9 \lambda\log n}{64(1-\lambda)^2}  \notag\\
&\ge\sqrt{\frac{(1-\epsilon)}{4 }} \cdot\frac{\lambda \log n}{(1-\lambda)} +   ((1-\epsilon')-1)\cdot
\frac{9 \lambda\log n}{64(1-\lambda)^2}  \notag\\
&\ge   \sqrt{\frac{(1-\epsilon)}{4 }} \cdot\frac{\lambda \log n}{(1-\lambda)} 
- \epsilon' \cdot \frac{9}{64}\frac{\lambda\log n}{(1-\lambda)}  \notag\\
&=\Omega\left(\frac{\lambda \log n}{1-\lambda}\right)
.
\end{aligned}
\end{equation}

\end{proof}


\section{The 2-Choice Process}\label{sec:two_choice}
We continue with the study of the 2-choice process.
Here, new balls are distributed according to \greedy{2} (cf.~description in Section~\ref{sec:model}).
Our main results are the following theorems, which are equivalents to the corresponding theorems for the 1-choice process.
\begin{theorem}[Stability]\label{thm:two_choice:stability}
Let $\lambda=\lambda(n)\in\intco{1/4,1}$. 
The Markov chain $\bm{\load}$ of the 2-choice process is positive recurrent.
\end{theorem}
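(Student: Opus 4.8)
The plan is to invoke the Foster--Meyn--Tweedie criterion (Theorem~\ref{thm:foster}) in the same shape as the proof of Theorem~\ref{thm:one_choice:stability}: take the total--load potential $\Psi(\bm{x})\coloneqq\sum_{i=1}^{n}x_i$, the finite exceptional set $C\coloneqq\bigl\{\bm{x}:\Psi(\bm{x})\le n^4/(1-\lambda)^2\bigr\}$, and a horizon $\Delta\coloneqq n^3/(1-\lambda)^2$. Condition~\ref{thm:foster:b} is immediate, since at most $n$ balls enter per round: $\Ex{\Psi(\bm{\load}(t+\Delta))|\bm{\load}(t)=\bm{x}}\le\Psi(\bm{x})+\Delta n<\infty$ for $\bm{x}\in C$. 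Everything is in Condition~\ref{thm:foster:a}, i.e.\ in showing $\Ex{\Psi(\bm{\load}(t+\Delta))-\Psi(\bm{x})|\bm{\load}(t)=\bm{x}}<0$ for $\bm{x}\notin C$. The handle I would use is the identity that the change of $\Psi$ over the window equals the number of balls generated minus the number of deletions, where the latter equals $\sum_{s=1}^{\Delta}\bigl(\#\text{non-empty bins at round }t+s\bigr)$; since the expected number of generated balls is exactly $\lambda n$ per round, it suffices to show that the expected total number of non-empty-bin-rounds over the window exceeds $\lambda n\Delta$.

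For $\bm{x}\notin C$ one has $\varnothing(t)>\Delta$, so at least one bin carries load larger than $\Delta$ and therefore stays non-empty --- deleting a ball --- in each of the $\Delta$ rounds, contributing $\Delta$ to the non-empty-bin-round count for free (and under \greedy{2} such an overloaded bin is hit by a given ball with probability only $O(1/n^2)$, so it is essentially a pure drain). The substance is the remaining bins: I would show that, irrespective of the starting configuration, these bins reach within a short transient a quasi-stationary regime in which their expected number of non-empty-bin-rounds over the window is at least $\lambda n\Delta$ minus a transient load cost of only $O\bigl(\poly(n)/(1-\lambda)\bigr)=o(\Delta)$ --- intuitively because a batch of about $\lambda n$ balls placed by \greedy{2} is steered onto least-loaded bins (never onto an overloaded one), so per round this part gains about $\lambda n$ balls and, once enough of its bins are occupied, loses about $\lambda n$ by deletion and settles near a total load $\Theta\bigl(n/(1-\lambda)\bigr)$ whatever its history. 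This is precisely the smoothing/balancing behaviour of \greedy{2} that underlies its maximum-load bound and the statement referenced as Lemma~\ref{lem:two_choice:smoothness}, and I would establish it through an exponential potential (see below). Granting it, the expected non-empty-bin-round count is at least $\lambda n\Delta+\Delta-o(\Delta)>\lambda n\Delta$, so $\Ex{\Psi(\bm{\load}(t+\Delta))-\Psi(\bm{x})|\bm{\load}(t)=\bm{x}}<0$; the rare ($n^{-\omega(1)}$) failure of the smoothing step adds only an extra $n^{-\omega(1)}\cdot\Delta n$ to this expectation, which is negligible. This is Condition~\ref{thm:foster:a}, and positive recurrence follows.

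The main obstacle is the one-batch drift estimate for \greedy{2} behind the smoothing statement, and it is here that the parallel, batched nature of the process bites and that the hypothesis $\lambda\in\intco{1/4,1}$ is used. One must control, for a single synchronous batch of a $\Binomial(n,\lambda)$-distributed number of balls placed by \greedy{2} into an \emph{arbitrary} configuration: (a) that the count received by any fixed bin is stochastically dominated by a geometrically-tailed variable, so that exponential moments stay finite and the majorization-type tail estimates (as in the proof of Theorem~\ref{thm:one_choice:maxload}) go through; (b) that bins sufficiently above the current average attract essentially no balls, so their load is non-increasing over such a window; and (c) that, aggregated, an exponential (or hyperbolic-cosine) potential of the deviations from the average --- for instance $\Gamma(\bm{x})\coloneqq\sum_{i=1}^{n}\bigl(e^{\alpha(x_i-\varnothing(t))}+e^{-\alpha(x_i-\varnothing(t))}\bigr)$ --- drops by a constant factor, towards an $O(n)$ additive floor, in expectation after one batch-and-deletion step, uniformly over configurations and over $\lambda\ge1/4$ (for smaller $\lambda$ the system is too sparse for this potential to be the right handle, which is the reason for the restriction). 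A secondary, purely mechanical nuisance is that the center $\varnothing(t)$ moves over time, so one either works with a moving-center potential or tracks $\sum_i e^{\alpha\load_i}$ alongside $\varnothing(t)$ and bounds their interaction; this bookkeeping, and the final assembly into the Foster drift above, is the short computation modeled on the proof of Theorem~\ref{thm:one_choice:stability}.
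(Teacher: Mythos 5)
There is a genuine gap in the step that carries all the weight. You run Foster's criterion on $\Psi$ alone over a window of length $\Delta=n^3/(1-\lambda)^2$, designate one overloaded bin as a pure drain, and assert that the remaining bins reach, \emph{irrespective of the starting configuration}, a quasi-stationary regime of total load $\LDAUOmicron{\poly(n)/(1-\lambda)}$ within the window, so that their deletion count is $\lambda n\Delta-\LDAUomicron{\Delta}$. That assertion is false as stated: for $\bm{x}\notin C$ you only know the \emph{total} load is large, and nothing prevents several (or all) other bins from carrying initial load far exceeding $\Delta$; such load cannot dissipate in $\Delta$ rounds, so the ``rest'' does not settle anywhere near $\Theta(n/(1-\lambda))$. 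The quantity you actually need is not the terminal load of the rest but an upper bound on its expected load \emph{increase} over the window, equivalently on the expected number of empty-bin-rounds; the system gains load exactly when bins are empty, and whether bins are empty in a high-load configuration is governed by smoothness, not by $\Psi$. Controlling that from an arbitrary (possibly very unsmooth) start is essentially the content of the paper's entire Section~\ref{sec:two_choice:maxload} (the adaptive bound and last-empty-round argument), and it is carried out there only w.h.p.\ for the process started empty --- it is not the ``short computation modeled on the proof of Theorem~\ref{thm:one_choice:stability}'' you describe. Your window argument therefore does not close.

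The fix is a different assembly of ingredients you already have. The paper takes a \emph{one-step} Foster drift for the combined potential $\Gamma(\bm{x})=\Phi(\bm{x})+\frac{n}{1-\lambda}\Psi(\bm{x})$, where $\Phi$ is exactly the hyperbolic-cosine potential you propose in your last paragraph. If $\Gamma$ is large, then either $\Phi\geq\Gamma/2$, and the multiplicative one-batch drop of $\Phi$ (Lemma~\ref{lem:two_choice:onestep_phi_bound}) overwhelms the bounded one-round increase $\frac{n}{1-\lambda}\cdot n$ of the weighted $\Psi$-term; or $\Phi<\Gamma/2$ and $\frac{n}{1-\lambda}\Psi\geq\Gamma/2$, in which case Observation~\ref{obs:two_choice:avg_distance} bounds the spread by $2\ln(\Gamma/2)/\alpha$ while the average load is at least $(1-\lambda)\Gamma/(2n^2)$, so \emph{every} bin is non-empty, all $n$ bins delete, and the weighted $\Psi$-term drops by $(1-\lambda)n\cdot\frac{n}{1-\lambda}=n^2$, dominating the bounded increase of $\Phi$. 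No multi-round window, no transient analysis, and no claim about arbitrary sub-configurations equilibrating is needed. Your identification of the exponential potential and of the batch-drift estimate as the main obstacle is correct --- that is indeed the hard part of the paper --- but you should feed it into a linear combination with $\Psi$ rather than into a window argument on $\Psi$ alone.
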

\begin{theorem}[Maximum Load]\label{thm:two_choice:maxload}
Let $\lambda=\lambda(n)\in\intco{1/4,1}$.
Fix an arbitrary round $t$ of the 2-choice process.
The maximum load of all bins is (w.h.p.) bounded by $\LDAUOmicron[small]{\log\frac{n}{1-\lambda}}$.
\end{theorem}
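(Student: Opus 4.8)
The plan is to adapt the exponential‑potential technique of Talwar--Wieder (the \enquote{TW14} reference) to our batched, self‑stabilizing setting. For a constant $\alpha>0$ to be fixed later, define the \emph{overload} and \emph{underload} potentials
\[
\Phi^+(t)\coloneqq\sum_{i=1}^n e^{\alpha(\load_i(t)-\varnothing(t))}
\qquad\text{and}\qquad
\Phi^-(t)\coloneqq\sum_{i=1}^n e^{-\alpha(\load_i(t)-\varnothing(t))},
\]
and set $\Gamma(t)\coloneqq\Phi^+(t)+\Phi^-(t)$. Since $\Phi^+(t)\ge e^{\alpha(\max_i\load_i(t)-\varnothing(t))}$ and $\Phi^-(t)\ge e^{\alpha(\varnothing(t)-\min_i\load_i(t))}$, any bound of the form $\Gamma(t)=n^{\LDAUOmicron{1}}$ immediately yields the \emph{smoothness} statement $\max_i\load_i(t)\le\varnothing(t)+\LDAUOmicron{\log n}$ and $\min_i\load_i(t)\ge\varnothing(t)-\LDAUOmicron{\log n}$. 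The proof then splits into two independent halves: (i) a drift argument showing $\Gamma(t)=n^{\LDAUOmicron{1}}$ w.h.p.\ at every round, and (ii) a bound $\varnothing(t)=\LDAUOmicron[small]{\log\tfrac{n}{1-\lambda}}$ w.h.p.; together with smoothness these give the claimed $\max_i\load_i(t)=\LDAUOmicron[small]{\log\tfrac{n}{1-\lambda}}$. (For $\lambda$ bounded away from $1$ the statement already follows from the $2$-choice analogue of Theorem~\ref{thm:one_choice:maxload}, so the interesting regime is $\lambda\to1$.)

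For (i) the heart of the matter is the one‑round drift inequality $\Ex{\Gamma(t+1) | \bm{\load}(t)}\le(1-c_1/n)\,\Gamma(t)+c_2$ for absolute constants $c_1,c_2>0$ (using $\lambda\ge1/4$). I would decompose a round into the deletion phase followed by the batched \greedy{2} insertion phase and track $\Phi^\pm$ through each. Deletion is benign: when no bin is empty it merely shifts every load and $\varnothing$ down by one, leaving $\Phi^\pm$ unchanged; empty bins only help $\Phi^+$, and their effect on $\Phi^-$ is a controlled $e^{\alpha\eta(t)}$ factor that the insertion phase will dominate. The delicate phase is the insertion of $\Binomial(n,\lambda)$ balls. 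Here I would process balls one at a time and use the classical two‑choice estimate that, in a configuration with sorted loads, a ball lands on the bin of rank $j$ (from the top) with probability $\tfrac{2j-1}{n^2}$, so it hits a below‑median bin with probability at least $3/4$. Summing the per‑ball changes over the (at most $n$) balls, $\Phi^+$ grows by only a $1+\LDAUOmicron{1/n}$ factor per ball while the balls landing below the median shrink the \enquote{upper part} of $\Phi^+$ by a $\Theta(\alpha/n)$‑fraction — this is precisely the power of two choices and is where the TW‑style layered/telescoping estimate over potential levels is needed; $\Phi^-$ is handled symmetrically, now carefully accounting for the $+1/n$ drift of $\varnothing$ per arriving ball. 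Choosing $\alpha$ a small enough absolute constant makes the net contraction $1-c_1/n$ win. Since the process starts empty ($\Gamma(0)=2n$) and the drift is a \emph{uniform} contraction, iterating gives $\Ex{\Gamma(t)}\le e^{-c_1t/n}\cdot2n+\LDAUOmicron{n}=\LDAUOmicron{n}$ for \emph{every} $t$, whence Markov's inequality yields $\Gamma(t)=n^{\LDAUOmicron{1}}$ w.h.p.

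For (ii) I would observe that once $\varnothing(t)$ exceeds a threshold $\tau\coloneqq\tfrac{c_3}{\alpha}\log\tfrac{n}{1-\lambda}$, the bound $\Phi^-(t)\ge\eta(t)\,n\,e^{\alpha\varnothing(t)}$ (the $\eta(t)n$ empty bins each contribute $1$ to $\sum_i e^{-\alpha\load_i}$) forces $\eta(t)\le\tfrac{\Gamma(t)}{n\,e^{\alpha\varnothing(t)}}\le\tfrac{1-\lambda}{2}$ on the (w.h.p.) event $\Gamma(t)=n^{\LDAUOmicron{1}}$. Hence \emph{every} bin is non‑empty, so the total load $S(t)\coloneqq n\varnothing(t)$ has drift $\Ex{S(t+1)-S(t) | \bm{\load}(t)}=\lambda n-\nu(t)n\le-\tfrac{1-\lambda}{2}\,n$; that is, $S(t)$ satisfies a negative‑drift condition above level $\tau n$ with increments of absolute value at most $n$. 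Rescaling by $1/n$ so the increments have a bounded exponential moment, I would feed this into Hajek's drift theorem (Theorem~\ref{thm:Hajek}) exactly as in the proof of Theorem~\ref{thm:one_choice:maxload}, again exploiting the empty start to avoid any issue with a large value in the distant past. This gives $S(t)\le\tau n+\LDAUOmicron{n\log n}$, i.e.\ $\varnothing(t)=\LDAUOmicron[small]{\log\tfrac{n}{1-\lambda}}$, w.h.p.; combining with the smoothness bound from (i) finishes the proof.

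The main obstacle is the insertion‑phase drift in (i): making the two‑choice advantage quantitative over a whole batch of $\Theta(n)$ balls rather than a single ball, while simultaneously coping with (a) the reference point $\varnothing(t)$ moving during the round and (b) the empty bins, which couple the deletion and insertion phases (and which make $\Phi^-$, not $\Phi^+$, the more troublesome term when $\lambda$ is close to $1$). A secondary technical point is the bookkeeping in (ii): the event $\Gamma(t)=n^{\LDAUOmicron{1}}$ is only guaranteed per round, so the drift condition for $S$ must be combined with a union bound over a polynomial window together with the contraction/positive‑recurrence property in order to control $S$ at a round that may be super‑polynomially far from the start.
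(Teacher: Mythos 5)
Your part (i) is essentially the paper's smoothness analysis (Lemmas~\ref{lem:two_choice:smoothness} and~\ref{lem:two_choice:onestep_phi_bound}): the same potential $\Phi_++\Phi_-$, a one-round contraction plus $\LDAUOmicron{n}$ additive term, and Markov's inequality. The paper computes the drift over the whole $\Binomial(n,\lambda)$ batch in one go via the binomial theorem rather than ball-by-ball, but that is a difference of bookkeeping, not of substance, and your target inequality $\Ex{\Gamma(t+1)|\bm{\load}(t)}\leq(1-\Theta(1))\Gamma(t)+\LDAUOmicron{n}$ is exactly what is needed. Your reduction in part (ii) — smoothness plus $\varnothing(t)\geq\tau$ forces $\eta(t)\leq(1-\lambda)/2$, hence negative drift of the total load — is also a correct observation and mirrors reasoning the paper uses in its stability proof.

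The genuine gap is how you convert that conditional drift into a w.h.p.\ bound on $\varnothing(t)$ at an arbitrary, possibly super-polynomial, round $t$. Hajek's theorem (Theorem~\ref{thm:Hajek}) requires the Negative Bias condition to hold for \emph{every} state with $Y(t)>a$, but your drift for $S(t)=n\varnothing(t)$ is only valid on the smoothness event: a state with huge total load concentrated in few bins and many empty bins has \emph{positive} drift ($\lambda n$ arrivals versus only $\nu(t)n$ deletions), so the hypothesis of the theorem simply fails for $S$. The paper flags exactly this ("the number of balls only decreases when the total load is high \emph{and} the load distance to the average is low") and also warns, in the footnote to Section~\ref{sec:two_choice:stability}, that feeding the combined potential $\Gamma$ into Hajek would require exponentially sharper bounds on $\Phi$. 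Your proposed fix — "a union bound over a polynomial window together with the contraction/positive-recurrence property" — does not close this: positive recurrence gives no quantitative high-probability control at super-polynomial times, and a union bound over a window requires controlling the configuration at the window's start, which is the same problem one step earlier. The paper's actual mechanism is different and is the heart of the proof: it establishes the \emph{adaptive} bound $\Phi(\bm{\load}(t'))\leq\abs{t-t'}^2n^2$ simultaneously for all $t'<t$ (the failure probabilities form a convergent series, so the union bound works over the entire, arbitrarily long, past), then looks at the \emph{last} round $t''<t$ with an empty bin — the only kind of round in which $\Psi$ can grow — bounds $\Psi(\bm{\load}(t''))$ by the logarithmic-in-$(t-t'')$ smoothness bound, and plays that off against the linear-in-$(t-t'')$ decrease of $\Psi$ afterwards to force $t-t''=\LDAUOmicron[small]{\frac{1}{(1-\lambda)^2}\log\frac{n}{1-\lambda}}$. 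Without an argument of this type (or some other device valid uniformly over the whole history), your step (ii) does not go through.
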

Note that Theorem~\ref{thm:two_choice:maxload} implies a much better behaved system than we saw in Theorem~\ref{thm:one_choice:maxload} for the 1-choice process.
In particular, it allows for an exponentially higher arrival rate:
for $\lambda(n)=1-1/\poly(n)$ the 2-choice process maintains a maximal load of $\LDAUOmicron{\log n}$.
In contrast, for the same arrival rate the 1-choice process results in a system with maximal load $\LDAUOmega{\poly(n)}$.

Our analysis of the 2-choice process relies to a large part on a good bound on the \emph{smoothness} (the maximum load difference between any two bins).
This is stated in the following lemma.
This result is of independent interest, showing that even if the arrival rate is $1-e^{-n}$, where we get a polynomial system load, the maximum load difference is still logarithmic.
\begin{lemma}[Smoothness]\label{lem:two_choice:smoothnessWHP}
Let $\lambda=\lambda(n)\in\intcc{1/4,1}$.
Fix an arbitrary round $t$ of the 2-choice process.
The load difference of all bins is (w.h.p.) bounded by $\LDAUOmicron{\ln n}$.
\end{lemma}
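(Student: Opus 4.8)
The plan is to decouple the smoothness question from the absolute-load question by working with a suitable potential function that measures the spread of the load vector around its average. Concretely, I would study an exponential potential of the form
\[
  \Phi(t) \coloneqq \sum_{i=1}^n \left( e^{\alpha\,(\load_i(t)-\varnothing(t))} + e^{-\alpha\,(\load_i(t)-\varnothing(t))} \right)
\]
for a small constant $\alpha>0$, following the approach of \textcite{TW14} that the paper already advertises it will "employ and generalize" for the 2-choice analysis. The key structural fact about \greedy{2} is that a bin of above-average load is hit by a new ball with probability roughly proportional to its \emph{rank} (its relative position in the sorted order), not merely to $1/n$: if bin $i$ is the $k$-th fullest, it receives a new ball in a given generation with probability about $\lambda\,(2k-1)/n^2$. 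This superlinear bias toward the low-load end is exactly what drives the potential down. The one-ball-per-nonempty-bin deletion step, by contrast, is essentially neutral with respect to the spread (it shifts every nonempty bin down by one, and empty bins sit at the bottom), so it contributes a lower-order correction that must be controlled but does not fight the drift.

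The core step is a \emph{drift inequality}: I would show that there are constants $c_1, c_2 > 0$ such that, conditioned on $\bm{\load}(t)$,
\[
  \Ex{\Phi(t+1) \mid \bm{\load}(t)} \le \left(1 - \tfrac{c_1}{n}\right)\Phi(t) + c_2 n,
\]
valid for \emph{all} $\lambda \in [1/4, 1]$ — the lower bound on $\lambda$ is used only to keep various constants uniform, while the upper end $\lambda \to 1$ is harmless because the bias argument never divides by $1-\lambda$. This is proved by a first-order (Taylor) expansion of $e^{\pm\alpha(\cdot)}$ in the per-round increments, separating the contribution of the batch of new balls (which, by the rank bias, produces a strictly negative first-order term on the overloaded side and is absorbed on the underloaded side) from the deletion step and from the second-order error terms, the latter bounded because increments are $O(1)$ in expectation with exponential tails (as already shown in the majorization computation for Theorem~\ref{thm:one_choice:maxload}). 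Iterating the drift inequality gives $\Ex{\Phi(t)} \le c_2 n^2 / c_1 + (\text{transient term})$ for every $t$, hence $\Ex{\Phi(t)} = O(n^2)$; one subtlety is that the batch is distributed in parallel rather than sequentially, so I would either analyze the whole batch at once using negative-association of the per-bin arrival counts, or bound the parallel step by $n$ sequential single-ball insertions and pay only constant factors. Then Markov's inequality gives $\Phi(t) \le n^{O(1)}$ w.h.p., and since $\Phi(t) \ge e^{\alpha(\load_1(t)-\varnothing(t))}$ and $\Phi(t) \ge e^{\alpha(\varnothing(t)-\load_n(t))}$ term by term, this forces $\load_1(t) - \load_n(t) = O(\log n)$ w.h.p., which is the claim.

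The main obstacle I anticipate is establishing the negative-drift term cleanly in the presence of the deletion step: the deletion is what couples this "infinite" process to the classical finite-$m$ batched analysis of \textcite{BCEFN12}, and it behaves differently depending on how many bins are empty ($\eta(t)$ can be anything from $0$ to nearly $1$). When many bins are empty the average $\varnothing(t)$ barely moves under deletion while the nonempty bins all drop by one, which actually \emph{helps} concentrate the upper tail but can inflate the lower tail $\varnothing(t) - \load_n(t)$; handling both tails simultaneously with a single symmetric potential is the delicate part, and it is likely why the authors split off a separate w.h.p.\ smoothness statement here and will presumably derive the absolute bound of Theorem~\ref{thm:two_choice:maxload} by combining this $O(\log n)$ spread with a separate control of $\varnothing(t)$ (which, unlike in the 1-choice case, can be shown to be $O(\log\frac{n}{1-\lambda})$). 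A secondary technical point is the choice of $\alpha$: it must be small enough that the second-order terms in the Taylor expansion do not overwhelm the linear negative-drift term, but this is a routine constant optimization once the first-order term is pinned down.
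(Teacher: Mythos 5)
Your overall strategy is exactly the paper's: the same symmetric exponential potential $\Phi$ around the average, a one-step drift inequality of the form $\Ex{\Phi(t+1)\mid\bm{\load}(t)}\leq\gamma\Phi(t)+\LDAUOmicron{n}$, iteration to get $\Ex{\Phi(t)}=\poly(n)$ for every $t$, and then Markov plus the term-by-term lower bound $\Phi\geq e^{\alpha\abs{\load_i-\varnothing}}$ to conclude an $\LDAUOmicron{\log n}$ spread. (The paper in fact gets a constant-factor contraction $\gamma=1-\varepsilon\alpha\lambda/4$ per round rather than your $1-c_1/n$, since a whole batch of $\approx\lambda n$ balls is placed each round, but your weaker form still yields $\Ex{\Phi(t)}=\LDAUOmicron{n^2}$, which suffices.)

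The genuine gap is in your justification of the drift inequality itself. You assert that the rank bias "produces a strictly negative first-order term on the overloaded side and is absorbed on the underloaded side," but this is only true for reasonably balanced configurations. The negative drift for $\Phi_+$ comes from the fact that $\sum_{i\leq 3n/4}p_i\Phi_{i,+}\leq(1-2\varepsilon)\Phi_+/n$, and turning this into a net negative first-order term requires that the bins ranked below $3n/4$ contribute only $\LDAUOmicron{n}$ to $\Phi_+$ — which holds if $\load_{3n/4}\leq\varnothing$ but fails otherwise. When more than $3n/4$ bins sit at or above the average (and symmetrically for $\Phi_-$ when more than $3n/4$ sit at or below), the first-order term on that side can be positive, and the paper needs an additional dichotomy (its Lemmas~\ref{lem:potdrop_unbalanced_upper} and~\ref{lem:potdrop_unbalanced_lower}): in such a skewed configuration, either the non-contracting side is at most an $\varepsilon/4$-fraction of the contracting side (so the other side's drop pays for it), or a Jensen-type comparison of the two sides forces $\Phi=\LDAUOmicron{n}$ already, in which case the bounded worst-case growth of Lemma~\ref{lem:two_choice_potdrop_largepot} is absorbed into the additive term. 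Your sketch does flag that "handling both tails simultaneously is the delicate part," but you attribute the difficulty to the deletion step and the fraction of empty bins; in the paper that issue is dispatched by a short second-order estimate (Claim~\ref{clm:sumetaiphi}), whereas the real work — already present in the no-deletion setting of Talwar and Wieder — is this case analysis on where the average sits in the rank order. Without it, the drift inequality you state is not established for all configurations, and the induction cannot be closed.
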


\paragraph{Analysis Overview.}
To prove these results, we combine three different potential functions:
For a configuration $\bm{x}$ with average load $\varnothing$ and for a suitable constant $\alpha$ (to be fixed later), we define
\begin{equation}\label{eqn:two_choice:potentials}
\begin{aligned}
\Phi(\bm{x})   &\coloneqq \sum_{i\in\intcc{n}}e^{\alpha\cdot(x_i-\varnothing)}+\sum_{i\in\intcc{n}}e^{\alpha\cdot(\varnothing-x_i)},
\qquad\qquad
\Psi(\bm{x})    \coloneqq \sum_{i\in\intcc{n}}x_i,\quad\text{and}\\
\Gamma(\bm{x}) &\coloneqq \Phi{(\bm{x})} + \tfrac{n}{1-\lambda} \cdot\Psi(\bm{x})
.
\end{aligned}
\end{equation}
The potential $\Phi$ measures the \emph{smoothness} (basically the maximum load difference to the average) of a configuration and is used to prove Lemma~\ref{lem:two_choice:smoothnessWHP} (Section~\ref{sec:two_choice:bound_smoothness}).
The proof is based on the observation that whenever the load of a bin is far from the average load, it decreases in expectation.
The potential $\Psi$ measures the \emph{total load} of a configuration and is used, in combination with our results on the smoothness, to prove Theorem~\ref{thm:two_choice:maxload} (Section~\ref{sec:two_choice:maxload}).
The potential $\Gamma$ entangles the smoothness and total load, allowing us to prove Theorem~\ref{thm:two_choice:stability} (Section~\ref{sec:two_choice:stability}).
The proof is based on the fact that whenever $\Gamma$ is large (i.e., the configuration is not smooth or it has a huge total load) it decreases in expectation.

Before we continue with our analysis, let us make a simple but useful observation concerning the smoothness:
For any configuration $\bm{x}$ and value $b\geq0$, the inequality $\Phi(\bm{x})\leq e^{\alpha\cdot b}$ implies (by definition of $\Phi$) $\max_i\abs{x_i-\varnothing}\leq b$.
That is, the load difference of any bin to the average is at most $b$ and, thus, the load difference between any two bins is at most $2b$.
We capture this in the following observation.
\begin{observation}\label{obs:two_choice:avg_distance}
Let $b\geq0$ and consider a configuration $\bm{x}$ with average load $\varnothing$.
If $\Phi(\bm{x})\leq e^{\alpha\cdot b}$, then $\abs{x_i-\varnothing}\leq b$ for all $i\in\intcc{n}$.
In particular, $\max_i(x_i)-\min_i(x_i)\leq2b$.
\end{observation}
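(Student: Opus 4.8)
The plan is to simply unpack the definition of $\Phi$ and exploit that it is a sum of \emph{nonnegative} terms, so that each individual summand is bounded by the whole sum. First I would recall that
\begin{equation*}
\Phi(\bm{x}) = \sum_{i\in\intcc{n}} e^{\alpha\cdot(x_i-\varnothing)} + \sum_{i\in\intcc{n}} e^{\alpha\cdot(\varnothing-x_i)},
\end{equation*}
and that every summand is strictly positive. Hence, for an arbitrary but fixed index $i$, both $e^{\alpha\cdot(x_i-\varnothing)}$ and $e^{\alpha\cdot(\varnothing-x_i)}$ are at most $\Phi(\bm{x})$, which by assumption is at most $e^{\alpha\cdot b}$. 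Taking logarithms and dividing by the (fixed, positive) constant $\alpha$ preserves the inequalities and yields $x_i - \varnothing \leq b$ and $\varnothing - x_i \leq b$, i.e., $\abs{x_i-\varnothing}\leq b$. Since $i$ was arbitrary, this establishes the first claim for all $i\in\intcc{n}$.

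For the \enquote{in particular} part I would apply the just-proven bound to the index attaining the maximum load and to the index attaining the minimum load: this gives $\max_i x_i \leq \varnothing + b$ and $\min_i x_i \geq \varnothing - b$, so $\max_i x_i - \min_i x_i \leq 2b$. There is no genuine obstacle here — the statement is essentially definitional, its only role being to relate the potential $\Phi$ (used in the smoothness analysis of Section~\ref{sec:two_choice:bound_smoothness}) to the actual load deviations; the one point worth stating explicitly is that $\alpha>0$, so that exponentiation/logarithm are monotone and the inequality directions are preserved.
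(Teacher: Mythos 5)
Your proof is correct and matches the paper's reasoning: the paper treats this observation as immediate from the definition of $\Phi$, precisely because each (positive) summand is dominated by the total, and you have simply written out that argument explicitly. No issues.
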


\subsection{Bounding the Smoothness}\label{sec:two_choice:bound_smoothness}
The goal of this section is to prove Lemma~\ref{lem:two_choice:smoothnessWHP}.
To do so, we show the following bound on the expected smoothness (potential $\Phi)$ at an arbitrary time $t$:
\begin{lemma}\label{lem:two_choice:smoothness}
Let $\lambda\in\intcc{1/4,1}$.
Fix an arbitrary round $t$ of the 2-choice process.
There is a constant $\varepsilon>0$ such that\footnote{%
	For $\Phi$, the condition $\lambda\geq1/4$ can be substituted with $\lambda=\LDAUOmega{1}$ and only minor changes in the analysis.
	Moreover, the analysis can be easily adapted for a process that (deterministically) throws $\lambda\cdot n$ balls in each round, even for $\lambda>1$ as long as it is a constant.
	Finally, one can easily adapt the analysis to cover the process without deletions by setting $\eta_i(t)=0$ (see Observation~\ref{obs:two_choice:oneround_potdrop}).
	Using Markov's inequality, this yields the same result as~\cite{BCEFN12} using a simpler analysis.
}
\begin{equation}
\Ex{\Phi(\bm{\load}(t))}\leq\frac{n}{\varepsilon}
.
\end{equation}
\end{lemma}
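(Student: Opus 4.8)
The plan is to analyze the one-round change of the exponential potential $\Phi$ and show it satisfies a drift inequality of the form $\Ex{\Phi(\bm{\load}(t+1)) \mid \bm{\load}(t)=\bm{x}} \leq (1-\varepsilon/n)\cdot\Phi(\bm{x}) + c$ for suitable constants $\varepsilon, c > 0$, which upon iterating from the empty start configuration immediately yields $\Ex{\Phi(\bm{\load}(t))} \leq c n/\varepsilon$ for all $t$. So the core task is this one-step multiplicative drop (an observation that I would isolate as \textbf{Observation~\ref{obs:two_choice:oneround_potdrop}}, as hinted at in the footnote).

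To establish the one-step drift, I would split $\Phi = \Phi_+ + \Phi_-$ into the upper part $\Phi_+(\bm{x}) = \sum_i e^{\alpha(x_i - \varnothing)}$ and the lower part $\Phi_-(\bm{x}) = \sum_i e^{\alpha(\varnothing - x_i)}$, and bound the expected change of each $e^{\pm\alpha(x_i - \varnothing)}$ term separately. The key structural fact is that the average load moves by a deterministic-ish amount: each round removes one ball from every non-empty bin (a $-\nu(t)$ contribution to each bin's centered load, where $\nu$ is the fraction of non-empty bins) and adds balls via \greedy{2}. For a fixed bin $i$, write its signed load change as $\Delta_i = -\mathbbm{1}_i(t) + B_i - \varnothing\text{-shift}$, where $B_i$ is the number of balls \greedy{2} sends to bin $i$; crucially, $\Ex{B_i}$ is small (at most roughly $2\lambda/n$ times the relevant "rank mass") and, more importantly, for bins well above average the probability of being chosen by \greedy{2} is suppressed quadratically. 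I would use the standard Taylor/convexity estimate $e^{\alpha\delta} \leq 1 + \alpha\delta + \alpha^2\delta^2 e^{\alpha|\delta|}$ valid for the bounded per-round increments, take expectations, and collect terms. For the upper potential, a bin with $x_i - \varnothing$ large deletes a ball with probability $1$ while receiving balls only with probability $O((i/n)^2)\cdot\lambda$-ish (since both of its two choices must land among the top $i$ bins), producing a genuinely negative linear drift term $-\Omega(\alpha)$ per heavy term; for the lower potential, a bin far below average receives balls with probability close to $1$ (both choices likely land among bins at least as loaded) while deleting at most one, again giving negative drift. The empty bins and near-average bins contribute only the additive constant $c$.

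I would lean on the generalized version of the \textcite{TW14} technique that the authors announce they "employ and generalize" — namely, tracking $\Phi$ under a process where the allocation probabilities $p_i$ to the $i$-th fullest bin satisfy $p_1 \leq p_2 \leq \cdots \leq p_n$ with the \greedy{2} guarantee $\sum_{j \leq i} p_j \leq (i/n)^2$ — and combine it with the deletion step, which only helps the upper potential and only hurts the lower potential by a controlled amount captured through the $\eta_i(t) = \mathbbm{1}_i(t) - \nu(t)$ bookkeeping. After the one-round inequality is in hand, the proof of Lemma~\ref{lem:two_choice:smoothness} is a two-line induction: $\Phi(\bm{\load}(0)) = 2n$ (empty system), and $a_{t+1} \leq (1-\varepsilon/n)a_t + c$ gives $a_t \leq \max(2n, cn/\varepsilon) \leq n/\varepsilon'$ after renaming constants.

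The main obstacle is the casework in the one-round drift when the configuration is \emph{not} already smooth — in particular, correctly handling bins whose centered load $x_i - \varnothing$ is moderate (neither clearly heavy nor clearly light), where neither the "deletes surely, receives rarely" nor the "receives surely, deletes at most once" heuristic is clean, and where the movement of $\varnothing$ itself (which depends on $\nu(t)$, hence on the whole configuration) couples all the terms. Making the constant $\alpha$ small enough that the quadratic error term $\alpha^2\delta^2 e^{\alpha|\delta|}$ is dominated by the linear drift $-\Omega(\alpha)$ for the extreme bins, while simultaneously ensuring the bound survives the $\varnothing$-shift and the $\Phi_-$ analysis, is the delicate balancing act; I expect this to require the quantitative form of the \greedy{2} allocation-probability bound (the $(i/n)^2$ versus $i/n$ gap) in an essential way, exactly as in the Talwar–Wieder argument, and to be the place where the hypothesis $\lambda \geq 1/4$ (or $\lambda = \Omega(1)$) is actually used — to guarantee that enough balls arrive each round that the lower potential $\Phi_-$ genuinely contracts rather than merely failing to grow.
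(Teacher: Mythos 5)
Your overall strategy coincides with the paper's: split $\Phi=\Phi_++\Phi_-$, bound the expected one-round change of each exponential term via the Taylor estimate $e^x\le1+x+x^2$ and the quadratic allocation probabilities $p_i=(2i-1)/n^2$, account for deletions through $\eta_i(t)=\mathbbm{1}_i(t)-\nu(t)$, and close with the induction $a_{t+1}\le\gamma a_t+c$. Two small corrections: in the batch setting the per-round contraction is by a \emph{constant} factor $1-\varepsilon\alpha\lambda/4$ with additive term $\mathrm{O}(n)$ (Lemma~\ref{lem:two_choice:onestep_phi_bound}), not $1-\varepsilon/n$ with additive $\mathrm{O}(1)$ — both happen to give the same fixed point $\mathrm{O}(n/\varepsilon)$ — and the one-step drift is not Observation~\ref{obs:two_choice:oneround_potdrop}, which is only the exact identity for a single bin's contribution given the batch size $K$ and the number $k$ of balls it receives.

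The genuine gap is the step you yourself flag as \enquote{the main obstacle}: the unbalanced configurations. Naming the difficulty does not resolve it, and the resolution is not a balancing of constants but a structural dichotomy. Concretely, when $\load_{\frac{3}{4}n}(t)\ge\varnothing(t)$ (most bins at or above average), the linear drift term for $\Phi_+$ can be positive, so no bin-by-bin argument of the form \enquote{deletes surely, receives rarely} closes the case. The paper's argument (Lemma~\ref{lem:potdrop_unbalanced_upper}, following Talwar--Wieder) is: either $\Ex{\Delta_+(t+1)|\bm{\load}(t)}\le-\frac{\varepsilon\alpha\lambda}{4}\Phi_+$ holds anyway, or its failure forces $\Phi_+\le\frac{8}{\varepsilon}\Phi_{>n/3,+}\le\frac{16n}{3\varepsilon}e^{3\alpha L/n}$ (with $L$ the excess load above the average), which against the Jensen lower bound $\Phi_-\ge\frac{n}{4}e^{4\alpha L/n}$ yields either $\Phi_+\le\frac{\varepsilon}{4}\Phi_-$ — so the guaranteed contraction of $\Phi_-$ from Lemma~\ref{lem:bal_cfg:bound_lower_pot} absorbs the bounded growth of $\Phi_+$ from Lemma~\ref{lem:two_choice_potdrop_largepot} — or $L=\mathrm{O}(n/\alpha)$ and hence $\Phi=\varepsilon^{-8}\cdot\mathrm{O}(n)$ already, in which case the additive term covers everything. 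The symmetric statement (Lemma~\ref{lem:potdrop_unbalanced_lower}) handles $\load_{\frac{n}{4}}(t)\le\varnothing(t)$. Without this trichotomy your plan does not produce a one-round inequality valid for \emph{all} configurations, and the concluding induction cannot start. Your instinct that $\lambda\ge1/4$ is what makes $\Phi_-$ genuinely contract is, however, correct.
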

Note that Lemma~\ref{lem:two_choice:smoothness} together with Observation~\ref{obs:two_choice:avg_distance} immediately implies Lemma~\ref{lem:two_choice:smoothnessWHP} by a simple application of Markov's inequality to bound the probability that $\Phi(\bm{\load}(t))\geq n^2/\varepsilon$.

Our proof of Lemma~\ref{lem:two_choice:smoothness} follows the lines of~\cite{Peres:2010:CPW:1873601.1873732,TW14}, who used the same potential function to analyze variants of the sequential $d$-choice process without deletions.
While the basic idea of showing a relative drop when the potential is high combined with a bounded absolute increase in the general case is the same, our analysis turns out much more involved.
In particular, not only do we have to deal with deletions and throwing balls in batches but the size of each batch is also a random variable.
Once Lemma~\ref{lem:two_choice:smoothness} is proven, Lemma~\ref{lem:two_choice:smoothnessWHP} emerges by combining Observation~\ref{obs:two_choice:avg_distance}, Lemma~\ref{lem:two_choice:smoothness}, and Markov's inequality as follows:
\begin{equation}
\Pr{\max_i\load_i(t)-\min_i\load_i(t)\geq\frac{4}{\alpha}\cdot\ln\left(\frac{n}{\varepsilon}\right)}\leq\Pr{\Phi(\bm{\load}(t))\geq\frac{n^2}{\varepsilon^2}}\leq\frac{\varepsilon}{n}
.
\end{equation}
It remains to prove Lemma~\ref{lem:two_choice:smoothness}.
Remember the definition of $\Phi(\bm{x})$ from Equation~\eqref{eqn:two_choice:potentials}.
We split the potential in two parts $\Phi(\bm{x})\coloneqq\Phi_+(\bm{x})+\Phi_-(\bm{x})$.
Here, $\Phi_+(\bm{x})\coloneqq\sum_ie^{\alpha\cdot(x_i-\varnothing))}$ denotes the \emph{upper potential} of $\bm{x}$ and $\Phi_-(\bm{x})\coloneqq\sum_ie^{\alpha\cdot(\varnothing-x_i))}$ denotes the \emph{lower potential} of $\bm{x}$.
For a fixed bin $i$, we use $\Phi_{i,+}(\bm{x})\coloneqq e^{\alpha\cdot(x_i-\varnothing)}$ and $\Phi_{i,-}(\bm{x})\coloneqq e^{\alpha\cdot(\varnothing-x_i)}$ to denote $i$'s contribution to the upper and lower potential, respectively.
When we consider the effect of a fixed round $t+1$, we will sometimes omit the time parameter and use prime notation to denote the value of a parameter at the end of round $t+1$.
For example, we write $\load_i$ and $\load'_i$ for the load of bin $i$ at the beginning and at the end of round $t+1$, respectively.

We start with two simple but useful identities regarding the potential drop $\Delta_{i,+}(t+1)$ (and $\Delta_{i,-}(t+1)$) due to a fixed bin $i$ during round $t+1$.
\begin{observation}\label{obs:two_choice:oneround_potdrop}
Fix a bin $i$, let $K$ denote the number of balls that are placed during round $t+1$ and let $k\leq K$ be the number of these balls that fall into bin $i$.
Then
\begin{enumerate}
\item $\Delta_{i,+}(t+1)=\Phi_{i,+}(\bm{X}(t))\cdot\left(e^{\alpha\cdot(k-\eta_i(t)-K/n)}-1\right)$ and
\item $\Delta_{i,-}(t+1)=\Phi_{i,-}(\bm{X}(t))\cdot\left(e^{-\alpha\cdot(k-\eta_i(t)-K/n)}-1\right)$.
\end{enumerate}
\end{observation}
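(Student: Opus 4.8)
The plan is a one-round bookkeeping computation. Since $\Phi_{i,+}(\bm{x})=e^{\alpha\cdot(x_i-\varnothing)}$ and $\Phi_{i,-}(\bm{x})=e^{\alpha\cdot(\varnothing-x_i)}$ depend on the configuration only through the difference $x_i-\varnothing$, I would first determine exactly how this difference changes during round $t+1$ and then exponentiate.

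First I would track the fixed bin $i$. At the start of round $t+1$ bin $i$ deletes a ball precisely if it is non-empty, i.e.\ it loses $\mathbbm{1}_i(t)$ balls; afterwards it receives $k$ of the $K$ newly placed balls, so $\load_i(t+1)=\load_i(t)-\mathbbm{1}_i(t)+k$. For the average load I would note that exactly $\nu(t)\cdot n$ bins are non-empty and each deletes one ball while $K$ balls are added overall, so the total load changes by $K-\nu(t)\cdot n$ and hence $\varnothing(t+1)=\varnothing(t)-\nu(t)+K/n$.

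Subtracting these two identities and using $\eta_i(t)=\mathbbm{1}_i(t)-\nu(t)$ gives
\[
\load_i(t+1)-\varnothing(t+1)=\bigl(\load_i(t)-\varnothing(t)\bigr)+\bigl(k-\eta_i(t)-K/n\bigr).
\]
Applying $x\mapsto e^{\alpha x}$ to both sides factorizes the right-hand side, so $\Phi_{i,+}(\bm{\load}(t+1))=\Phi_{i,+}(\bm{\load}(t))\cdot e^{\alpha\cdot(k-\eta_i(t)-K/n)}$, and subtracting $\Phi_{i,+}(\bm{\load}(t))$ yields part~1. Part~2 follows in the same way after negating the exponent, since $\Phi_{i,-}$ uses $\varnothing-x_i$ in place of $x_i-\varnothing$.

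There is no real obstacle here; the only point needing care is to keep the deletion step consistent between the single bin (which contributes $\mathbbm{1}_i(t)$ to its own load change) and the whole system (which contributes $\nu(t)\cdot n$ to the total load), since the mismatch between these two is precisely what produces the $\eta_i(t)=\mathbbm{1}_i(t)-\nu(t)$ term. I would also stress that $k$ and $K$ are random variables, so the identities hold pointwise for each realization of round $t+1$, which is the form in which the observation is applied later when taking conditional expectations.
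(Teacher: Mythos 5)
Your proposal is correct and matches the paper's own proof essentially line for line: both compute $\load_i(t+1)-\load_i(t)=-\mathbbm{1}_i(t)+k$ and $\varnothing(t+1)-\varnothing(t)=-\nu(t)+K/n$, combine them via $\eta_i(t)=\mathbbm{1}_i(t)-\nu(t)$, and exponentiate. Your closing remark that the identities hold pointwise per realization of $(k,K)$ is a fair clarification but not a departure from the paper's argument.
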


We now derive the main technical lemma that states general bounds on the expected upper and lower potential change during a single round.
This will be used to derive bounds on the potential change in different situations.
For this, let $p_i\coloneqq(\frac{i}{n})^2-(\frac{i-1}{n})^2=\frac{2i-1}{n^2}$ (the probability that a ball thrown with \greedy{2} falls into the $i$-th fullest bin).
We also define $\hat{\alpha}\coloneqq e^{\alpha}-1$ and $\check{\alpha}\coloneqq1-e^{-\alpha}$.
Note that $\hat{\alpha}\in\intoo{\alpha,\alpha+\alpha^2}$ and $\check{\alpha}\in\intoo{\alpha-\alpha^2,\alpha}$ for $\alpha \in\intoo{0, 1.7}$.
This follows easily from the Taylor approximation $e^x\leq1+x+x^2$, which holds for any $x\in\intoc{-\infty,1.7}$ (we will use this approximation several times in the analysis).
Finally, let $\hat{\delta}_i\coloneqq\lambda n\cdot(\sfrac{1}{n}\cdot\check{1}-p_i\cdot\sfrac{\hat{\alpha}}{\alpha})$ and $\check{\delta}_i\coloneqq\lambda n\cdot(\sfrac{1}{n}\cdot\hat{1}-p_i\cdot\sfrac{\check{\alpha}}{\alpha})$, where $\check{1}\coloneqq1-\alpha/n<1<\hat{1}\coloneqq1+\alpha/n$.
These $\hat{\delta}_i$ and $\check{\delta}_i$ values can be thought of as upper/lower bounds on the expected difference in the number of balls that fall into bin $i$ under the 1-choice and 2-choice process, respectively (note that $\hat{1}$, $\check{1}$, $\hat{\alpha}/\alpha$, and $\check{\alpha}/\alpha$ are all constants close to 1).
\begin{lemma}\label{lem:pot_change}
Consider a bin $i$ after round $t$ and a constant $\alpha\leq1$.
\begin{enumerate}
\item For the expected change of $i$'s upper potential during round $t+1$ we have
	\begin{equation}
	\frac{\Ex{\Delta_{i,+}(t+1)|\bm{\load}(t)}}{\Phi_{i,+}(\bm{\load}(t))}\leq-\alpha\cdot\left(\eta_i+\hat{\delta}_i\right)+\alpha^2\cdot\left(\eta_i+\hat{\delta}_i\right)^2
	.
	\end{equation}
\item For the expected change of $i$'s lower potential during round $t+1$ we have
	\begin{equation}
	\frac{\Ex{\Delta_{i,-}(t+1)|\bm{\load}(t)}}{\Phi_{i,-}(\bm{\load}(t))}\leq\alpha\cdot\left(\eta_i+\check{\delta}_i\right)+\alpha^2\cdot\left(\eta_i+\check{\delta}_i\right)^2
	.
	\end{equation}
\end{enumerate}
\end{lemma}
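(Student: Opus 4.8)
The plan is to start from the exact one-round identities in Observation~\ref{obs:two_choice:oneround_potdrop}, take conditional expectations over the random choices in round $t+1$, and then expand the exponentials via the Taylor bound $e^x\le 1+x+x^2$ (valid for $x\le 1.7$, which will need to be checked given $\alpha\le 1$ and the bounded batch sizes). Concretely, for the upper potential I would write
\begin{equation*}
\frac{\Ex{\Delta_{i,+}(t+1)\mid\bm{\load}(t)}}{\Phi_{i,+}(\bm{\load}(t))}
= \Ex{e^{\alpha\cdot(k-\eta_i-K/n)}\mid\bm{\load}(t)}-1,
\end{equation*}
where $K$ is the (random) number of balls thrown in round $t+1$ and $k$ the number landing in bin $i$. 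The first step is therefore to evaluate this expectation. I would condition first on $K$: given $K$, the count $k$ is $\Binomial(K,p_i)$ under \greedy{2} (with $p_i=(2i-1)/n^2$), and $K$ itself is $\Binomial(n,\lambda)$. Using independence of the two layers of randomness and the moment generating function of the binomial, $\Ex{e^{\alpha k}\mid K}=(1-p_i+p_i e^{\alpha})^K=(1+p_i\hat{\alpha})^K$, so after also averaging the $e^{-\alpha K/n}$ factor over $K\sim\Binomial(n,\lambda)$ one gets a clean closed form roughly of the shape $e^{-\alpha\eta_i}\cdot\bigl(1-\lambda+\lambda(1+p_i\hat\alpha)e^{-\alpha/n}\bigr)^n$. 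The deletion term contributes the deterministic factor $e^{-\alpha\eta_i}$ (recall $\eta_i(t)=\mathbbm{1}_i(t)-\nu(t)$ encodes whether bin $i$ deletes a ball).

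The second step is to linearize this closed form. I would apply $e^{-\alpha/n}\le 1-\alpha/n+\alpha^2/n^2$ and $e^{\alpha}=1+\hat\alpha$ to bound the inner base by $1+\tfrac1n\bigl(\lambda\check1 \cdot(\text{something})\cdots\bigr)$ — more precisely, the idea is that the per-generator success probability contributes, in expectation, about $\lambda(\tfrac1n\check1 - p_i\tfrac{\hat\alpha}{\alpha})\cdot\alpha = \alpha\hat\delta_i$ worth of exponent, so that $\bigl(1+\tfrac{\alpha\hat\delta_i+\eta_i\alpha\cdot(\pm)}{n}\cdot\tfrac1{?}\bigr)^n\le e^{-\alpha(\eta_i+\hat\delta_i)}$ up to controllable second-order slack. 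Then $e^{-x}\le 1-x+x^2$ (again needing $x$ bounded, using $\alpha\le1$ and that $\eta_i,\hat\delta_i$ are bounded by absolute constants since $\abs{\eta_i}\le 1$ and $\hat\delta_i=\lambda n(\tfrac1n\check1-p_i\tfrac{\hat\alpha}{\alpha})$ is $O(1)$) turns $e^{-\alpha(\eta_i+\hat\delta_i)}-1$ into $-\alpha(\eta_i+\hat\delta_i)+\alpha^2(\eta_i+\hat\delta_i)^2$, which is exactly the claimed bound. The lower-potential case is entirely symmetric: replace $\alpha$ by $-\alpha$ throughout, which swaps $\hat\alpha\leftrightarrow-\check\alpha$ and $\check1\leftrightarrow\hat1$, producing $\check\delta_i$ in place of $\hat\delta_i$ and flipping the sign of the linear term.

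The main obstacle I expect is the bookkeeping in the linearization: one must simultaneously (i) keep the $K/n$ normalization term and the deletion term $\eta_i$ inside a single exponent so that the $(1+\cdot/n)^n\le e^{\cdot}$ step is clean, (ii) ensure every quantity fed into a Taylor bound actually lies in the admissible range — in particular $\alpha(k-\eta_i-K/n)$ can a priori be as large as $\alpha K$, so the naive pointwise bound fails and one genuinely must take the expectation over $k$ (via the binomial MGF) \emph{before} invoking any Taylor estimate on the large deviations, while Taylor-expanding only the $O(1/n)$-scale quantities afterwards, and (iii) absorb all the lower-order cross terms (products like $\eta_i\cdot\hat\delta_i/n$, the $\alpha^2/n^2$ from $e^{-\alpha/n}$, etc.) into the single $\alpha^2(\eta_i+\hat\delta_i)^2$ slack term, which forces some care about which constants are "close to $1$" (the $\check1,\hat1,\hat\alpha/\alpha,\check\alpha/\alpha$ factors). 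Everything else is mechanical; the content is entirely in choosing the order "condition on $K$, use the exact binomial MGF, then linearize the per-round $O(1/n)$ drift".
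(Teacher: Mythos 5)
Your proposal is correct and follows essentially the same route as the paper's proof: condition on the batch size $K$, collapse the inner expectation over $k$ via the binomial theorem (equivalently, the binomial MGF), obtain the closed form $e^{-\alpha\eta_i}\bigl(1-\lambda+\lambda e^{-\alpha/n}(1+\hat\alpha p_i)\bigr)^n-1$, bound the base by $1-\tfrac{\alpha}{n}\hat\delta_i$ using $e^{-\alpha/n}\le 1-\alpha/n+\alpha^2/n^2$, and finish with $(1-x/n)^n\le e^{-x}$ and one last Taylor step. Your emphasis on taking the expectation over $k$ before any Taylor expansion is exactly the point that makes the argument go through, and the symmetric treatment of the lower potential matches the paper as well.
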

\begin{proof}
For the first statement, we use Observation~\ref{obs:two_choice:oneround_potdrop} to calculate $\Ex{\Delta_{i,+}(t)|\bm{\load}}/\Phi_{i,+}$
\begin{align*}
{}={}& \sum_{K=0}^n\sum_{k=0}^K\binom{n}{K}\binom{K}{k}\cdot(p_i\lambda)^k\cdot\bigl((1-p_i)\lambda\bigr)^{K-k}\cdot(1-\lambda)^{n-K}\cdot\left(e^{\alpha\cdot(k-\eta_i-K/n)}-1\right)\\
{}={}& \sum_{K=0}^n\binom{n}{K}(1-\lambda)^{n-K}\cdot\lambda^K\sum_{k=0}^K\binom{K}{k}\cdot p_i^k\cdot(1-p_i)^{K-k}\cdot\left(e^{\alpha\cdot(k-\eta_i-K/n)}-1\right)\\
{}={}& \sum_{K=0}^n\binom{n}{K}(1-\lambda)^{n-K}\cdot\lambda^K\cdot\left(e^{-\alpha(\eta_i+K/n)}\sum_{k=0}^K\binom{K}{k}\cdot(e^{\alpha}\cdot p_i)^k\cdot(1-p_i)^{K-k}-1\right)\\
{}={}& \sum_{K=0}^n\binom{n}{K}(1-\lambda)^{n-K}\cdot\lambda^K\cdot\left(e^{-\alpha(\eta_i+K/n)}\cdot\left(1+\hat{\alpha}\cdot p_i\right)^K-1\right)
,\\
\intertext{%
	where we first apply the law of total expectation together with Observation~\ref{obs:two_choice:oneround_potdrop} and, afterward, twice the binomial theorem.
	Continuing the calculation using the aforementioned Taylor approximation $e^x\leq1+x+x^2$ (which holds for any $x\in\intoc{-\infty,1.7}$), and the definition of $\hat{\delta}_i$ yields
}
{}={}& e^{-\alpha\eta_i}\cdot\bigl(1-\lambda+\lambda e^{-\alpha/n}\cdot(1+\hat{\alpha}\cdot p_i)\bigr)^n-1\leq e^{-\alpha\eta_i}\cdot\bigl(1-\lambda(1-e^{-\alpha/n})+\lambda\cdot\hat{\alpha}\cdot p_i\bigr)^n-1\\
{}\leq{}& e^{-\alpha\eta_i}\cdot\left(1-\frac{\lambda\cdot\alpha}{n}\cdot(1-\alpha/n)+\lambda\cdot\hat{\alpha}\cdot p_i\right)^n-1\leq e^{-\alpha\eta_i}\cdot\left(1-\frac{\alpha}{n}\cdot\hat{\delta}_i\right)^n-1
\leq      e^{-\alpha\cdot\left(\eta_i+\hat{\delta}_i\right)}-1
.
\end{align*}
Now, the claim follows by another application of the Taylor approximation.
The second statement follows similarly.
\end{proof}
Using Lemma~\ref{lem:pot_change}, we derive different bounds on the potential drop that will be used in the various situations.
The proofs for the following statements can all be found in Appendix~\ref{app:missing:two_choice}.

We start with a result that will be used when the potential is relatively high.
\begin{lemma}\label{lem:two_choice_potdrop_largepot}
Consider a round $t$ and a constant $\alpha\leq\ln(10/9)$ ($<1/8$).
Let $R\in\set{+,-}$ and $\lambda\in\intcc{1/4,1}$.
For the expected upper and lower potential drop during round $t+1$ we have
\begin{equation}
\Ex{\Delta_R(t+1)|\bm{\load}(t)}<2\alpha\lambda\cdot\Phi_R(\bm{\load}(t))
.
\end{equation}
\end{lemma}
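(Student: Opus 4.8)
The plan is to bound the expected one-round drop of $\Phi_R$ by summing the per-bin bounds from Lemma~\ref{lem:pot_change} over all $n$ bins, and then to show that the resulting expression is dominated by $2\alpha\lambda$ times the current potential regardless of the configuration. I would treat $R=+$ (the upper potential) in detail; the case $R=-$ is symmetric with the roles of $\hat\alpha,\check\alpha$ and $\hat 1,\check 1$ swapped, and $\check\delta_i$ in place of $\hat\delta_i$. Write $\eta_i=\eta_i(t)$, $\delta_i=\hat\delta_i$, $\Phi_{i}=\Phi_{i,+}(\bm{\load}(t))$ for brevity inside the argument. By Lemma~\ref{lem:pot_change}(a),
\begin{equation*}
\Ex{\Delta_+(t+1)\mid\bm{\load}(t)}
=\sum_{i\in\intcc n}\Ex{\Delta_{i,+}(t+1)\mid\bm{\load}(t)}
\leq\sum_{i\in\intcc n}\Phi_{i}\cdot\left(-\alpha(\eta_i+\delta_i)+\alpha^2(\eta_i+\delta_i)^2\right).
\end{equation*}

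The first step is to get crude but absolute bounds on $\eta_i+\delta_i$. Recall $\delta_i=\lambda n(\tfrac1n\check 1-p_i\tfrac{\hat\alpha}{\alpha})$ with $p_i=\tfrac{2i-1}{n^2}\in[0,\tfrac2n]$, so $\delta_i\in[\lambda(\check1-2\tfrac{\hat\alpha}{\alpha}),\lambda\check1]$; since $\alpha\le\ln(10/9)$ we have $\check1=1-\alpha/n\le1$ and $\hat\alpha/\alpha\le 1+\alpha$, so $|\delta_i|$ is bounded by a constant (say $\le 3\lambda\le 3$), and likewise $|\eta_i|\le1$. Hence $|\eta_i+\delta_i|\le C_0$ for an absolute constant $C_0$, and therefore the quadratic term satisfies $\alpha^2(\eta_i+\delta_i)^2\le \alpha C_0^2\cdot\alpha$; because $\alpha$ is a sufficiently small constant ($\le\ln(10/9)$), we can absorb this into a term of the form $\alpha\cdot(\text{small})$. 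Concretely, $-\alpha(\eta_i+\delta_i)+\alpha^2(\eta_i+\delta_i)^2\le -\alpha(\eta_i+\delta_i)+\alpha\cdot\alpha C_0^2$.

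The second, and crucial, step is to control $\sum_i\Phi_i(\eta_i+\delta_i)$ from below — equivalently to bound the potentially "bad" contributions where $\eta_i+\delta_i<0$. Here I would use two structural facts. First, $\sum_i\eta_i=\sum_i(\mathbbm1_i-\nu)=\nu n-\nu n=0$, and more usefully $\eta_i\ge-\nu\ge-1$ with $\eta_i>0$ only for empty bins, which are exactly the least-loaded bins, i.e. those with the \emph{smallest} $\Phi_i$. Second, $\delta_i$ is decreasing in $i$: fuller bins (small $i$, large $\Phi_i$) have the most negative $\delta_i$, but even the most negative $\delta_i$ is at least $-\lambda(2\hat\alpha/\alpha-\check1)\ge -2\lambda\hat\alpha/\alpha\ge -2\lambda(1+\alpha)$. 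So each summand with $\eta_i+\delta_i<0$ contributes at least $-\alpha\cdot C_1\lambda\cdot\Phi_i$ for an absolute constant $C_1$ (one checks $C_1$ can be taken so that $\alpha C_1\lambda + \alpha\cdot\alpha C_0^2\le 2\alpha\lambda$ using $\lambda\ge1/4$ and $\alpha$ small — this is where the hypotheses $\lambda\in[1/4,1]$ and $\alpha\le\ln(10/9)$ are spent). Summing, $\Ex{\Delta_+(t+1)\mid\bm{\load}(t)}\le \sum_i\Phi_i\cdot(\alpha C_1\lambda+\alpha^2C_0^2)\le 2\alpha\lambda\sum_i\Phi_i=2\alpha\lambda\,\Phi_+(\bm{\load}(t))$, and the same for $\Phi_-$.

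The main obstacle I anticipate is the bookkeeping in the second step: one must verify that the negative linear term $-\alpha\sum_i\Phi_i\delta_i$ really is dominated by $2\alpha\lambda\Phi_+$ uniformly over all load vectors, since $\Phi_i$ can be wildly nonuniform and is positively correlated with the most negative values of $\delta_i$. The resolution is that $|\delta_i|$ and $|\eta_i|$ are bounded by \emph{absolute constants} independent of the configuration (this is exactly the remark following Lemma~\ref{lem:pot_change} that $\hat1,\check1,\hat\alpha/\alpha,\check\alpha/\alpha$ are constants near $1$), so the per-bin multiplier of $\Phi_i$ is a constant, and choosing $\alpha$ small enough makes that constant at most $2\lambda$. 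No cancellation across bins is needed — the bound is termwise — which is why the argument goes through despite the nonuniformity of $\Phi_i$.
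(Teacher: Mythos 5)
Your starting point --- summing the per-bin bounds of Lemma~\ref{lem:pot_change} over $i\in\intcc{n}$ --- is the same as the paper's, but the way you close the argument does not work: the bound is \emph{not} termwise, and cancellation across bins \emph{is} needed. Concretely, take $\lambda=1/4$ and a configuration with a single empty bin, so $\nu=1-1/n$, and let $i$ be that empty bin. Then $\eta_i=-\nu\approx-1$ (note your sign is reversed: $\eta_i=\mathbbm{1}_i-\nu$ equals $+\eta$ for \emph{non-empty} bins and $-\nu$ for \emph{empty} ones), and for the least-loaded bins $\hat{\delta}_i$ is roughly $-\lambda$ (it is decreasing in $i$, so the fullest bins have the most \emph{positive} $\hat{\delta}_i$, another reversal in your sketch). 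Hence the linear part of that single summand is $-\alpha(\eta_i+\hat{\delta}_i)\cdot\Phi_{i,+}\approx\alpha(1+\lambda)\cdot\Phi_{i,+}=\tfrac{5}{4}\alpha\cdot\Phi_{i,+}$, already exceeding the target per-bin coefficient $2\alpha\lambda=\tfrac{1}{2}\alpha$ by a factor of $2.5$, before the quadratic term is even added. So no constant $C_1$ with $\alpha C_1\lambda\leq2\alpha\lambda$, i.e.\ $C_1\leq2$, can absorb it, and the parenthetical ``one checks $C_1$ can be taken so that $\alpha C_1\lambda+\alpha^2C_0^2\leq2\alpha\lambda$'' fails exactly at small $\lambda$. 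The closing assertion that ``no cancellation across bins is needed --- the bound is termwise'' is the gap.

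What rescues the lemma (and what the paper does) is to expand $-\alpha(\eta_i+\hat{\delta}_i)+\alpha^2(\eta_i+\hat{\delta}_i)^2$ so as to isolate $\alpha\eta_i(\alpha\eta_i-1)\cdot\Phi_{i,+}$ and bound its sum by Claim~\ref{clm:sumetaiphi}. There, the $\eta n$ empty bins contribute $+\alpha\nu(1+\alpha\nu)\sum_{i>\nu n}\Phi_{i,+}\leq\alpha\nu(1+\alpha\nu)\cdot\eta\min\bigl(n,\Phi_+\bigr)$ (using $\Phi_{i,+}\leq1$ for empty bins), while the $\nu n$ non-empty bins contribute the \emph{negative} amount $\alpha\eta(\alpha\eta-1)\sum_{i\leq\nu n}\Phi_{i,+}\leq\alpha\eta(\alpha\eta-1)\cdot\nu\Phi_+$; the two first-order-in-$\alpha$ pieces offset each other, leaving only $\alpha^2\eta\nu\cdot\min\bigl(n,\Phi_+\bigr)\leq\alpha^2\lambda\Phi_+$. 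The remaining terms $-\alpha\hat{\delta}_i$ and $\alpha^2(2\eta_i\hat{\delta}_i+\hat{\delta}_i^2)$ \emph{are} then handled termwise via $\abs{\hat{\delta}_i}\leq\tfrac{5}{4}\lambda$ (Claim~\ref{clm:two_choice:delta_bounds}) --- crucially these all carry a factor $\lambda$, which the pure $\eta_i$ terms do not --- giving the total $(\alpha^2\lambda+5\alpha^2\lambda+\tfrac{5}{4}\alpha\lambda)\cdot\Phi_+<2\alpha\lambda\cdot\Phi_+$ for $\alpha<1/8$. Your instinct that the empty bins are the dangerous ones and that their small $\Phi_{i,+}$ must be exploited is correct, but exploiting it requires pairing their positive contribution against the negative drift of the non-empty bins, not a per-bin estimate.
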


The next lemma derives a bound that is used to bound the upper potential change in reasonably balanced configurations.
\begin{lemma}\label{lem:bal_cfg:bound_upper_pot}
Consider a round $t$ and the constants $\varepsilon$ (from Claim~\ref{clm:bulk_leftright}) and $\alpha\leq\min(\ln(10/9),\varepsilon/4)$.
Let $\lambda\in\intcc{1/4,1}$ and assume $\load_{\frac{3}{4}n}(t)\leq\varnothing(t)$.
For the expected upper potential drop during round $t+1$ we have
\begin{equation}
\Ex{\Delta_+(t+1)|\bm{\load}(t)}\leq-\varepsilon\alpha\lambda\cdot\Phi_+(\bm{\load}(t))+2\alpha\lambda n
.
\end{equation}
\end{lemma}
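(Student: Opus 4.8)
Here is how I would prove Lemma~\ref{lem:bal_cfg:bound_upper_pot}.

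The plan is to sum the per-bin estimate of Lemma~\ref{lem:pot_change} over all bins and then separate the outcome into a genuine negative drift of order $\alpha\lambda\cdot\Phi_+(\bm{\load}(t))$, a competing term of the same order produced by the $2$-choice placement probabilities, and lower-order remainders of order $\alpha\lambda n$. Summing the first inequality of Lemma~\ref{lem:pot_change} gives
\[
\Ex{\Delta_+(t+1)\,|\,\bm{\load}(t)} \;\le\; -\alpha\sum_{i\in\intcc{n}}\Phi_{i,+}\,(\eta_i+\hat{\delta}_i) \;+\; \alpha^2\sum_{i\in\intcc{n}}\Phi_{i,+}\,(\eta_i+\hat{\delta}_i)^2,
\]
with everything evaluated at $\bm{\load}(t)$. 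Three reductions then take care of the routine pieces. First, since $|\eta_i|\le1$ and $|\hat{\delta}_i|=\LDAUOmicron{1}$ (because $\check{1}$, $\hat{\alpha}/\alpha$ and $n p_i\le2$ are all bounded for $\alpha\le\ln(10/9)$), the second-order sum is $\LDAUOmicron{\alpha^2}\cdot\Phi_+$, which for $\alpha\le\varepsilon/4$ and $\lambda\ge1/4$ is only a small fraction of the $\alpha\lambda\Phi_+$-scale gain recovered below. Second, the deletion part can simply be dropped: writing $z\coloneqq\eta n$ for the number of empty bins one computes $\sum_i\Phi_{i,+}\eta_i=\eta\bigl(\Phi_+-n\,e^{-\alpha\varnothing}\bigr)$, and since $\Phi_+\ge n$ by convexity of $x\mapsto e^{\alpha x}$ together with Jensen's inequality while $e^{-\alpha\varnothing}\le1$, this is non-negative; hence $-\alpha\sum_i\Phi_{i,+}\eta_i\le0$. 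Third, unfolding $\hat{\delta}_i=\lambda\bigl(\check{1}-n p_i\cdot\hat{\alpha}/\alpha\bigr)$ rewrites the first-order sum as $-\alpha\lambda\check{1}\cdot\Phi_+ + \lambda\hat{\alpha}\sum_i n p_i\,\Phi_{i,+}$.

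The whole argument now rests on controlling the $2$-choice-weighted potential $\sum_i n p_i\,\Phi_{i,+}=\sum_i\tfrac{2i-1}{n}\,\Phi_{i,+}$. As $p_i$ is increasing and $\Phi_{i,+}$ non-increasing in $i$, Chebyshev's sum inequality yields only the trivial bound $\sum_i n p_i\,\Phi_{i,+}\le\Phi_+$, which, since $\hat{\alpha}\approx\alpha$, would cancel the gain $-\alpha\lambda\check{1}\Phi_+$ essentially exactly and leave no slack. Breaking this tie is precisely what the balance hypothesis $\load_{\frac{3}{4}n}(t)\le\varnothing(t)$ is for: it forces $\Phi_{i,+}\le1$ for every rank $i\ge\tfrac34 n$, so up to an additive $\LDAUOmicron{n}$ the entire upper potential lives on the fullest $\tfrac34 n$ bins, and it prevents the potential from spreading evenly over the range $\intcc{1}\dots\lceil\tfrac34 n\rceil$ — via Claim~\ref{clm:bulk_leftright} it confines a constant share of $\Phi_+$ to the fullest bins, where $n p_i$ is below the average weight $1$. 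Invoking Claim~\ref{clm:bulk_leftright} (whose constant is the $\varepsilon$ in the statement) thus gives a bound of the shape $\sum_i n p_i\,\Phi_{i,+}\le(1-2\varepsilon)\,\Phi_+ + \LDAUOmicron{n}$ (the precise normalisation is what makes the final arithmetic land on $-\varepsilon\alpha\lambda$).

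Collecting terms, the coefficient of $\Phi_+$ in the bound for $\Ex{\Delta_+(t+1)\,|\,\bm{\load}(t)}$ becomes $-\alpha\lambda\check{1}+\lambda\hat{\alpha}(1-2\varepsilon)+\LDAUOmicron{\alpha^2}$; using the Taylor estimates $\hat{\alpha}\le\alpha+\alpha^2$ and $\check{1}=1-\alpha/n\ge1-\alpha$ (both valid for $\alpha\le\ln(10/9)$) this is $\alpha\lambda\bigl(-(1-\alpha)+(1+\alpha)(1-2\varepsilon)\bigr)+\LDAUOmicron{\alpha^2}\le\alpha\lambda(2\alpha-2\varepsilon)+\LDAUOmicron{\alpha^2}$, which is at most $-\varepsilon\alpha\lambda$ once $\alpha$ is small relative to $\varepsilon$ — guaranteed (after fixing the absolute constants) by $\alpha\le\varepsilon/4$ and $\lambda\ge1/4$. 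The remaining additive terms are all $\LDAUOmicron{\alpha\lambda n}$ and fold into the stated $2\alpha\lambda n$. I expect the heart of the difficulty to be the middle paragraph: one has to genuinely beat Chebyshev's sum inequality by a constant factor, i.e.\ show that the balance condition $\load_{\frac{3}{4}n}(t)\le\varnothing(t)$ really concentrates a constant share of the upper potential into the low-index bins where the $2$-choice placement probability is small — this is exactly the reason the lemma carries that hypothesis, and where Claim~\ref{clm:bulk_leftright} does its work. The rest is bookkeeping with the Taylor approximations for $\hat{\alpha}$ and $\check{\alpha}$.
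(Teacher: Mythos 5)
Your proposal is correct and follows essentially the same route as the paper: sum Lemma~\ref{lem:pot_change} over all bins, use the hypothesis $\load_{\frac{3}{4}n}(t)\leq\varnothing(t)$ to reduce the contribution of bins with rank above $\frac{3}{4}n$ to an additive $\LDAUOmicron{\alpha\lambda n}$, and invoke Claim~\ref{clm:bulk_leftright} to beat the trivial bound on $\sum_i np_i\Phi_{i,+}$ by the factor $(1-2\varepsilon)$, after which the Taylor bookkeeping with $\hat{\alpha}$ and $\check{1}$ is identical. The only cosmetic difference is that you discard the deletion term $-\alpha\sum_i\eta_i\Phi_{i,+}$ outright via the Jensen argument $\Phi_+\geq n\geq n e^{-\alpha\varnothing}$, whereas the paper absorbs it into the second-order term through Claim~\ref{clm:sumetaiphi}; both yield the same conclusion.
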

The next lemma derives a bound that is used to bound the lower potential drop in reasonably balanced configurations.
\begin{lemma}\label{lem:bal_cfg:bound_lower_pot}
Consider a round $t$ and the constants $\varepsilon$ (from Claim~\ref{clm:bulk_leftright}) and $\alpha\leq\min(\ln(\sfrac{10}{9}),\varepsilon/8)$.
Let $\lambda\in\intcc{1/4,1}$ and assume $\load_{\frac{n}{4}}(t)\geq\varnothing(t)$.
For the expected lower potential drop during round $t$ we have
\begin{equation}
\Ex{\Delta_-(t+1)|\bm{\load}(t)}\leq-\varepsilon\alpha\lambda\cdot\Phi_-(\bm{\load}(t))+\frac{\alpha\lambda n}{2}
.
\end{equation}
\end{lemma}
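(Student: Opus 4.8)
The plan is to run the argument used for the upper potential (Lemma~\ref{lem:bal_cfg:bound_upper_pot}) in mirror image, with the second part of Lemma~\ref{lem:pot_change} in place of the first. Write $\bm{x}=\bm{\load}(t)$. Starting from the per-bin estimate
\[
\frac{\Ex{\Delta_{i,-}(t+1)\mid\bm{x}}}{\Phi_{i,-}(\bm{x})}\leq\alpha\cdot\bigl(\eta_i+\check{\delta}_i\bigr)+\alpha^2\cdot\bigl(\eta_i+\check{\delta}_i\bigr)^2,
\]
I would first observe that $\abs{\eta_i+\check{\delta}_i}$ is bounded by an absolute constant: $\eta_i\in\intcc{-1,1}$, and since $np_i\leq2$ while $\hat{1}$ and $\check{\alpha}/\alpha$ differ from $1$ by $\LDAUOmicron{\alpha}$, we have $\abs{\check{\delta}_i}=\LDAUOmicron{\lambda}=\LDAUOmicron{1}$. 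Hence the quadratic terms sum to at most $C\alpha^2\,\Phi_-(\bm{x})$ for an absolute constant $C$, which for $\alpha$ below a suitable constant is $\leq\tfrac12\varepsilon\alpha\lambda\,\Phi_-(\bm{x})$ and thus harmless; so the task reduces to bounding the linear part $\alpha\sum_i\Phi_{i,-}(\bm{x})\,(\eta_i+\check{\delta}_i)$.

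I would then split the bins into the \emph{near/over} bins with $x_i\geq\varnothing$ and the \emph{under} bins with $x_i<\varnothing$. For the former there are at most $n$ bins, each with $\Phi_{i,-}(\bm{x})\leq1$, and $\eta_i+\check{\delta}_i=\LDAUOmicron{\lambda}$ uniformly, so with the constants chosen appropriately — this is where the hypothesis $\alpha\leq\varepsilon/8$, together with $\hat{1},\check{1},\check{\alpha}/\alpha$ all being close to $1$, is used — their total contribution to $\Ex{\Delta_-(t+1)}$ is at most the additive term $\tfrac{\alpha\lambda n}{2}$. For the \emph{under} bins I would invoke the lower-potential side of Claim~\ref{clm:bulk_leftright}: just as for Lemma~\ref{lem:bal_cfg:bound_upper_pot}, but using that the hypothesis $\load_{n/4}(t)\geq\varnothing(t)$ places the bulk of $\Phi_-(\bm{x})$ on the large-index bins, where $np_i$ exceeds the $1$-choice value $1$ by a constant, it supplies the constant $\varepsilon>0$ together with a bulk estimate of the shape $\sum_{i:\,x_i<\varnothing}\Phi_{i,-}(\bm{x})\,(\eta_i+\check{\delta}_i)\leq-2\varepsilon\lambda\,\Phi_-(\bm{x})$. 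Adding the two groups and the harmless $\tfrac12\varepsilon\alpha\lambda\Phi_-(\bm{x})$ from the quadratic terms yields
\[
\Ex{\Delta_-(t+1)\mid\bm{\load}(t)}\leq-\varepsilon\alpha\lambda\cdot\Phi_-(\bm{\load}(t))+\frac{\alpha\lambda n}{2},
\]
as claimed.

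The step I expect to be the genuine obstacle is the treatment of the $\eta_i$ terms, since this is exactly where the lower potential behaves differently from the upper one: empty bins carry the \emph{largest} per-bin lower potential $e^{\alpha\varnothing}$, so they cannot simply be discarded, and for an empty bin $\eta_i=-\nu$ is close to $0$ — hence unhelpful — precisely when there are many empty bins. The escape is to use the hypothesis at the outset: $\load_{n/4}(t)\geq\varnothing(t)$ forces $\nu(t)\geq1/4$ whenever $\varnothing(t)\geq1$, ruling out that regime, while the residual case $\varnothing(t)<1$ is handled separately — there every $\Phi_{i,-}(\bm{x})$ is $\LDAUOmicron{1}$, so $\Phi_-(\bm{x})=\LDAUOmicron{n}$ and, by the per-bin estimate above, the expected one-round change is only $\LDAUOmicron{\alpha^2 n}$, which the additive term $\tfrac{\alpha\lambda n}{2}$ already absorbs. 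The remaining ingredients — the law of total expectation via Observation~\ref{obs:two_choice:oneround_potdrop}, the Taylor bound $e^x\leq1+x+x^2$, and a Chebyshev-sum / monotonicity argument comparing $\sum_i\Phi_{i,-}(\bm{x})\,np_i$ with $\Phi_-(\bm{x})$ — are as in the proof of Lemma~\ref{lem:bal_cfg:bound_upper_pot}, now exploiting that $\Phi_{i,-}(\bm{x})$ is \emph{increasing} in $i$ (loads are sorted in decreasing order) rather than decreasing.
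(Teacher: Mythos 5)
Your overall architecture matches the paper's: start from part (2) of Lemma~\ref{lem:pot_change}, absorb the quadratic terms into an $\LDAUOmicron{\alpha^2}\cdot\Phi_-$ error, use the hypothesis $\load_{\frac{n}{4}}(t)\geq\varnothing(t)$ to get $\Phi_{i,-}\leq1$ for $i\leq n/4$, and extract the $-\varepsilon\alpha\lambda\cdot\Phi_-$ drift from the $p_i$ part of $\check{\delta}_i$ via Equation~\eqref{heavy14} of Claim~\ref{clm:bulk_leftright}. The genuine gap is exactly at the step you flag as the obstacle, and your proposed escape does not close it. Observing $\nu(t)\geq1/4$ only gives $\eta\leq3/4$; it says nothing about the \emph{non-empty} below-average bins, which have $\eta_i=+\eta$ and can carry a constant fraction of $\Phi_-$ with individually huge $\Phi_{i,-}$. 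Their contribution to the linear part can thus be as large as $+\tfrac{3}{4}\alpha\cdot\Phi_-$, which swamps the $-2\varepsilon\alpha\lambda\cdot\Phi_-$ you obtain from the bulk estimate, since $\varepsilon$ is a small constant; Claim~\ref{clm:bulk_leftright} concerns only $\sum_i p_i\Phi_{i,-}$ and cannot supply the $\eta_i$ part of your claimed bound $\sum_{i:x_i<\varnothing}\Phi_{i,-}(\eta_i+\check{\delta}_i)\leq-2\varepsilon\lambda\Phi_-$. Similarly, your bound of $\tfrac{\alpha\lambda n}{2}$ for the bins with $x_i\geq\varnothing$ is unobtainable from a per-bin estimate: there $\eta_i+\check{\delta}_i$ can be as large as $\eta+\tfrac{5}{4}\lambda$, which is not below $\lambda/2$, and shrinking $\alpha$ does not help since both sides are linear in $\alpha$.

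The correct treatment splits at the \emph{empty/non-empty} boundary, not the over/under-average boundary: this is Claim~\ref{clm:sumetaiphi}(2). Since the non-empty bins are exactly the $\nu n$ fullest and $\Phi_{i,-}$ is non-decreasing in $i$, one has $\sum_{i\leq\nu n}\Phi_{i,-}\leq\nu\Phi_-$ and $\sum_{i>\nu n}\Phi_{i,-}\geq\eta\Phi_-$, so the first-order contributions $+\alpha\eta\sum_{\text{non-empty}}\Phi_{i,-}$ and $-\alpha\nu\sum_{\text{empty}}\Phi_{i,-}$ cancel, leaving $\sum_i\alpha\eta_i(\alpha\eta_i+1)\Phi_{i,-}\leq\alpha^2\eta\nu\cdot\Phi_-$. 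The entire $\eta_i$ contribution is therefore second order in $\alpha$ and is absorbed into the same $6\alpha^2\lambda\cdot\Phi_-$ error term as your quadratic terms; no lower bound on $\nu$ and no case distinction on $\varnothing$ is needed. The additive $\tfrac{\alpha\lambda n}{2}$ then arises solely from the correction $\sum_{i\leq n/4}\Phi_{i,-}\leq n/4$ inside Equation~\eqref{heavy14}, multiplied by $(1+2\varepsilon)\check{\alpha}\lambda$, and the final coefficient $8\alpha^2\lambda-2\varepsilon\alpha\lambda\leq-\varepsilon\alpha\lambda$ is where the hypothesis $\alpha\leq\varepsilon/8$ is actually used.
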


The next lemma derives a bound that will be used to bound the potential drop in configurations with many balls far below the average to the right.
\begin{lemma}\label{lem:potdrop_unbalanced_upper}
Consider a round $t$ and constants $\alpha\leq1/46$ ($<\ln(10/9)$) and $\varepsilon\leq1/3$.
Let $\lambda\in\intcc{1/4,1}$ and assume $\load_{\frac{3}{4}n}(t)\geq\varnothing(t)$ and $\Ex{\Delta_+(t+1)|\bm{\load}(t)}\geq-\frac{\varepsilon\alpha\lambda}{4}\cdot\Phi_+(\bm{\load}(t))$.
Then we have either $\Phi_+(\bm{\load}(t))\leq\frac{\varepsilon}{4}\cdot\Phi_-(\bm{\load}(t))$ or $\Phi(\bm{\load}(t))=\varepsilon^{-8}\cdot\LDAUOmicron{n}$.
\end{lemma}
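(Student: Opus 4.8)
The statement to prove is Lemma~\ref{lem:potdrop_unbalanced_upper}. We are in a configuration that is ``right-heavy'' in the sense that $\load_{3n/4}(t)\ge\varnothing(t)$ (so at least three quarters of the bins sit at or above the average), and yet the expected upper-potential drop fails to be substantially negative: $\Ex{\Delta_+(t+1)\mid\bm{\load}(t)}\ge -\tfrac{\varepsilon\alpha\lambda}{4}\cdot\Phi_+(\bm{\load}(t))$. The plan is to show this forces a dichotomy: either the upper potential is dominated by the lower potential ($\Phi_+\le\tfrac{\varepsilon}{4}\Phi_-$), or the total potential is only polynomially large, $\Phi=\varepsilon^{-8}\cdot\LDAUOmicron{n}$.

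**First step: a per-bin accounting of the drop via Lemma~\ref{lem:pot_change}.**

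First I would write $\Ex{\Delta_+(t+1)\mid\bm{\load}(t)}=\sum_{i}\Ex{\Delta_{i,+}(t+1)\mid\bm{\load}(t)}$ and apply Lemma~\ref{lem:pot_change}(1) bin by bin, obtaining
\begin{equation*}
\Ex{\Delta_+(t+1)\mid\bm{\load}(t)}\le\sum_{i\in\intcc{n}}\Phi_{i,+}(\bm{\load}(t))\cdot\left(-\alpha(\eta_i+\hat{\delta}_i)+\alpha^2(\eta_i+\hat{\delta}_i)^2\right).
\end{equation*}
The key is to understand the sign of $\eta_i+\hat{\delta}_i$. Recall $\hat{\delta}_i=\lambda n(\tfrac1n\check1-p_i\tfrac{\hat\alpha}{\alpha})$ with $p_i=\tfrac{2i-1}{n^2}$, so $\hat{\delta}_i$ is a decreasing affine function of $i$; it is a constant close to $\lambda$ for small $i$ and close to $-\lambda$ for $i$ near $n$. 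Meanwhile $\eta_i=\mathbbm{1}_i-\nu$ is $\eta=1-\nu\ge0$ for nonempty bins and $-\nu$ for empty bins. The hypothesis $\load_{3n/4}(t)\ge\varnothing(t)\ge0$ is used here: once $\load_{3n/4}\ge\varnothing\ge1$, say (or one splits off the trivial case $\varnothing<1$), at least $3n/4$ bins are nonempty, so $\nu\ge3/4$ and hence $\eta\le1/4$. For the bins with $i$ small (the ``overloaded'' bins contributing to $\Phi_+$), $\eta_i+\hat\delta_i$ is then bounded below by a positive constant (roughly $\lambda-\eta\ge\lambda-1/4\ge0$, and one squeezes out a genuine positive constant from $\tfrac1n\check1$ vs $p_i\tfrac{\hat\alpha}{\alpha}$ for $i\le cn$), which makes the corresponding summands negative. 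The contribution to $\Phi_+$ from bins with large $i$ — those at or barely above the average — is comparatively small; concretely, each such bin contributes at most $e^{\alpha\cdot O(1)}=\LDAUOmicron{1}$ to $\Phi_+$ if its load is within $O(1)$ of $\varnothing$. So I would split $\intcc n$ into a ``low-index'' block $L=\{i\le\beta n\}$ and a ``high-index'' block $H=\{i>\beta n\}$ for a suitable constant $\beta$, prove that $\sum_{i\in L}\Phi_{i,+}(\ldots)\le -\varepsilon\alpha\lambda\cdot\sum_{i\in L}\Phi_{i,+}$ (a genuine relative drop from the overloaded bins), and bound $\sum_{i\in H}\Phi_{i,+}(\ldots)\le 2\alpha\lambda\cdot\sum_{i\in H}\Phi_{i,+}$ crudely via Lemma~\ref{lem:two_choice_potdrop_largepot}-style estimates.

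**Second step: combining the two blocks and extracting the dichotomy.**

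Combining, $\Ex{\Delta_+(t+1)\mid\bm{\load}(t)}\le -\varepsilon\alpha\lambda\sum_{i\in L}\Phi_{i,+}+2\alpha\lambda\sum_{i\in H}\Phi_{i,+}$. The hypothesis $\Ex{\Delta_+(t+1)\mid\bm{\load}(t)}\ge-\tfrac{\varepsilon\alpha\lambda}{4}\Phi_+$ then forces $\sum_{i\in L}\Phi_{i,+}$ to be small relative to $\sum_{i\in H}\Phi_{i,+}$ — roughly $\sum_{i\in L}\Phi_{i,+}=\LDAUOmicron{\tfrac1\varepsilon\sum_{i\in H}\Phi_{i,+}}$, hence $\Phi_+=\LDAUOmicron{\tfrac1\varepsilon\sum_{i\in H}\Phi_{i,+}}$. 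Now I bound $\sum_{i\in H}\Phi_{i,+}$. Every bin in $H$ has index $>\beta n$; in a configuration where $\load_{3n/4}\ge\varnothing$, the bins of index between $\beta n$ and $3n/4$ have load $\ge\varnothing$ but, crucially, the bins near index $3n/4$ and beyond that contribute meaningfully to $\Phi_+$ must have load not much above $\varnothing$ — otherwise they would also be counted on the ``far below average to the right'' side. Here I would use a counting/averaging argument analogous to Claim~\ref{clm:bulk_leftright}: if too many of the $H$-bins had load exceeding $\varnothing+c$, then because they are balanced against the sub-average bins (of which there are at most $n/4$), some sub-average bin would be very far below $\varnothing$, forcing $\Phi_-$ — and hence $\Phi$ — to be large. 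Formally, either (i) $\sum_{i\in H}\Phi_{i,+}$ is dominated by $\Phi_-$, up to the $\varepsilon/4$ factor, giving the first alternative $\Phi_+\le\tfrac\varepsilon4\Phi_-$; or (ii) there are few very-overloaded bins in $H$, so each $H$-bin contributes at most $e^{\alpha\cdot\LDAUOmicron{\log n/\alpha}}=\poly(n)$ — more carefully $\varepsilon^{-\LDAUOmicron1}\cdot n^{\LDAUOmicron1}$, which one pins down as $\varepsilon^{-8}\cdot\LDAUOmicron n$ by tracking the constants through the $\log(n/\varepsilon)$-type bound on how far an $H$-bin can stray above $\varnothing$ before the lower potential blows up — so that $\Phi_+=\varepsilon^{-8}\cdot\LDAUOmicron n$, and then $\Phi=\Phi_++\Phi_-$ is of the same order (since in case (ii) $\Phi_-$ is also controlled).

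**Main obstacle.**

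The hard part will be the second step: converting the mere \emph{failure of a relative drop in $\Phi_+$} into quantitative control over the \emph{absolute size} of $\Phi_+$, and doing so while threading the exact constant $\varepsilon^{-8}$. The per-bin drift computation (first step) is essentially a bookkeeping exercise on Lemma~\ref{lem:pot_change} plus the observation $\nu\ge3/4$. But pinning $\Phi_+$ down requires a delicate case analysis on the load profile of the high-index bins — balancing ``how overloaded can they be'' against ``how much lower potential that overload forces via the at-most-$n/4$ sub-average bins'' — and it is exactly the interplay between the $\varepsilon/4$ slack in the hypothesis and the geometric decay of $\Phi_{i,+}$ across the index range $[\beta n,n]$ that produces the clean dichotomy. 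I expect this to mirror, but be messier than, the corresponding step in~\cite{TW14}, precisely because of the random batch size and the deletion term $\eta_i$.
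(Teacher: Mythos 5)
Your plan follows essentially the same route as the paper's proof: a per-bin application of Lemma~\ref{lem:pot_change} split at a constant-fraction index (the paper uses $n/3$), yielding $\Ex{\Delta_+\,|\,\bm{\load}}\leq-\frac{\varepsilon\alpha\lambda}{2}\Phi_++2\alpha\lambda\Phi_{>n/3,+}$ so that the hypothesis forces $\Phi_+\leq\frac{8}{\varepsilon}\Phi_{>n/3,+}$, followed by exactly the balancing argument you describe for the high-index block. The one device you leave implicit is how the dichotomy is made quantitative: with $L$ the total excess load, Jensen on the at-most-$n/4$ sub-average bins gives $\Phi_-\geq\frac{n}{4}e^{4\alpha L/n}$ while each bin of index $>n/3$ exceeds the average by at most $3L/n$, so $\Phi_+\leq\frac{16n}{3\varepsilon}e^{3\alpha L/n}$; if $\Phi_+>\frac{\varepsilon}{4}\Phi_-$ the mismatch in the exponents ($3$ versus $4$) forces $L\leq\frac{n}{\alpha}\ln\bigl(\frac{256}{3\varepsilon^2}\bigr)$, which yields the $\varepsilon^{-8}\cdot\LDAUOmicron{n}$ bound you were aiming for.
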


The next lemma derives a bound that will be used to bound the potential drop in configurations with many balls far above the average to the left.
\begin{lemma}\label{lem:potdrop_unbalanced_lower}
Consider a round $t$ and constants $\alpha\leq1/32$ ($<\ln(10/9)$) and $\varepsilon\leq1$.
Let $\lambda\in\intcc{1/4,1}$ and assume $\load_{\frac{n}{4}}(t)\leq\varnothing(t)$ and $\Ex{\Delta_-(t+1)|\bm{\load}(t)}\geq-\frac{\varepsilon\alpha\lambda}{4}\cdot\Phi_-(\bm{\load}(t))$.
Then we have either $\Phi_-(\bm{\load}(t))\leq\frac{\varepsilon}{4}\cdot\Phi_+(\bm{\load}(t))$ or $\Phi(\bm{\load}(t))=\varepsilon^{-8}\cdot\LDAUOmicron{n}$.
\end{lemma}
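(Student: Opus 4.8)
The plan is to establish the contrapositive, following the same template as the upper‑side counterpart Lemma~\ref{lem:potdrop_unbalanced_upper} stated just above. Suppose neither conclusion holds: $\Phi_-(\bm{\load}(t)) > \tfrac{\varepsilon}{4}\Phi_+(\bm{\load}(t))$ \emph{and} $\Phi(\bm{\load}(t))$ exceeds $Cn/\varepsilon^8$ for the relevant absolute constant $C$. I want to derive $\Ex{\Delta_-(t+1)|\bm{\load}(t)} < -\tfrac{\varepsilon\alpha\lambda}{4}\Phi_-(\bm{\load}(t))$, contradicting the hypothesis. Two preliminary remarks. First, the two failures combine: from $\Phi_+\le\tfrac{4}{\varepsilon}\Phi_-$ and $\varepsilon\le1$ one gets $\Phi_- \ge \tfrac{\varepsilon}{8}\Phi \ge \tfrac{C}{8}\,n/\varepsilon^7$, i.e.\ $\Phi_-(\bm{\load}(t))$ is superlinear in $n$, so it suffices to rule this out. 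Second, $\Phi_+(\bm{\load}(t)) < \tfrac{4}{\varepsilon}\Phi_-(\bm{\load}(t))$ caps $\Phi_+$, and since $\Phi_{i,+}(\bm{\load})\cdot\Phi_{i,-}(\bm{\load}) = 1$ for every bin this limits how far above the average any single bin can sit. From now on I suppress the round argument, as is done elsewhere in this section.

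The first real step is to unfold $\Ex{\Delta_-(t+1)|\bm{\load}}$ by summing the per‑bin estimate of Lemma~\ref{lem:pot_change} over all $i$ and, using $\abs{\eta_i+\check{\delta}_i}\le1+2\lambda\le3$ to fold the quadratic term into a $9\alpha^2\Phi_-$ remainder (harmless since $\alpha$ is a small constant), reducing the whole lemma to the bound
\begin{equation*}
S\coloneqq\sum_{i\in\intcc{n}}\Phi_{i,-}\cdot\bigl(\eta_i+\check{\delta}_i\bigr)\le -c_0\lambda\cdot\Phi_-
\end{equation*}
for an absolute constant $c_0>0$; this is the point where $\alpha\le1/32$ and $\lambda\ge1/4$ enter. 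The estimate of $S$ rests on three structural facts. (i) $\Phi_{i,-}$ is non‑decreasing in $i$; hence with $L\coloneqq\{1,\dots,n/4\}$ and $R\coloneqq\{n/4+1,\dots,n\}$ one has $\min_{i\in R}\Phi_{i,-}\ge\max_{i\in L}\Phi_{i,-}$ and $\abs{R} = 3\abs{L}$, so $\sum_{i\in R}\Phi_{i,-}\ge3\sum_{i\in L}\Phi_{i,-}$ and therefore $\Phi_-^R\coloneqq\sum_{i\in R}\Phi_{i,-}\ge\tfrac34\Phi_-$. (ii) $\check{\delta}_i = \lambda\bigl(\hat{1} - \tfrac{2i-1}{n}\cdot\tfrac{\check{\alpha}}{\alpha}\bigr)$ decreases in $i$ from $\approx\lambda$ to $\approx-\lambda$, and $\eta_i$ is non‑increasing in $i$ (it equals $\eta\ge0$ on the non‑empty prefix of bins and drops to $-\nu$ on the empty suffix); so $\Phi_{i,-}$ and $\eta_i+\check{\delta}_i$ are oppositely sorted and Chebyshev's sum inequality is available on every suffix of indices. (iii) The identities $\sum_i\eta_i=0$ and $\sum_i\check{\delta}_i=\lambda n(\hat{1}-\check{\alpha}/\alpha)=\LDAUOmicron{\alpha n}$ hold, and moreover $\sum_{i:\,\load_i=0}(\eta_i+\check{\delta}_i) = -n\eta\nu(1+\lambda)+\LDAUOmicron{\alpha n}$, i.e.\ up to $\LDAUOmicron{\alpha n}$ the empty‑bin partial sum is the negative of the non‑empty one.

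Now I would split $S = S_L + S_R$ at the index $n/4$ the hypothesis $\load_{n/4}\le\varnothing$ hands us. On $R$ the coefficients $\eta_i+\check{\delta}_i$ are the smallest, and fact (iii) gives $\sum_{i\in R}(\eta_i+\check{\delta}_i)\le -c_1\lambda n$; combined with facts (i)--(ii) and Chebyshev this yields $S_R\le -c_2\lambda\Phi_-^R\le -\tfrac34 c_2\lambda\Phi_-$. For $S_L$, note that $\load_{n/4}\le\varnothing$ forces \emph{every} above‑average bin to lie in $L$, so those contribute at most $\abs{L} = \LDAUOmicron{n}$ to $S_L$; the below‑average bins of $L$ are where the work is, and there I would (a) run a second Chebyshev estimate \emph{inside} $L$, using that these bins' heavy‑weight members sit near index $n/4$ where $\check{\delta}_i$ has already fallen to $\approx\lambda/2$, and (b) when $\eta>0$, invoke the empty‑bin cancellation of fact (iii): an individual empty bin may carry a positive $\eta_i+\check{\delta}_i$, but all empty bins carry the same near‑maximal weight $e^{\alpha\varnothing}$, so their joint contribution is non‑positive — and combining this with the cap on $\Phi_+$ from the negated first conclusion keeps $S_L\le \tfrac12 c_2\lambda\Phi_-^R + \LDAUOmicron{n}$. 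Putting the two blocks together gives $S\le \LDAUOmicron{n} - c_0\lambda\Phi_-$; feeding this back into the drop and using $\Phi_-\ge\tfrac{C}{8}\,n/\varepsilon^7$ makes the $\LDAUOmicron{\alpha n}$ term negligible against $\varepsilon\alpha\lambda\Phi_-$, so $\Ex{\Delta_-(t+1)|\bm{\load}}$ falls below $-\tfrac{\varepsilon\alpha\lambda}{4}\Phi_-$, the contradiction sought. (If $\Phi_-$ is not superlinear we are already in the second conclusion, so the lemma holds.)

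The hard part is exactly items (a)--(b): the bins sitting only \emph{moderately} below the average. Those with index just past $n/4$ have $\check{\delta}_i$ only mildly negative, and once many bins are empty the combined coefficient $\eta_i+\check{\delta}_i$ of a below‑average non‑empty bin can even be positive, while such a bin's lower‑potential weight can rival that of a genuinely empty bin; a crude sign/magnitude bound on $S_L$ is therefore positive and useless. Forcing the (possibly positive) contribution of these bins to be swamped by the strongly negative contribution of the extreme‑right and empty bins is what makes all of the $\load_{n/4}\le\varnothing$ hypothesis, the two Chebyshev rearrangements, the empty/non‑empty cancellation, and the $\Phi_+$ cap pull their weight simultaneously; and squeezing the constants so that the net coefficient of $\alpha\Phi_-$ in the drop stays negative and beats $\tfrac{\varepsilon\lambda}{4}$ (even after the $9\alpha^2$ slack, for all $\lambda\in\intcc{1/4,1}$ and $\varepsilon\le1$) is the delicate bookkeeping, mirroring what is done for Lemma~\ref{lem:potdrop_unbalanced_upper}.
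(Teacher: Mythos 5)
Your high-level plan (prove the contrapositive) is logically equivalent to what the paper does, and several of your structural observations are sound -- e.g.\ Chebyshev's sum inequality on the suffix $R=\{n/4+1,\dots,n\}$ does give $S_R\leq-\tfrac{\lambda}{4}\Phi_-^R$, since $\sum_{i\in R}(\eta_i+\check{\delta}_i)\leq-\tfrac{3}{16}\lambda n$ and the two sequences are oppositely sorted. But there is a genuine gap at exactly the point you flag as ``the hard part'': you never establish the bound $S_L\leq\tfrac12 c_2\lambda\Phi_-^R+\LDAUOmicron{n}$, and the tools you propose cannot establish it. The crude estimate on $L=\{1,\dots,n/4\}$ gives $S_L\leq\tfrac94\cdot\tfrac n4\cdot\Phi_{n/4,-}\leq\tfrac34\Phi_-$ (using only that $\Phi_{i,-}$ is non-decreasing), which swamps the $-\tfrac{3\lambda}{16}\Phi_-$ gain from $S_R$; and neither the ``empty-bin cancellation'' nor the \emph{upper} cap $\Phi_+<\tfrac4\varepsilon\Phi_-$ from the negated first conclusion closes this hole, because the obstruction is configurations in which $\Phi_-$ is carried by $\Theta(n)$ bins of \emph{moderate, nearly equal} deficit (there $S$ can be as small in magnitude as $\LDAUomicron{\lambda\Phi_-}$, e.g.\ $\approx-\tfrac34\lambda\tfrac kn\Phi_-$ when only $k=\LDAUomicron{n}$ bins carry the excess). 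Whether such configurations are compatible with the negation of \emph{both} conclusions is precisely the quantitative content you are missing, and it cannot be decided by rearrangement inequalities alone.

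What the paper does instead -- and what your sketch omits entirely -- is a two-sided comparison through the excess load $L\coloneqq\sum_i\max(\varnothing-\load_i,0)$. First, the hypothesis $\load_{n/4}\leq\varnothing$ plus Jensen's inequality gives the \emph{lower} bound $\Phi_+\geq\tfrac n4 e^{4\alpha L/n}$ (at most $n/4$ bins carry the whole excess $L$). Second, the drift hypothesis is used \emph{forward} (not contrapositively): splitting at $2n/3$, where $\check{\delta}_i\leq-\lambda/6+2\alpha\lambda$ is uniformly negative, the non-drop assumption forces $\Phi_-\leq\tfrac8\varepsilon\Phi_{\leq2n/3,-}\leq\tfrac{16n}{3\varepsilon}e^{3\alpha L/n}$, since each bin $i\leq2n/3$ has deficit at most $3L/n$. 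Third, feeding these into $\Phi_->\tfrac\varepsilon4\Phi_+$ pits the exponent $3\alpha L/n$ against $4\alpha L/n$ and yields $L<\tfrac n\alpha\ln\bigl(\tfrac{256}{3\varepsilon^2}\bigr)$, whence $\Phi=\varepsilon^{-8}\cdot\LDAUOmicron{n}$. The exponent gap $4$ versus $3$ (i.e.\ the interplay between the split point $2n/3$ and the hypothesis about the $n/4$-th bin) is the linchpin; without it the dichotomy in the lemma statement cannot be derived. Your proposal uses the $\load_{n/4}\leq\varnothing$ hypothesis only to discard the above-average bins as an $\LDAUOmicron{n}$ term, and never extracts from it the Jensen lower bound on $\Phi_+$, so the argument as sketched does not go through.
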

Putting all these lemmas together, we can derive the following bound on the potential change during a single round.
\begin{lemma}\label{lem:two_choice:onestep_phi_bound}
Consider an arbitrary round $t+1$ of the 2-choice process and the constants $\varepsilon$ (from Claim~\ref{clm:bulk_leftright}) and $\alpha\leq\min(\ln(10/9),\varepsilon/8)$.
For $\lambda\in\intcc{1/4,1}$ we have
\begin{equation}
\Ex{\Phi(\bm{\load}(t+1))|\bm{\load}(t)}\leq\left(1-\frac{\varepsilon\alpha\lambda}{4}\right)\cdot\Phi(\bm{\load}(t))+\varepsilon^{-8}\cdot\LDAUOmicron{n}
.
\end{equation}
\end{lemma}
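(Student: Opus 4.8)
The plan is a case analysis on the coarse shape of $\bm{\load}(t)$, steered by where the quantiles $\load_{\frac{n}{4}}(t)$ and $\load_{\frac{3}{4}n}(t)$ lie relative to the average $\varnothing(t)$; throughout I write $\Phi=\Phi_++\Phi_-$ and $\Delta(t+1)=\Delta_+(t+1)+\Delta_-(t+1)$, so that the claim is equivalent to $\Ex{\Delta(t+1)|\bm{\load}(t)}\le-\tfrac{\varepsilon\alpha\lambda}{4}\cdot\Phi(\bm{\load}(t))+\varepsilon^{-8}\cdot\LDAUOmicron{n}$. The first observation is that, since the load vector is sorted non-increasingly, $\load_{\frac{n}{4}}(t)\ge\load_{\frac{3}{4}n}(t)$, so the two ``balanced'' hypotheses used below cannot fail simultaneously: $\load_{\frac{3}{4}n}(t)>\varnothing(t)$ already forces $\load_{\frac{n}{4}}(t)>\varnothing(t)$. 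Hence every configuration lies in exactly one of three cases: (A) $\load_{\frac{3}{4}n}(t)\le\varnothing(t)$ and $\load_{\frac{n}{4}}(t)\ge\varnothing(t)$; (B) $\load_{\frac{n}{4}}(t)<\varnothing(t)$, whence also $\load_{\frac{3}{4}n}(t)<\varnothing(t)$, so the upper side is balanced but the lower side need not be; (C) $\load_{\frac{3}{4}n}(t)>\varnothing(t)$, whence also $\load_{\frac{n}{4}}(t)>\varnothing(t)$, so the lower side is balanced but the upper side need not be.

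In Case~(A) I would simply add the two balanced estimates: Lemma~\ref{lem:bal_cfg:bound_upper_pot} gives $\Ex{\Delta_+|\bm{\load}(t)}\le-\varepsilon\alpha\lambda\cdot\Phi_++2\alpha\lambda n$ and Lemma~\ref{lem:bal_cfg:bound_lower_pot} gives $\Ex{\Delta_-|\bm{\load}(t)}\le-\varepsilon\alpha\lambda\cdot\Phi_-+\tfrac{\alpha\lambda n}{2}$, so $\Ex{\Delta|\bm{\load}(t)}\le-\varepsilon\alpha\lambda\cdot\Phi+\tfrac52\alpha\lambda n$; since $\alpha,\lambda\le1$ and $\varepsilon^{-8}\ge1$ this is at most $-\tfrac{\varepsilon\alpha\lambda}{4}\cdot\Phi+\varepsilon^{-8}\cdot\LDAUOmicron{n}$, as required.

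Cases~(B) and~(C) are mirror images of one another under swapping $\Phi_+$ with $\Phi_-$, Lemma~\ref{lem:bal_cfg:bound_upper_pot} with Lemma~\ref{lem:bal_cfg:bound_lower_pot}, and Lemma~\ref{lem:potdrop_unbalanced_upper} with Lemma~\ref{lem:potdrop_unbalanced_lower}, so I only describe~(C). There the lower side is balanced, so Lemma~\ref{lem:bal_cfg:bound_lower_pot} already gives $\Ex{\Delta_-|\bm{\load}(t)}\le-\varepsilon\alpha\lambda\cdot\Phi_-+\tfrac{\alpha\lambda n}{2}\le-\tfrac{\varepsilon\alpha\lambda}{4}\cdot\Phi_-+\LDAUOmicron{n}$. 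For the upper side I split on the hypothesis of Lemma~\ref{lem:potdrop_unbalanced_upper}. If $\Ex{\Delta_+|\bm{\load}(t)}<-\tfrac{\varepsilon\alpha\lambda}{4}\cdot\Phi_+$, adding this to the lower estimate closes the case. Otherwise Lemma~\ref{lem:potdrop_unbalanced_upper} applies and leaves two possibilities. If $\Phi(\bm{\load}(t))=\varepsilon^{-8}\cdot\LDAUOmicron{n}$, use the crude bound $\Ex{\Delta_R|\bm{\load}(t)}<2\alpha\lambda\cdot\Phi_R$ of Lemma~\ref{lem:two_choice_potdrop_largepot} for both $R\in\set{+,-}$, so $\Ex{\Phi(\bm{\load}(t+1))|\bm{\load}(t)}\le(1+2\alpha\lambda)\cdot\Phi(\bm{\load}(t))\le(1-\tfrac{\varepsilon\alpha\lambda}{4})\cdot\Phi(\bm{\load}(t))+3\cdot\Phi(\bm{\load}(t))$, and the term $3\cdot\Phi(\bm{\load}(t))$ is itself $\varepsilon^{-8}\cdot\LDAUOmicron{n}$. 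If instead $\Phi_+\le\tfrac{\varepsilon}{4}\cdot\Phi_-$, bound $\Ex{\Delta_+|\bm{\load}(t)}<2\alpha\lambda\cdot\Phi_+\le\tfrac{\varepsilon\alpha\lambda}{2}\cdot\Phi_-$ via Lemma~\ref{lem:two_choice_potdrop_largepot} and add the balanced lower estimate to get $\Ex{\Delta|\bm{\load}(t)}\le-\varepsilon\alpha\lambda\cdot\Phi_-+\tfrac{\varepsilon\alpha\lambda}{2}\cdot\Phi_-+\LDAUOmicron{n}=-\tfrac{\varepsilon\alpha\lambda}{2}\cdot\Phi_-+\LDAUOmicron{n}$; since $\Phi_+\le\tfrac{\varepsilon}{4}\cdot\Phi_-\le\Phi_-$ forces $\Phi_-\ge\tfrac12\cdot\Phi$, this is $\le-\tfrac{\varepsilon\alpha\lambda}{4}\cdot\Phi+\LDAUOmicron{n}$. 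Across all branches the additive contributions are genuine $\LDAUOmicron{n}$ terms plus, only when $\Phi(\bm{\load}(t))=\varepsilon^{-8}\cdot\LDAUOmicron{n}$, a term $\LDAUOmicron{1}\cdot\Phi(\bm{\load}(t))$; both collect into $\varepsilon^{-8}\cdot\LDAUOmicron{n}$.

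The only non-mechanical point is to verify that the single hypothesis $\alpha\le\min(\ln(10/9),\varepsilon/8)$ of this lemma is strong enough to meet the hypotheses of all of Lemmas~\ref{lem:two_choice_potdrop_largepot},~\ref{lem:bal_cfg:bound_upper_pot},~\ref{lem:bal_cfg:bound_lower_pot},~\ref{lem:potdrop_unbalanced_upper} and~\ref{lem:potdrop_unbalanced_lower}; this requires $\varepsilon/8\le 1/46$, which is harmless since the constant $\varepsilon$ of Claim~\ref{clm:bulk_leftright} may be chosen smaller. The genuine obstacle, which is exactly what Lemmas~\ref{lem:potdrop_unbalanced_upper} and~\ref{lem:potdrop_unbalanced_lower} are designed to overcome, is that in a highly lopsided configuration the heavy side of the potential need not contract at rate $\varepsilon\alpha\lambda$ at all; those two lemmas show that this can happen only when the heavy side is anyway negligible compared to the light side or compared to $n$, and then the crude increase bound of Lemma~\ref{lem:two_choice_potdrop_largepot} suffices on it.
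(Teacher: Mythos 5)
Your proof is correct and follows essentially the same route as the paper's: the identical three-way case split on where $\load_{\frac{n}{4}}(t)$ and $\load_{\frac{3}{4}n}(t)$ sit relative to $\varnothing(t)$, with the same deployment of Lemmas~\ref{lem:bal_cfg:bound_upper_pot}, \ref{lem:bal_cfg:bound_lower_pot}, \ref{lem:potdrop_unbalanced_upper}, \ref{lem:potdrop_unbalanced_lower} and~\ref{lem:two_choice_potdrop_largepot} in each branch. Your closing remark --- that $\varepsilon$ may need to be shrunk so that $\alpha\leq\varepsilon/8$ also meets the hypotheses $\alpha\leq1/46$ and $\alpha\leq1/32$ of the two ``unbalanced'' lemmas, which is harmless since Claim~\ref{clm:bulk_leftright} survives decreasing $\varepsilon$ --- is a detail the paper's own proof glosses over.
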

We can use this result in a simple induction to prove Lemma~\ref{lem:two_choice:smoothness}.
\begin{proof}[Proof of Lemma~\ref{lem:two_choice:smoothness}]
Lemma~\ref{lem:two_choice:onestep_phi_bound} gives us a $\gamma<1$ and $c>0$ such that $\Ex{\Phi(\bm{\load}(t+1))|\bm{\load}(t)}\leq\gamma\cdot\Phi(\bm{\load}(t))+c$ holds for all rounds $t\geq0$.
Taking the expected value on both sides yields $\Ex{\Phi(\bm{\load}(t+1))}\leq\gamma\cdot\Ex{\Phi(\bm{\load}(t))}+c$.
Using induction and the linearity of the expected value, it is easy to check that $\Ex{\Phi(\bm{\load}(t))}\leq\frac{c}{1-\gamma}$ solves this recursion.
Using the values from Lemma~\ref{lem:two_choice:onestep_phi_bound} for $\gamma$ and $c$ (substituting $\varepsilon'$ for $\varepsilon$) we get $\Ex{\Phi(\bm{\load}(t))}\leq\frac{\smash{4\varepsilon'^{-8}}}{\varepsilon'\alpha\lambda}\cdot\LDAUOmicron{n}$.
The lemma's statement follows for the constant $\varepsilon=\LDAUOmicron{\varepsilon'^{-9}/(\alpha\lambda)}$.
\end{proof}

\subsection{Bounding the Maximum Load}\label{sec:two_choice:maxload}
The goal of this section is to prove Theorem~\ref{thm:two_choice:maxload}.
Remember the definitions of $\Phi(\bm{x})$ and $\Psi(\bm{x})$ from Equation~\eqref{eqn:two_choice:potentials}.
For any fixed round $t$, we will prove that (w.h.p.) $\Psi(\bm{\load}(t))=\LDAUOmicron{n\cdot\ln n}$, so that the average load is $\varnothing=\LDAUOmicron{\ln n}$.
Using a union bound and Lemma~\ref{lem:two_choice:smoothnessWHP}, we see that (w.h.p.) the the maximum load at the end of round $t$ is bounded by $\varnothing+\LDAUOmicron{\ln n}=\LDAUOmicron{\ln n}$.

It remains to prove a high probability bound on $\Psi(\bm{\load}(t))$ for arbitrary $t$.
To get an intuition for our analysis, consider the toy case $t=\poly(n)$ and assume that exactly $\lambda\cdot n\leq n$ balls are thrown each round.
Here, we can combine Observation~\ref{obs:two_choice:avg_distance} and Lemma~\ref{lem:two_choice:smoothness} to bound (w.h.p.) the load difference between any pair of bins and for all $t'<t$ by $\LDAUOmicron{\ln n}$ (via a union bound over $\poly(n)$ rounds).
Using the combinatorial observation that, while the load distance to the average is bounded by some $b\geq0$, the bound $\Psi\leq2b\cdot n$ is invariant under the 2-choice process (Lemma~\ref{lem:two_choice:total_load_for_small_phi}), we get for $b=\LDAUOmicron{\ln n}$ that $\Psi(\bm{\load}(t))\leq2b\cdot n=\LDAUOmicron{n\cdot \ln n}$, as required.
The case for $t=\LDAUomega{\poly(n)}$ is considerably more involved.
In particular, the fact that the number of balls in the system is only guaranteed to decrease when the total load is high \emph{and} the load distance to the average is low makes it challenging to design a suitable potential function that drops fast enough when it is high.
Thus, we deviate from this standard technique and elaborate on the idea of the toy case:
Instead of bounding (w.h.p.) the load difference between any pair of bins by $\LDAUOmicron{\ln n}$ for all $t'<t$ (which is not possible for $t\gg\poly(n)$), we prove (w.h.p.) an \emph{adaptive bound} of $\LDAUOmicron{\ln(t-t')\cdot f(\lambda)}$ for all $t'<t$, where $f$ is a suitable function (Lemma~\ref{lem:two_choice:adaptive_avg_bound}).
Then we consider the last round $t''<t$ with an empty bin.
Observation~\ref{obs:two_choice:avg_distance} yields a bound of $\Psi(\bm{\load}(t''))=2\cdot\LDAUOmicron{\ln(t-t'')\cdot f(\lambda)}\cdot n$ on the total load at time $t''$.
Using the same combinatorial observation as in the toy case, we get that (w.h.p.) $\Psi(\bm{\load}(t))\leq\Psi(\bm{\load}(t''))=2\cdot\LDAUOmicron{\ln(t-t'')\cdot f(\lambda)}\cdot n$.
The final step is to show that the load at time $t''$ (which is logarithmic in $t-t''$) decreases \emph{linearly} in $t-t''$, showing that the time interval $t-t''$ cannot be too large (or we would get a negative load at time $t$).
See Figure~\ref{fig:proofbypic}
for an illustration.

\begin{figure}
{\centering\includegraphics{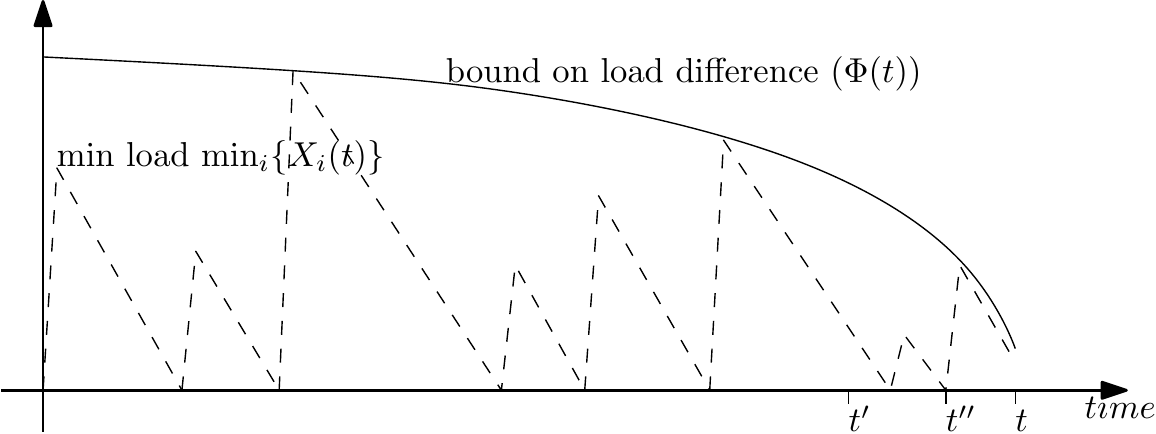}\\}
\caption{%
	To bound the system load at time $t$, consider the minimum load and our bound on the load difference over time.
	There was a last time $t''$ when there was an empty bin.
	The system load can only increase if there is an empty bin, and this increase is bounded by our bound on the load difference.
	Exploiting that the system load decreases linearly in time while every increase is bounded by our logarithmic bound on the load difference, we find a small interval $\intcc{t',t}$ containing $t''$.
}
\label{fig:proofbypic}
\end{figure}

\begin{lemma}\label{lem:two_choice:total_load_for_small_phi}
Let $b\geq0$ and consider a configuration $\bm{x}$ with $\Psi(\bm{x})\leq2b\cdot n$ and $\Phi(\bm{x})\leq e^{\alpha\cdot b}$.
Let $\bm{x'}$ denote the configuration after one step of the 2-choice process.
Then $\Psi(\bm{x'})\leq2b\cdot n$.
\end{lemma}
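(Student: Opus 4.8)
The idea is to track the total load $\Psi$ through a single round and to split according to whether the starting configuration $\bm{x}$ has an empty bin. First I would write down the exact one‑round behaviour of $\Psi$: in each round every non‑empty bin deletes exactly one ball, and then $K$ new balls are placed, where $K$ is at most $n$ because there are only $n$ generators, each producing at most one ball. Hence, if $\bm{x}'$ is the configuration after one step, $\Psi(\bm{x}') = \Psi(\bm{x}) - \sum_{i} \mathbbm{1}_i(\bm{x}) + K \le \Psi(\bm{x}) - \sum_i \mathbbm{1}_i(\bm{x}) + n$, where $\sum_i \mathbbm{1}_i(\bm{x})$ is the number of non‑empty bins of $\bm{x}$. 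Note this uses nothing about the 2‑choice rule — only that at most $n$ balls arrive and every non‑empty bin loses one — so the statement actually holds for any allocation process.

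Next I would split into two cases. If $\bm{x}$ has no empty bin, then $\sum_i \mathbbm{1}_i(\bm{x}) = n$, so $\Psi(\bm{x}') \le \Psi(\bm{x}) \le 2bn$ directly from the hypothesis. If $\bm{x}$ has an empty bin $i_0$, then $\min_i x_i = 0$; since $\Phi(\bm{x}) \le e^{\alpha b}$, Observation~\ref{obs:two_choice:avg_distance} gives $\abs{x_i - \varnothing} \le b$ for all $i$, and applying this to $i_0$ yields $\varnothing \le b$, hence $\Psi(\bm{x}) = n\varnothing \le bn$. Dropping the (non‑negative) deletion term, $\Psi(\bm{x}') \le \Psi(\bm{x}) + n \le (b+1)n$, so it suffices to check $(b+1)n \le 2bn$, i.e.\ $b \ge 1$.

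The only point needing a small argument is thus $b \ge 1$, and this is forced by the hypotheses themselves: by convexity (Jensen applied to the exponents $\alpha(x_i - \varnothing)$, which sum to zero) each of the two sums defining $\Phi$ is at least $n$, so $\Phi(\bm{x}) \ge 2n$; combined with $\Phi(\bm{x}) \le e^{\alpha b}$ and $\alpha \le 1$ this gives $b \ge \ln(2n)/\alpha \ge \ln(2n) \ge 1$ for $n \ge 2$ (the case $n=1$ being trivial). There is no genuine obstacle here; the mild subtlety is simply recognising that the smoothness hypothesis is strong enough both to cap $\Psi(\bm{x})$ by $bn$ in the empty‑bin case and, independently, to force $b \ge 1$, so that the at‑most‑$n$ incoming balls fit inside the remaining slack $2bn - \Psi(\bm{x})$.
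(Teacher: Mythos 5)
Your proof is correct and follows essentially the same route as the paper: split on whether $\bm{x}$ has an empty bin, and in the empty-bin case use Observation~\ref{obs:two_choice:avg_distance} to cap the starting load so that the at most $n$ arriving balls fit under $2bn$. The only cosmetic difference is that you bound $\varnothing\leq b$ and make the implicit requirement $b\geq1$ explicit via $\Phi(\bm{x})\geq2n$, whereas the paper bounds $\max_i x_i\leq2b$ and also credits the deletions; both are fine.
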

\begin{proof}
We distinguish two cases:
If there is no empty bin, then all $n$ bins delete one ball.
Since the maximum number of new balls is $n$, the number of balls cannot increase.
That is, we have $\Psi(\bm{x'})\leq\Psi(\bm{x})\leq2b\cdot n$.
Now consider the case that there is at least one empty bin.
Let $\eta\in\intoc{0,1}$ denote the fraction of empty bins (i.e., there are exactly $\eta\cdot n>0$ empty bins).
Since the minimal load is zero, Observation~\ref{obs:two_choice:avg_distance} implies $\max_ix_i\leq2b$.
Thus, the total number of balls in configuration $\bm{x}$ is at most $(1-\eta)n\cdot2b$.
Exactly $(1-\eta)n$ balls are deleted (one from each non-empty bin) and at most $n$ new balls enter the system.
We get $\Psi(\bm{x'})\leq(1-\eta)n\cdot2b-(1-\eta)n+n=(1-\eta)n\cdot(2b-1)+n\leq2b\cdot n$.
\end{proof}
\begin{lemma}\label{lem:two_choice:adaptive_avg_bound}
Let $\lambda\in\intco{1/4,1}$. Fix a round $t$.
For $i\in\N$ with $t-i\cdot\frac{8\log n}{1-\lambda}\geq0$ define $\mathcal{I}_i\coloneqq\intcc{t-i\cdot\frac{8\ln n}{1-\lambda},t}$.
Let $Y_i$ be the number of balls which spawn in $\mathcal{I}_i$.
\begin{enumerate}
\item Define the (good) smooth event $\mathcal{S}_t\coloneqq\bigcap_{t'<t}\bigl(\Phi(\bm{\load}(t'))\leq\abs{t-t'}^2\cdot n^2\bigr)$.
	Then $\Pr{\mathcal{S}_t}=1-\LDAUOmicron[small]{n^{-1}}$.
\item Define the (good) bounded balls event $\mathcal{B}_t\coloneqq\bigcap_i\bigl(Y_i\leq\frac{1+\lambda}{2}\cdot\abs{\mathcal{I}_i}\cdot n\bigr)$.
	Then $\Pr{\mathcal{B}_t}=1-\LDAUOmicron[small]{n^{-1}}$.
\end{enumerate}
\end{lemma}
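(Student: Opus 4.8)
The plan is to prove the two statements separately; both are union bounds over the (countably many) relevant time indices, combined with a tail bound that decays fast enough to beat the number of terms. For part (a), I would fix $t' < t$ with $|t-t'| = s$ and bound $\Pr{\Phi(\bm{\load}(t')) > s^2 n^2}$. The natural tool is Lemma~\ref{lem:two_choice:smoothness}, which gives $\Ex{\Phi(\bm{\load}(t'))} \leq n/\varepsilon$ for every fixed round $t'$; Markov's inequality then yields $\Pr{\Phi(\bm{\load}(t')) > s^2 n^2} \leq \frac{1}{\varepsilon s^2 n}$. Summing over all $s \geq 1$ gives $\sum_{s\geq 1} \frac{1}{\varepsilon s^2 n} = \frac{\pi^2}{6\varepsilon n} = \LDAUOmicron{n^{-1}}$, which is exactly the claimed bound on $\Pr{\overline{\mathcal{S}_t}}$. (Here I use that for $t' < 0$ there is nothing to bound since the process starts empty, or one simply restricts the intersection to $t' \geq 0$.)

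For part (b), fix $i$ and note that $Y_i$ is a sum of $|\mathcal{I}_i| \cdot n$ independent $\Bernoulli(\lambda)$ variables (one per generator per round in the window), so $\Ex{Y_i} = \lambda \cdot |\mathcal{I}_i| \cdot n$. I want $\Pr{Y_i > \frac{1+\lambda}{2}|\mathcal{I}_i| n}$, i.e.\ a deviation of a multiplicative factor $\frac{1+\lambda}{2\lambda} = 1 + \frac{1-\lambda}{2\lambda}$ above the mean. By a Chernoff bound, this probability is at most $\exp\bigl(-\Omega\bigl((\frac{1-\lambda}{\lambda})^2 \cdot \lambda |\mathcal{I}_i| n\bigr)\bigr) = \exp\bigl(-\Omega\bigl(\frac{(1-\lambda)^2}{\lambda} |\mathcal{I}_i| n\bigr)\bigr)$. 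Now $|\mathcal{I}_i| = i \cdot \frac{8\ln n}{1-\lambda}$, so the exponent is $-\Omega\bigl(\frac{(1-\lambda)^2}{\lambda} \cdot i \cdot \frac{8\ln n}{1-\lambda} \cdot n\bigr) = -\Omega\bigl(i \cdot n \ln n\bigr)$ (using $\lambda \leq 1$ and $1-\lambda \leq 1$, the factor $(1-\lambda)$ in numerator and the $\lambda$ in the denominator are both $\LDAUTheta{1}$-bounded only from above, so more carefully the exponent is $-\Omega(i \ln n \cdot (1-\lambda) n) $ — but since $|\mathcal{I}_i|\geq \frac{8\ln n}{1-\lambda}$ already the window length compensates). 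The cleanest route: the smallest window $\mathcal{I}_1$ has $\Ex{Y_1} \geq 2\lambda n \ln n \geq \frac{n\ln n}{2}$ (as $\lambda \geq 1/4$), so each individual failure probability is $n^{-\Omega(i)}$, and $\sum_{i\geq 1} n^{-\Omega(i)} = \LDAUOmicron{n^{-1}}$, as required.

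The main obstacle is bookkeeping rather than conceptual: making sure the Chernoff exponent in part (b) genuinely scales with $i$ so that the union bound over all $i \geq 1$ converges, and doing so uniformly in $\lambda \in \intco{1/4,1}$ — in particular checking that the $\frac{1}{1-\lambda}$ blow-up in $|\mathcal{I}_i|$ exactly cancels the $(1-\lambda)^2$ loss from the relative Chernoff deviation, leaving a clean $\Omega(i \cdot (1-\lambda) \cdot n \ln n)$-sized (hence $\geq \Omega(i \ln n)$, since $1-\lambda$ could be tiny — here one must be careful and instead observe the deviation is measured against a mean of size $\Theta(\lambda |\mathcal{I}_i| n) = \Theta(i n \ln n)$ and the relative gap is $\Theta\bigl(\tfrac{1-\lambda}{\lambda}\bigr)$, giving exponent $\Theta\bigl(\tfrac{(1-\lambda)^2}{\lambda^2}\cdot \lambda i n \ln n\bigr) = \Theta\bigl(\tfrac{(1-\lambda)^2}{\lambda} i n\ln n\bigr) = \Omega(i \ln n \cdot (1-\lambda) n)$; since $n(1-\lambda)\geq 1$ would be needed — if $1-\lambda$ is exponentially small this still works because $n \geq 1$ and the bound $\exp(-\Omega(i\ln n \cdot n(1-\lambda)))$ with the window already absorbing one $1/(1-\lambda)$). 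To sidestep this delicacy entirely, I would simply use that $|\mathcal{I}_i|\cdot n \cdot \lambda$ — the mean — is at least $2in\ln n$, and a relative-error Chernoff bound with relative error bounded below by a constant fraction of $\tfrac{1-\lambda}{2\lambda}$; but since I only need the additive gap $\tfrac{1-\lambda}{2}|\mathcal{I}_i|n = 4 i n \ln n$, the absolute-deviation Chernoff bound $\Pr{Y_i - \Ex{Y_i} \geq d} \leq \exp(-d^2/(2\Ex{Y_i} + 2d/3))$ with $d = 4in\ln n$ and $\Ex{Y_i} \leq |\mathcal{I}_i| n \leq \tfrac{8in\ln n}{1-\lambda}$ gives exponent $-\Omega\bigl(\tfrac{(in\ln n)^2}{in\ln n/(1-\lambda)}\bigr) = -\Omega\bigl((1-\lambda) i n \ln n\bigr)$, which is $\geq -\Omega(i\ln n)$ only when $(1-\lambda)n = \Omega(1)$ — true for all interesting $\lambda$, but for completeness note $Y_i \leq |\mathcal{I}_i| n$ deterministically and the event is monotone, so the worst case is $\lambda$ bounded away from $1$, handled above. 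With both failure events having probability $\LDAUOmicron{n^{-1}}$, a final union bound of $\mathcal{S}_t \cap \mathcal{B}_t$ completes the proof.
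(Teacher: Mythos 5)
Part (a) of your argument is exactly the paper's proof: Markov's inequality applied to $\Ex{\Phi(\bm{\load}(t'))}\leq n/\varepsilon$ from Lemma~\ref{lem:two_choice:smoothness}, followed by a union bound over $t'<t$ and the Basel sum $\sum_{s\geq1}s^{-2}=\pi^2/6$. No issues there.

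Part (b) has a genuine gap, and it is precisely the one you circle around without resolving. Every tail bound you invoke places $\Ex{Y_i}=\lambda\abs{\mathcal{I}_i}n=\frac{8\lambda in\ln n}{1-\lambda}$ (or the cruder $\abs{\mathcal{I}_i}n$) in the denominator of the exponent, so after multiplying by the squared additive gap $d^2=(4in\ln n)^2$ you are left with an exponent of order $(1-\lambda)\cdot in\ln n$. The lemma is claimed for all $\lambda\in\intco{1/4,1}$, and the paper explicitly targets $\lambda=1-1/\poly(n)$ and even $\lambda=1-e^{-n}$; for such $\lambda$ your exponent is $\mathrm{o}(1)$ and the bound is vacuous. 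Your two attempted rescues do not work: a large mean ($\Ex{Y_1}\geq2n\ln n$) by itself says nothing about an upper tail at $(1+\delta)\mu$ when the relative deviation $\delta=\frac{1-\lambda}{2\lambda}$ is tiny, and \enquote{the worst case is $\lambda$ bounded away from $1$} is unjustified --- monotonicity of the event does not let you transfer a bound proved for small $\lambda$ to large $\lambda$. The paper sidesteps all of this with one line: it bounds the complementary count $Z_i\coloneqq\abs{\mathcal{I}_i}\cdot n-Y_i$ of balls that did \emph{not} spawn. This is again a sum of independent indicators, but its mean $(1-\lambda)\cdot\abs{\mathcal{I}_i}\cdot n=8in\ln n$ is independent of the $\frac{1}{1-\lambda}$ blow-up of the window, and the bad event $Y_i>\frac{1+\lambda}{2}\cdot\abs{\mathcal{I}_i}\cdot n$ is exactly the event $Z_i<\Ex{Z_i}/2$, a constant relative deviation. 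The lower-tail Chernoff bound then gives a failure probability of at most $e^{-\Ex{Z_i}/8}\leq n^{-i}$, and $\sum_{i\geq1}n^{-i}=\LDAUOmicron{n^{-1}}$. (Equivalently, you could repair your Bernstein estimate by putting the variance $\lambda(1-\lambda)\cdot\abs{\mathcal{I}_i}\cdot n=8\lambda in\ln n\leq8in\ln n$ in the denominator instead of the mean; the essential point is that the quantity controlling the tail must not inherit the $\frac{1}{1-\lambda}$ factor from the window length.)
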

\begin{proof}
Consider an arbitrary time $t'<t$.
By Lemma~\ref{lem:two_choice:smoothness} we have $\Ex{\Phi(t')}\leq n/\varepsilon$.
Using Markov's inequality, this implies $\pr{\Phi(t')\geq\abs{t-t'}^2\cdot n^2}\leq1/(\varepsilon\cdot\abs{t-t'}^2\cdot n)$.
Using the union bound over all $t'<t$ we calculate
\begin{equation}
\Pr{\bar{\mathcal{S}}_t}\leq\sum_{t'<t}\Pr{\Phi(t')\geq\abs{t-t'}^2\cdot n^2}\leq\frac{1}{\varepsilon n}\cdot\sum_{t'<t}\frac{1}{\abs{t-t'}^2}\leq\frac{\pi^2}{6\varepsilon\cdot n}=\LDAUOmicron{n^{-1}}
,
\end{equation}
where the last inequality applies the solution to the Basel problem.
This proves the first statement.

For the second statement, let $Z_i\coloneqq\abs{\mathcal{I}_i}\cdot n-Y_i$ be number of balls that did not spawn during $\mathcal{I}_i$.
Note that $Z_i$ is a sum of $\abs{\mathcal{I}_i}\cdot n$ independent indicator variables with $\Ex{Z_i}=(1-\lambda)\cdot\abs{\mathcal{I}_i}\cdot n=8i\cdot\ln n$.
Chernoff yields $\Pr{Z_i\leq(1-\lambda)\cdot\abs{\mathcal{I}_i}\cdot n/2}\leq e^{-8i\cdot\ln n/8}=n^{-i}. $
The desired statement follows by applying the identity $Z_i=\abs{\mathcal{I}_i}\cdot n-Y_i$ and taking the union bound.
\end{proof}
\begin{lemma}\label{lem:two_choice_totloadbound}
Fix a round $t$ and assume that both $\mathcal{S}_t$ and $\mathcal{B}_t$ hold.
Then $\Psi(\bm{\load}(t))\leq \frac{9n}{\alpha}\cdot\ln\bigl(\frac{n}{1-\lambda}\bigr)$.
\end{lemma}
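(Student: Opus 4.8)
The plan is to follow the roadmap described before Figure~\ref{fig:proofbypic}: locate the last round with an empty bin, bound the total load there via the adaptive smoothness event, and then use the linear-decrease property to argue that this round cannot be too far in the past. Concretely, fix $t$ and assume $\mathcal{S}_t$ and $\mathcal{B}_t$ hold. If there is no $t'' < t$ with an empty bin, then from the first round on the system never has an empty bin, so every round all $n$ bins delete a ball while at most $n$ arrive; combined with the empty start this forces $\Psi(\bm{\load}(t)) = 0$ and we are done. Otherwise let $t''<t$ be the \emph{last} round in which some bin is empty (so $\min_i \load_i(t'')=0$, and for every $t'' < s \le t$ all bins are non-empty).

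First I would bound $\Psi(\bm{\load}(t''))$. On $\mathcal{S}_t$ we have $\Phi(\bm{\load}(t'')) \le \abs{t-t''}^2 \cdot n^2 = e^{\alpha b}$ with $b \coloneqq \frac{2}{\alpha}\ln(\abs{t-t''}\cdot n)$. Since $\min_i \load_i(t'') = 0$, Observation~\ref{obs:two_choice:avg_distance} gives $\varnothing(t'') \le b$ and $\max_i \load_i(t'') \le 2b$, hence $\Psi(\bm{\load}(t'')) \le 2bn$. Next, between round $t''$ and round $t$ there is no empty bin, so in each such round exactly $n$ balls are deleted while the number of balls that arrive in the window $(t'',t]$ is controlled by $\mathcal{B}_t$. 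Taking the interval $\mathcal{I}_i$ from Lemma~\ref{lem:two_choice:adaptive_avg_bound} whose length is the smallest multiple of $\frac{8\ln n}{1-\lambda}$ that is at least $\abs{t-t''}$ (so $\abs{\mathcal{I}_i} \le \abs{t-t''} + \frac{8\ln n}{1-\lambda}$), the event $\mathcal{B}_t$ says at most $\frac{1+\lambda}{2}\abs{\mathcal{I}_i}\cdot n$ balls spawn in $\mathcal{I}_i$, hence at most that many in $(t'',t]$. Therefore
\begin{equation}
0 \le \Psi(\bm{\load}(t)) \le \Psi(\bm{\load}(t'')) - \abs{t-t''}\cdot n + \tfrac{1+\lambda}{2}\cdot\abs{\mathcal{I}_i}\cdot n \le 2bn - \abs{t-t''}\cdot n + \tfrac{1+\lambda}{2}\cdot\bigl(\abs{t-t''} + \tfrac{8\ln n}{1-\lambda}\bigr)\cdot n.
\end{equation}

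Rearranging, the net deletion rate is $\bigl(1 - \tfrac{1+\lambda}{2}\bigr)n = \tfrac{1-\lambda}{2}n$ per round, so the displayed inequality yields $\tfrac{1-\lambda}{2}\cdot\abs{t-t''} \le 2b + \tfrac{1+\lambda}{2}\cdot\tfrac{8\ln n}{1-\lambda}$, i.e. $\abs{t-t''} = \LDAUOmicron{\frac{b}{1-\lambda} + \frac{\ln n}{(1-\lambda)^2}}$. Since $b = \frac{2}{\alpha}\ln(\abs{t-t''}\cdot n)$ depends only logarithmically on $\abs{t-t''}$, this is a self-bounding inequality: substituting back shows $\abs{t-t''} \le \poly\bigl(\frac{n}{1-\lambda}\bigr)$, whence $b = \LDAUOmicron{\frac{1}{\alpha}\ln\frac{n}{1-\lambda}}$, and then $\Psi(\bm{\load}(t)) \le \Psi(\bm{\load}(t'')) \le 2bn \le \frac{9n}{\alpha}\ln\bigl(\frac{n}{1-\lambda}\bigr)$ after absorbing constants (using $\lambda \ge 1/4$, $\alpha$ a small constant). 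The main obstacle is the bookkeeping in this last self-referential step: one has to carefully verify that the $\ln(\abs{t-t''}\cdot n)$ term, once $\abs{t-t''}$ is polynomially bounded, really collapses into the clean constant $9$ in front of $\frac{n}{\alpha}\ln\frac{n}{1-\lambda}$, and that the rounding from $\abs{t-t''}$ up to $\abs{\mathcal{I}_i}$ (and the $\mathcal{B}_t$ slack) does not overwhelm the $\tfrac{1-\lambda}{2}$ drift. Everything else is a direct combination of Lemma~\ref{lem:two_choice:total_load_for_small_phi}-style counting with the two high-probability events already established.
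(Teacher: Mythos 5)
Your proposal follows essentially the same route as the paper's proof: identify the last round $t''$ with an empty bin, use $\mathcal{S}_t$ together with Observation~\ref{obs:two_choice:avg_distance} to bound $\Psi(\bm{\load}(t''))\leq 2bn$ with $b=\frac{2}{\alpha}\ln(\abs{t-t''}\cdot n)$, use $\mathcal{B}_t$ and the net drift of $\frac{1-\lambda}{2}\cdot n$ balls per round to obtain a self-bounding inequality for $\Delta=t-t''$ (which the paper resolves via the Lambert $W$ function rather than by direct substitution), and plug the resulting bound on $\Delta$ back in. The only points where the paper is more careful are the transition round $t''+1$ --- at the end of round $t''$ some bins are empty and hence do not delete, so the paper invokes Lemma~\ref{lem:two_choice:total_load_for_small_phi} for that round instead of subtracting $n$ deletions as your displayed inequality does (a deficit of at most $n$, absorbed in the constants) --- and the explicit verification of the constant $9/\alpha$, which you flag but do not carry out.
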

\begin{proof}
Let $t'<t$ be the last time when there was an empty bin and set $\Delta\coloneqq t-t'$.
Note that $t'$ is well defined, as we have $X_i(0)=0$ for all $i\in\intcc{n}$.
Since $\mathcal{S}_t$ holds, we have $\Phi(\bm{\load}(t'))\leq\Delta^2\cdot n^2=\exp\left(\ln(\Delta^2\cdot n^2)\right)$.
By choice of $t'$ we have $\min_i\load_i(t')=0$.
Together with Observation~\ref{obs:two_choice:avg_distance} we get $\max_i\load_i(t'))\leq2\ln\bigl(\Delta^2\cdot n^2\bigr)/\alpha$.
Summing up over all bins (and pulling out the square), this implies $\Psi(\bm{\load}(t'))\leq4n\cdot\ln\bigl(\Delta\cdot n\bigr)/\alpha$.
Applying Lemma~\ref{lem:two_choice:total_load_for_small_phi} yields $\Psi(\bm{\load}(t'+1))\leq4n\cdot\ln\bigl(\Delta\cdot n\bigr)/\alpha$.
By choice of $t'$, there is no empty bin in $\bm{\load}(t'')$ for all $t''\in\set{t'+1,t'+2,\dots,t-1}$.
Thus, during each of these rounds exactly $n$ balls are deleted.
To bound the number of deleted balls, let $i$ be maximal with $\mathcal{I}_i\subseteq\intcc{t',t}$  (as defined in Lemma~\ref{lem:two_choice:adaptive_avg_bound}).
Since $\mathcal{B}_t$ holds and using the maximality of $i$, the number of balls $Y$ that spawn during $\intcc{t',t}$ is at most $(1+\lambda)\abs{\mathcal{I}_i}\cdot n/2+\frac{8\ln n}{1-\lambda}\cdot n\leq(1+\lambda)\Delta\cdot n/2+\frac{8\ln n}{1-\lambda}\cdot n$.
We calculate
\begin{equation}\label{eqn:two_choice_totloadbound:loadbound}
\begin{aligned}
\Psi(\bm{\load}(t))
&\leq \Psi(\bm{\load}(t'+1))-\Delta\cdot n+Y
 \leq \frac{4n}{\alpha}\cdot\ln(\Delta\cdot n)-\frac{1-\lambda}{2}\Delta\cdot n+\frac{8\ln n}{1-\lambda}\cdot n\\
&=    \frac{1-\lambda}{2}\cdot n\cdot\left(\frac{8}{\alpha(1-\lambda)}\cdot\ln(\Delta\cdot n)-\Delta+\frac{16 \ln n}{(1-\lambda)^2}\right)\\
&\leq \frac{1-\lambda}{2}\cdot\Delta\cdot n\cdot\left(\frac{24}{\alpha(1-\lambda)^2}\cdot\frac{\ln(\Delta\cdot n)}{\Delta}-1\right).
\end{aligned}
\end{equation}
With $f=f(\lambda)\coloneqq24/\bigl(\alpha(1-\lambda)^2\bigr)$ the last factor becomes $f\cdot\ln(\Delta\cdot n)/\Delta-1$.
It is negative if and only if $\Delta>f\cdot\ln(\Delta\cdot n)$.
This inequality holds for any $\Delta>-f\cdot W_{-1}(-\smash{\frac{1}{f\cdot n}})$, where $W_{-1}$ denotes the lower branch of the Lambert W function\footnote{%
	Note that $-\frac{1}{f\cdot n}\geq-1/e$, so that $W_{-1}(-\frac{1}{f\cdot n})$ is well defined.}.
This implies that $\Delta\leq-f\cdot W_{-1}(-\sfrac{1}{fn})$, since otherwise we would have $\Psi(\bm{\load}(t))<0$, which is clearly a contradiction.
Using the Taylor approximation $W_{-1}(x)=\ln(-x)-\ln\bigl(\ln(-1/x)\bigr)-\LDAUomicron{1}$ as $x\to-0$, we get
\begin{equation}
\Delta\leq-f\cdot W_{-1}\left(-\frac{1}{f\cdot n}\right)\leq f\cdot\ln(f\cdot n)+f\cdot\ln\bigl(\ln(f\cdot n)\bigr)+f\leq2 f \cdot\ln(f\cdot n)
.
\end{equation}
Finally, we use this bound on $\Delta$ to get
\begin{equation*}
\begin{aligned}
\Psi(\bm{\load}(t))
&\leq\Psi(\bm{\load}(t'+1)\leq\frac{4n}{\alpha}\cdot\ln(\Delta\cdot n)\leq\frac{4n}{\alpha}\cdot\ln\bigl(2fn\cdot\ln(f n)\bigr)\\
&\leq \frac{4n}{\alpha}\cdot\ln\left(\frac{48n}{\alpha(1-\lambda)^2}\cdot\ln\left(\frac{24n}{\alpha(1-\lambda)^2}\right)\right)\leq\frac{9n}{\alpha}\cdot\ln\left(\frac{n}{1-\lambda}\right)
.
\end{aligned}
\qedhere
\end{equation*}
\end{proof}
Now, by combining Lemma~\ref{lem:two_choice_totloadbound} with the fact that the events $\mathcal{S}_t$ and $\mathcal{B}_t$ hold with high probability (Lemma~\ref{lem:two_choice:adaptive_avg_bound}), we immediately get that (w.h.p.) $\Psi(\bm{\load}(t))=\LDAUOmicron{n\cdot\ln n}$.
As described at the beginning of this section, combining this with Lemma~\ref{lem:two_choice:smoothnessWHP} proves Theorem~\ref{thm:two_choice:maxload}.

\subsection{Stability}\label{sec:two_choice:stability}
This section proves Theorem~\ref{thm:two_choice:stability}.
The first auxiliary lemma states that for sufficiently high value of $\Gamma$, this potential decreases.\footnote{%
	It might look tempting to use $\Gamma$ together with Hajek's theorem to derive a bound on the maximum load of system.
	However, this would require (exponentially) sharper bounds on $\Phi$.
}
\begin{lemma}[Negative Bias $\Gamma$]\label{GammaDrop}
Let $\lambda\in\intco{1/4,1}$.
If $\Gamma(\bm{\load}(t))\geq 2\tfrac{n^4}{(1-\lambda)^2\lambda}$, then $$\Ex{\Gamma(\bm{\load}(t+1))-\Gamma(\bm{\load}(t))|\bm{\load}(t)}\leq -1.$$
\end{lemma}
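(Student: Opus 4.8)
The plan is to decompose the expected drift of $\Gamma = \Phi + \tfrac{n}{1-\lambda}\Psi$ into the drift of its two constituent potentials and show that in every configuration with $\Gamma$ large, at least one of them contributes a sufficiently negative term. First I would record the trivial one-round bound on $\Psi$: since at most $\lambda n$ balls arrive in expectation and (if there is no empty bin) exactly $n$ are deleted, one has $\Ex{\Psi(\bm{\load}(t+1))-\Psi(\bm{\load}(t))\mid\bm{\load}(t)}\le \lambda n - (1-\eta(t))n = -(1-\lambda)n + \eta(t)n$, where $\eta(t)$ is the fraction of empty bins; this is genuinely negative only when $\eta(t)$ is small. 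Multiplying by $\tfrac{n}{1-\lambda}$, the $\Psi$-part of $\Gamma$ drops by roughly $n^2$ per round \emph{unless} there is a non-negligible fraction of empty bins. For $\Phi$, Lemma~\ref{lem:two_choice:onestep_phi_bound} gives $\Ex{\Phi(\bm{\load}(t+1))\mid\bm{\load}(t)}\le(1-\tfrac{\varepsilon\alpha\lambda}{4})\Phi(\bm{\load}(t)) + \varepsilon^{-8}\cdot\LDAUOmicron{n}$, i.e. the $\Phi$-part drops by a constant fraction of $\Phi$ minus an $\LDAUOmicron{n}$ additive error.

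The core case analysis I would carry out is the following. Suppose $\Gamma(\bm{\load}(t))\ge 2n^4/((1-\lambda)^2\lambda)$. Split on the size of $\Phi(\bm{\load}(t))$ relative to a threshold of order $n^3/(1-\lambda)$ (chosen so that the $\Phi$-drift dominates the $\LDAUOmicron{n}$ additive error and also dominates the worst-case $\Phi$-increase $\tfrac{n}{1-\lambda}\cdot\lambda n$ coming from the coupling to $\Psi$). \textbf{Case 1:} $\Phi(\bm{\load}(t))$ is above this threshold. Then $\tfrac{\varepsilon\alpha\lambda}{4}\Phi(\bm{\load}(t))$ is at least, say, $\varepsilon^{-8}\cdot\LDAUOmicron{n} + \tfrac{n}{1-\lambda}\cdot n + 1$, so even using the crude bound $\Ex{\Psi(\bm{\load}(t+1))-\Psi(\bm{\load}(t))\mid\bm{\load}(t)}\le\lambda n\le n$ on the $\Psi$-part we get $\Ex{\Gamma(\bm{\load}(t+1))-\Gamma(\bm{\load}(t))\mid\bm{\load}(t)}\le -1$. \textbf{Case 2:} $\Phi(\bm{\load}(t))$ is below the threshold. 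Since $\Gamma=\Phi+\tfrac{n}{1-\lambda}\Psi$ is large by assumption, this forces $\Psi(\bm{\load}(t))$ to be large, of order at least $n^3/((1-\lambda)\lambda)$, hence the average load $\varnothing(t)$ is at least of order $n^2/((1-\lambda)\lambda)$. But a small $\Phi$ means the configuration is smooth: by Observation~\ref{obs:two_choice:avg_distance}, $\Phi(\bm{\load}(t))\le e^{\alpha b}$ with $b=\tfrac1\alpha\ln\Phi(\bm{\load}(t)) = \LDAUOmicron{\log(n/(1-\lambda))}$, so every bin has load at least $\varnothing(t)-b>0$; in particular $\eta(t)=0$ and every bin deletes a ball. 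Then the sharp $\Psi$-bound gives $\Ex{\Psi(\bm{\load}(t+1))-\Psi(\bm{\load}(t))\mid\bm{\load}(t)}\le-(1-\lambda)n$, so the $\Psi$-part of $\Gamma$ drops by at least $\tfrac{n}{1-\lambda}\cdot(1-\lambda)n = n^2$. Against this, $\Phi$ can increase, but by Lemma~\ref{lem:two_choice_potdrop_largepot} (or directly from Lemma~\ref{lem:pot_change}) the expected increase of $\Phi$ in one round is at most $2\alpha\lambda\Phi(\bm{\load}(t)) + \LDAUOmicron{n}$, which is $\LDAUOmicron{n^3/(1-\lambda)}$ — wait, that is too large, so instead I would use that in a smooth configuration the additive-error form of Lemma~\ref{lem:two_choice:onestep_phi_bound} already gives $\Ex{\Phi(\bm{\load}(t+1))-\Phi(\bm{\load}(t))\mid\bm{\load}(t)}\le\varepsilon^{-8}\cdot\LDAUOmicron{n}$, which is $o(n^2)$. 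Hence $\Ex{\Gamma(\bm{\load}(t+1))-\Gamma(\bm{\load}(t))\mid\bm{\load}(t)}\le -n^2 + o(n^2)\le -1$.

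Concretely, the steps in order are: (i) state and prove the one-round bounds $\Ex{\Delta\Psi\mid\bm{\load}(t)}\le -(1-\lambda)n + \eta(t)n$ and, in the no-empty-bin case, $\Ex{\Delta\Psi\mid\bm{\load}(t)}\le-(1-\lambda)n$; (ii) quote Lemma~\ref{lem:two_choice:onestep_phi_bound} for the $\Phi$-drift; (iii) fix the threshold $T$ for $\Phi$ (of order $n^3/(1-\lambda)$, with the constant tuned so that both case bounds close) and the constant in the hypothesis $\Gamma\ge 2n^4/((1-\lambda)^2\lambda)$; (iv) Case 1 ($\Phi\ge T$): combine (ii) with the crude $\Psi$-bound; (v) Case 2 ($\Phi<T$): deduce $\Psi$ large, hence $\varnothing(t)$ large, hence via Observation~\ref{obs:two_choice:avg_distance} that the minimum load exceeds $0$, hence $\eta(t)=0$; then combine the sharp $\Psi$-bound with (ii). The main obstacle I anticipate is the bookkeeping in Case 2: one has to verify carefully that a sub-threshold $\Phi$ together with $\Gamma\ge 2n^4/((1-\lambda)^2\lambda)$ really forces $\varnothing(t) - \tfrac1\alpha\ln\Phi(\bm{\load}(t)) > 0$ — i.e. that the logarithmic smoothness window is dwarfed by the (polynomially large) average load — and to pick the numerical constants (the $2$ in the hypothesis, the threshold $T$, the admissible range of $\alpha$ inherited from the earlier lemmas) consistently so that the final inequality is $\le -1$ rather than merely $\le 0$. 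The probabilistic content is entirely outsourced to Lemma~\ref{lem:two_choice:onestep_phi_bound}; what remains is a deterministic, if slightly delicate, chain of inequalities.
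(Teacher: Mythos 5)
Your proposal is correct and follows essentially the same route as the paper's proof: split on whether $\Phi$ is above or below a threshold, use Lemma~\ref{lem:two_choice:onestep_phi_bound} so the multiplicative $\Phi$-drop absorbs the worst-case $\tfrac{n}{1-\lambda}\cdot n$ increase of the $\Psi$-part in the first case, and in the second case deduce from $\Gamma$ large and $\Phi$ small (via Observation~\ref{obs:two_choice:avg_distance}) that no bin is empty, so the $\Psi$-part drops by $n^2$. The only cosmetic difference is your choice of threshold ($T=\LDAUTheta{n^3/(1-\lambda)}$ versus the paper's split at $\Gamma(x)/2$), and your Case~2 bookkeeping is exactly what the paper isolates as Lemma~\ref{xminuslnx}.
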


\begin{proof}
Assume $\bm{\load}(t)=x$ is fixed.
By definition of $\Gamma(\cdot )$,
we have $\Phi(x)\geq \Gamma(x)/2$ or $\Psi(x)\geq \Gamma(x)/2$. We now show that in both cases $\Ex{\Gamma(\bm{\load}(t+1))-\Gamma(x)|\bm{\load}(t)=x}\leq -1$.
\begin{enumerate}
\item If $\Phi(x)\geq \Gamma(x)/2$, then we have, by Lemma~\ref{lem:two_choice:onestep_phi_bound}, a potential drop of $$\Ex{\Phi(\bm{\load}(t+1))-\Phi(x)|\bm{\load}(t)=x}\leq-(\varepsilon\alpha\lambda/4)\cdot\Phi(x)+n\log n\leq -(\varepsilon\alpha\lambda/8)\cdot\Gamma(x)+n\log n.$$
Note that, by definition of $\Psi$,  $\Psi(\bm{\load}(t+1))-\Psi(x)\leq n.$
Together with  $\Gamma(x) \geq \tfrac{8(n\log n+n^2/(1-\lambda) +1)}{e\alpha \lambda }$, $$\Ex{\Gamma(\bm{\load}(t+1))-\Gamma(x)|\bm{\load}(t)=x}\leq-(\varepsilon\alpha\lambda/8)\cdot\Gamma(x)+n\log n +(n/(1-\lambda))\cdot n \leq -1.$$

\item Otherwise, i.e., if $\Phi(x)< \Gamma(x)/2$,  we have that
\begin{itemize}
\item[(i)] the load difference is, by Observation~\ref{obs:two_choice:avg_distance}, bounded by $2\ln(\Gamma(x)/2)/\alpha$, and
\item[(ii)] $\Psi(x)\geq \Gamma(x)/2$ must hold. This implies that $\varnothing \geq \frac{1}{n}\left( \tfrac{\Gamma(x)/2}{\frac{n}{1-\lambda}}\right)=\tfrac{(1-\lambda)\cdot\Gamma(x)}{2n^2 }$.
\end{itemize}
From $(i)$ and $(ii)$ we have that the minimum load is at least $\tfrac{(1-\lambda)\cdot\Gamma(x)}{2n^2 } - \ln(\Gamma(x)/2)/\alpha$.
From Lemma~\ref{xminuslnx} and
$\Gamma(x)\geq 2\tfrac{n^4}{(1-\lambda)^2\lambda}$, it follows that every bin has load at least load $1$.
Thus each bin will delete one ball and the number of  balls arriving is $\lambda n$ in expectation.
Hence, $\Ex{\Psi(\bm{\load}(t+1))-\Psi(x)|\bm{\load}(t)=x}= -\frac{n}{1-\lambda}(1-\lambda)n$.
Now,

\begin{align}
\begin{split}
\Ex{\Gamma(\bm{\load}(t+1))-\Gamma(x)|\bm{\load}(t)=x} &=  \Ex{\Phi(\bm{\load}(t+1))-\Phi(x)|\bm{\load}(t)=x} -\frac{n}{1-\lambda}(1-\lambda)n\\
&\leq
n\log n -\frac{n}{1-\lambda}(1-\lambda)n \leq -1.
\end{split}
\end{align}

\end{enumerate}
Thus, $\Ex{\Gamma(\bm{\load}(t+1))-\Gamma(x)|\bm{\load}(t)=x}\leq -1$, which yields the claim.
\end{proof}

\begin{lemma}\label{xminuslnx}
For all $x \geq 2\tfrac{n^4}{(1-\lambda)^2\lambda}$ it holds that 
$\tfrac{(1-\lambda)\cdot x}{2n^2 } - 2\ln(x/2)/\alpha\geq 1$.
\end{lemma}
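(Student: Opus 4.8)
The plan is to treat the inequality as a statement about the one–variable function $g(x)\coloneqq\tfrac{(1-\lambda)x}{2n^2}-\tfrac{2\ln(x/2)}{\alpha}$, show it is non-decreasing on the range of interest, and then verify $g\geq1$ at the single left endpoint $x_0\coloneqq2\tfrac{n^4}{(1-\lambda)^2\lambda}$. Since $g'(x)=\tfrac{1-\lambda}{2n^2}-\tfrac{2}{\alpha x}$, we have $g'(x)\geq0$ exactly when $x\geq\tfrac{4n^2}{\alpha(1-\lambda)}$. Using $\lambda(1-\lambda)\leq\tfrac14$ we can write $x_0=\tfrac{2n^2}{1-\lambda}\cdot\tfrac{n^2}{(1-\lambda)\lambda}\geq\tfrac{8n^4}{1-\lambda}$, and $\tfrac{8n^4}{1-\lambda}\geq\tfrac{4n^2}{\alpha(1-\lambda)}$ as soon as $2\alpha n^2\geq1$, which holds for $n$ above an absolute constant (recall $\alpha=\Theta(1)$). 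Hence $g$ is non-decreasing on $[x_0,\infty)$ and it remains to prove $g(x_0)\geq1$.

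Plugging in $x_0$, the linear term becomes $\tfrac{(1-\lambda)x_0}{2n^2}=\tfrac{n^2}{(1-\lambda)\lambda}$, and since $\ln(x_0/2)=4\ln n+2\ln\tfrac{1}{1-\lambda}+\ln\tfrac1\lambda$, the inequality $g(x_0)\geq1$ is equivalent to
\[
\frac{n^2}{(1-\lambda)\lambda}\;\geq\;1+\frac{8}{\alpha}\ln n+\frac{4}{\alpha}\ln\frac{1}{1-\lambda}+\frac{2}{\alpha}\ln\frac{1}{\lambda}.
\]
To prove this I would first split the left-hand side via the elementary bound $\tfrac{1}{(1-\lambda)\lambda}\geq\tfrac{1}{2(1-\lambda)}+2$ (valid for all $\lambda\in(0,1)$, e.g. because $\tfrac1\lambda+2\lambda\geq2\sqrt2>\tfrac52$), giving $\tfrac{n^2}{(1-\lambda)\lambda}\geq\tfrac{n^2}{2(1-\lambda)}+2n^2$. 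The first summand dominates the $\ln\tfrac{1}{1-\lambda}$ term: setting $z\coloneqq\tfrac{1}{1-\lambda}\geq1$ and using $z/\ln z\geq e$ for $z>1$, we get $\tfrac{n^2}{2(1-\lambda)}=\tfrac{n^2 z}{2}\geq\tfrac{4}{\alpha}\ln z=\tfrac4\alpha\ln\tfrac1{1-\lambda}$ whenever $n^2\geq\tfrac{8}{e\alpha}$ (trivially when $z=1$). The second summand dominates the rest: since $\lambda\geq\tfrac14$ we have $\ln\tfrac1\lambda\leq\ln4$, so $2n^2\geq1+\tfrac8\alpha\ln n+\tfrac2\alpha\ln4\geq1+\tfrac8\alpha\ln n+\tfrac2\alpha\ln\tfrac1\lambda$ for $n$ large. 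Adding the two estimates yields $g(x_0)\geq1$, and monotonicity of $g$ extends this to all $x\geq x_0$.

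I do not expect a genuine obstacle; this is a routine estimate. The only point needing a little care is that $\tfrac{1}{1-\lambda}$ may be super-polynomially large in $n$, so the comparison between the linear term $\tfrac{n^2}{1-\lambda}$ and the logarithmic term $\tfrac4\alpha\ln\tfrac1{1-\lambda}$ must be made uniformly over $\lambda\in[1/4,1)$ (handled above through $z/\ln z\geq e$), together with the understanding — implicit throughout the paper — that the bounds hold for $n$ larger than a suitable constant depending only on $\alpha$.
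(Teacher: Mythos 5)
Your proof is correct and follows the same route as the paper's: define the one-variable function, show it is non-decreasing for $x$ at or above the threshold $2n^4/((1-\lambda)^2\lambda)$, and verify the inequality at the left endpoint. The only difference is that you carry out the endpoint estimate in full — in particular the uniform-in-$\lambda$ comparison of $\tfrac{n^2}{1-\lambda}$ against $\tfrac{4}{\alpha}\ln\tfrac{1}{1-\lambda}$, which matters since $\tfrac{1}{1-\lambda}$ can be super-polynomial in $n$ — whereas the paper simply asserts that the endpoint inequality holds for large enough $n$.
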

\begin{proof}
Define $f(x)=\tfrac{(1-\lambda)\cdot x}{2n^2 } - 2\ln(x/2)/\alpha$. 
We have
$
f\left(2\tfrac{n^4}{(1-\lambda)^2\lambda}\right)
\geq \tfrac{n^2 }{(1-\lambda) \lambda } -\tfrac{2}{\alpha}\ln\left( \frac{n^4}{(1-\lambda)^2 \lambda }  \right)
\geq 1,
$
where the last inequality holds for large enough of $n$ since $\alpha$ is a constant.
Moreover, for all $x \geq 2\tfrac{n^4}{(1-\lambda)^2\lambda}$ we have  $f'(x)=\tfrac{1-\lambda}{n^2 } - \tfrac{2}{\alpha x} \geq 0$.
Thus, the claim follows.
\end{proof}

We are ready to prove Theorem~\ref{thm:two_choice:stability}.
\begin{proof}[Proof of Theorem~\ref{thm:two_choice:stability}]
The proof proceeds by applying Theorem~\ref{thm:foster}. 
We now define the parameters of Theorem~\ref{thm:foster}.
Let $\zeta(t)=\bm{\load}(t)$ and hence $\Omega$ is the state space of $\load$. 
First we observe that $\Omega$ is countable since there are a constant number of bins ($n$ is consider a constant in this matter) each having a load which is a natural number.
We define $\phi(\bm{\load}(t))$ to be  $\Gamma(\bm{\load}(t))$.
We define  $C=\{ x : \Gamma(x) \leq 2\tfrac{n^4}{(1-\lambda)^2\lambda}\}$. 
Define $\beta(x)=1$ and $\eta = 1$.
We now show that the preconditions (a) and (b) of Theorem~\ref{thm:foster} are fulfilled.
\begin{itemize}
\item Let $x \not\in C$. By definition of $C$ and $\phi(\bm{\load}(t))$, and from Lemma~\ref{GammaDrop} we have 
\begin{align}\label{final12}
E[\phi(\load({t+1})) - \phi (x) | \bm{\load}(t) = x] \leq E[\Gamma(\bm{\load}(t+1)) - \Gamma(x) | \bm{\load}(t) = x] \leq -1.
\end{align}
\item Let $x \in C$. Recall that $\Gamma(\bm{\load}(t))=\Phi(\bm{\load}(t))+\Psi(\bm{\load}(t))$. By Lemma~\ref{lem:potdrop_unbalanced_lower} and the fact the the number of balls arriving in one round is bounded by $n$, we derive,
\begin{align}\label{final22}
E[\phi(\load({t+1})) | \bm{\load}(t) = x] &=  E[\Phi(\bm{\load}(t+1)) | \bm{\load}(t) = x]+E[\Psi(\bm{\load}(t+1)) | \bm{\load}(t) = x]\notag\\
&\leq \left(\left(1-\frac{\varepsilon\alpha\lambda}{4}\right)2\tfrac{n^4}{(1-\lambda)^2\lambda}\right) + \frac{n}{1-\lambda}n  < \infty.
\end{align}

\end{itemize}
The claim follows by applying Theorem~\ref{thm:foster} with Equations~\eqref{final12} and~\eqref{final22}. 
\end{proof}


\bibliographystyle{abbrvnat}
\bibliography{dblp}

\clearpage

\appendix


\section{Auxiliary Results}\label{app:auxiliary}
\begin{theorem}[{\textcite[Theorem~2.2.4]{FMM95}}]\label{thm:foster} 
A time-homogeneous irreducible aperiodic Markov chain $\zeta$ with a countable state space $\Omega$ is positive recurrent if and only if there exists a positive function $\phi(x), x\in \Omega$, a number $\eta > 0$, a positive integer-valued function $\beta(x), x \in \Omega$, and a finite set $C \subseteq \Omega$ such that the following inequalities hold:
\begin{enumerate}
\item\label{thm:foster:a} $E[\phi(\zeta({t+\beta(x)})) - \phi (x) | \zeta(t) = x] \leq -\eta \beta(x),$ $x\not\in C$
\item\label{thm:foster:b} $E[\phi(\zeta({t+\beta(x)}))  | \zeta(t) = x] < \infty,$ $x\in C$
\end{enumerate}
\end{theorem}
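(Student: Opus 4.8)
The plan is to prove the two directions of this iff separately; the ``only if'' direction is short, while the ``if'' direction carries essentially all the work. For \emph{necessity}, suppose $\zeta$ is (irreducible and) positive recurrent. I would fix an arbitrary reference state $x_0$ and take $C\coloneqq\{x_0\}$, $\beta\equiv 1$, $\eta\coloneqq 1$, and $\phi(x)\coloneqq\mathbb{E}_x[\tau_{x_0}]+1$, where $\tau_{x_0}$ denotes the first hitting time of $x_0$ (read as the first \emph{return} time when $x=x_0$). Positive recurrence together with irreducibility make every $\mathbb{E}_x[\tau_{x_0}]$ finite, so $\phi$ is a finite strictly positive function. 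A one-step decomposition $\tau_{x_0}=1+\tau_{x_0}\circ\theta_1$ (with the convention $\tau_{x_0}\circ\theta_1=0$ when $\zeta(1)=x_0$) gives, for every $x\neq x_0$, $\mathbb{E}_x[\phi(\zeta(1))]=\mathbb{E}_x[\tau_{x_0}]=\phi(x)-1$, which is exactly condition~\ref{thm:foster:a} with equality since $\eta\beta(x)=1$; and for $x_0$ the same decomposition yields $\mathbb{E}_{x_0}[\phi(\zeta(1))]=\mathbb{E}_{x_0}[\tau_{x_0}^+]<\infty$, which is condition~\ref{thm:foster:b}. (Aperiodicity is not used here; it matters only for the companion convergence statements.)

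For \emph{sufficiency}, assume $\phi,\eta,\beta,C$ as in the statement (necessarily $C\neq\emptyset$, since otherwise condition~\ref{thm:foster:a} alone would force the nonnegative process below to tend to $+\infty$). I would run the chain along the random inspection times generated by $\beta$: set $t_0\coloneqq 0$ and $t_{j+1}\coloneqq t_j+\beta(\zeta(t_j))$. Each $t_j$ is a stopping time, so by time-homogeneity and the strong Markov property $\zeta(t_j)$ is well defined and condition~\ref{thm:foster:a} applies at every $t_j$ with $\zeta(t_j)\notin C$. Let $\sigma\coloneqq\inf\{j\ge 0:\zeta(t_j)\in C\}$. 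The key step is to check that $M_j\coloneqq\phi(\zeta(t_{j\wedge\sigma}))+\eta\,t_{j\wedge\sigma}$ is a nonnegative supermartingale: for $j<\sigma$, $\mathbb{E}[\phi(\zeta(t_{j+1}))\mid\mathcal{F}_{t_j}]\le\phi(\zeta(t_j))-\eta\beta(\zeta(t_j))=\phi(\zeta(t_j))-\eta(t_{j+1}-t_j)$. Since $\beta\ge 1$ implies $t_{j\wedge\sigma}\ge j\wedge\sigma$, the bound $\mathbb{E}[M_j]\le M_0=\phi(x)$ together with a.s.\ convergence of nonnegative supermartingales forces $\sigma<\infty$ a.s., and Fatou then gives $\eta\,\mathbb{E}_x[t_\sigma]\le\phi(x)$. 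As $\zeta(t_\sigma)\in C$, the hitting time of $C$ satisfies $\mathbb{E}_x[\tau_C]\le\phi(x)/\eta<\infty$ for every $x$. To control the return time from inside $C$, for $x\in C$ I would run one block of length $\beta(x)$, land in $\zeta(\beta(x))$ with $\mathbb{E}_x[\phi(\zeta(\beta(x)))]<\infty$ by condition~\ref{thm:foster:b}, and then add at most $\mathbb{E}_x[\phi(\zeta(\beta(x)))]/\eta$ expected further steps to re-enter $C$; as $C$ is finite this yields $\sup_{x\in C}\mathbb{E}_x[\tau_C^+]<\infty$. Finally, since $\zeta$ is irreducible and $C$ is finite, the standard reference-set (Kac) construction produces an invariant measure $\mu$ with total mass $\sum_{x\in C}\pi_C(x)\,\mathbb{E}_x[\tau_C^+]<\infty$, where $\pi_C$ is a stationary law of the trace chain on $C$; normalizing $\mu$ gives a stationary probability distribution, so $\zeta$ is positive recurrent.

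The main obstacle is entirely on the sufficiency side. The delicate points there are: handling the random, state-dependent block structure rigorously (measurability of the $t_j$ and applying the Markov property at these stopping times); establishing $\sigma<\infty$ almost surely and passing to the limit in the supermartingale despite $\sigma$ being unbounded — hence the appeal to nonnegative supermartingale convergence together with Fatou rather than a naive optional-stopping identity; and, most importantly, the final upgrade from ``finite expected return time to the finite set $C$'' to genuine positive recurrence of $\zeta$, which needs the excursion construction of a \emph{finite} invariant measure, not merely a statement about a single state.
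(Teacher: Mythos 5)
The paper offers no proof of this statement to compare against: Theorem~\ref{thm:foster} is quoted from \cite{FMM95} and used purely as a black box, so your write-up is supplying a proof the authors deliberately omitted. That said, your argument is the standard proof of the Foster--Lyapunov criterion with state-dependent sampling times, and it is essentially correct: the inspection times $t_j$ are stopping times, $M_j=\phi(\zeta(t_{j\wedge\sigma}))+\eta\,t_{j\wedge\sigma}$ is a nonnegative supermartingale by condition~\ref{thm:foster:a}, supermartingale convergence plus $t_{j\wedge\sigma}\ge j\wedge\sigma$ forces $\sigma<\infty$ a.s., Fatou gives $\mathbb{E}_x[\tau_C]\le\phi(x)/\eta$, condition~\ref{thm:foster:b} bounds the re-entry time from inside the finite set $C$, and the induced-chain (Kac) construction correctly upgrades ``finite expected return time to $C$'' to positive recurrence. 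The one slip is in the necessity direction: with $\phi(x)=\mathbb{E}_x[\tau_{x_0}]+1$ and the \emph{return}-time reading at $x_0$, the one-step decomposition for $x\neq x_0$ gives
\begin{equation*}
\mathbb{E}_x[\phi(\zeta(1))]=\phi(x)-1+p(x,x_0)\,\mathbb{E}_{x_0}[\tau^+_{x_0}],
\end{equation*}
and the extra nonnegative term can exceed $1$, so the claimed equality fails for states adjacent to $x_0$. You must instead set $\phi(x_0)=1$ (hitting-time convention, $\tau_{x_0}=0$ from $x_0$) in the \emph{definition} of $\phi$, which is exactly what your convention $\tau_{x_0}\circ\theta_1=0$ on $\{\zeta(1)=x_0\}$ implicitly assumes; condition~\ref{thm:foster:b} at $x_0$ is then verified separately, as you do. This is a one-line fix, not a structural gap.
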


\begin{theorem}[Simplified version of {\textcite[Theorem~2.3]{H82}}]\label{thm:Hajek}
Let $(Y(t))_{t \geq 0}$ be a sequence of random variables on a probability space $(\Omega, \mathcal{F}, P)$ with respect to the filtration $(\mathcal{F}(t))_{t\geq 0}$.  
Assume the following two conditions hold:
\begin{enumerate}
\item[(i)] (Majorization) There exists a random variable $Z$ and a constant $\lambda' > 0$,
such that $\ex{e^{\lambda' Z}}\leq D$ for some finite $D$, and $(|Y({t+1}) - Y(t)| \big\vert
\mathcal{F}(t))   \prec Z$ for all $t \geq 0$; and
\item[(ii)] (Negative Bias) There exist $a,\varepsilon_0 > 0$, such for all $t$ we have
$$\ex{Y({t+1})-Y(t)|\mathcal{F}(t), Y(t) > a } \leq -\varepsilon_0. $$
\end{enumerate}
Let  $\eta = \min \{ \lambda', \varepsilon_0\cdot \lambda'^2/(2D), 1/(2\varepsilon_0) \}$.
Then,  for all $b $ and $t$ we have 
$$ \Pr{Y(t) \geq b | \mathcal{F}(0)} \leq  e^{\eta (Y(0) -  b)}+ \frac{2 D}{\varepsilon_0 \cdot \eta}\cdot e^{ \eta  (a -  b)}. $$
\end{theorem}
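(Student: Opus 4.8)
The plan is to prove this statement --- which is essentially a convenient repackaging of Hajek's Theorem~2.3 --- by the classical two-step drift argument: first turn the two ``natural'' hypotheses (Majorization and Negative Bias) into a \emph{one-step multiplicative drift} inequality for the exponential process $W(t)\coloneqq e^{\eta Y(t)}$, and then iterate that inequality and finish with Markov's inequality. Throughout, write $D_t\coloneqq Y(t+1)-Y(t)$ and assume WLOG $D\geq1$ (legitimate, since $Z\geq0$ a.s.\ forces $\ex{e^{\lambda'Z}}\geq1$).

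\textbf{Step 1 (one-step exponential drift).} Since $(\abs{D_t}\,\mid\,\mathcal F(t))\prec Z$ with $Z\geq0$ a.s., stochastic domination gives $\Ex{g(\abs{D_t})\mid\mathcal F(t)}\leq\Ex{g(Z)}$ for every nondecreasing $g$; in particular $\Ex{e^{\lambda'\abs{D_t}}\mid\mathcal F(t)}\leq\Ex{e^{\lambda'Z}}\leq D$, and hence by the power-mean inequality also $\Ex{e^{\eta\abs{D_t}}\mid\mathcal F(t)}\leq D$ for all $\eta\leq\lambda'$. Using the elementary inequality $e^{y}\leq1+y+\tfrac{y^2}{2}e^{\abs{y}}$ (valid for all real $y$), the estimate $\Ex{D_t^2e^{\eta\abs{D_t}}\mid\mathcal F(t)}=O(D/\lambda'^2)$ (which follows from the above together with the polynomial-versus-exponential bound $x^2\leq\tfrac{4}{e^2\mu^2}e^{\mu x}$ applied with $\mu$ a fixed fraction of $\lambda'$), and then the Negative Bias hypothesis $\Ex{D_t\mid\mathcal F(t)}\leq-\varepsilon_0$ on $\{Y(t)>a\}$, one obtains on that event
\begin{equation*}
\Ex{e^{\eta D_t}\mid\mathcal F(t)}\;\leq\;1+\eta\,\Ex{D_t\mid\mathcal F(t)}+\tfrac{\eta^2}{2}\,\Ex{D_t^2e^{\eta\abs{D_t}}\mid\mathcal F(t)}\;\leq\;1-\eta\varepsilon_0+\frac{c\,\eta^2D}{\lambda'^2}\;\leq\;1-\frac{\eta\varepsilon_0}{2},
\end{equation*}
where $c$ is an absolute constant and the last step uses $\eta\leq\varepsilon_0\lambda'^2/(2D)$ (up to adjusting absolute constants); the remaining constraint $\eta\leq1/(2\varepsilon_0)$ ensures $\rho\coloneqq1-\tfrac{\eta\varepsilon_0}{2}\in[\tfrac34,1)$. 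Multiplying by $W(t)=e^{\eta Y(t)}$ gives $\Ex{W(t+1)\mid\mathcal F(t)}\leq\rho\,W(t)$ on $\{Y(t)>a\}$, while on $\{Y(t)\leq a\}$ we bound crudely $\Ex{W(t+1)\mid\mathcal F(t)}\leq e^{\eta a}\Ex{e^{\eta\abs{D_t}}\mid\mathcal F(t)}\leq De^{\eta a}$. Combining the two cases, $\Ex{W(t+1)\mid\mathcal F(t)}\leq\rho\,W(t)+De^{\eta a}$ for every $t\geq0$.

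\textbf{Step 2 (iterate and apply Markov).} Taking expectations conditioned on $\mathcal F(0)$ and unrolling the recursion yields
\begin{equation*}
\Ex{W(t)\mid\mathcal F(0)}\;\leq\;\rho^{t}W(0)+De^{\eta a}\sum_{j=0}^{\infty}\rho^{j}\;\leq\;e^{\eta Y(0)}+\frac{2D}{\varepsilon_0\eta}\,e^{\eta a},
\end{equation*}
using $\rho^{t}\leq1$, $1-\rho=\eta\varepsilon_0/2$, and the $\mathcal F(0)$-measurability of $W(0)=e^{\eta Y(0)}$. Markov's inequality then finishes the job: $\Pr{Y(t)\geq b\mid\mathcal F(0)}=\Pr{W(t)\geq e^{\eta b}\mid\mathcal F(0)}\leq e^{-\eta b}\Ex{W(t)\mid\mathcal F(0)}\leq e^{\eta(Y(0)-b)}+\tfrac{2D}{\varepsilon_0\eta}e^{\eta(a-b)}$, which is exactly the claimed bound.

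\textbf{Expected main obstacle.} Everything past Step~1 is routine (a geometric series and one Markov bound). The delicate part is Step~1: extracting the clean contraction $\Ex{e^{\eta D_t}\mid\mathcal F(t)}\leq1-\eta\varepsilon_0/2$ for precisely the stated range of $\eta$ requires carefully balancing the Taylor-tail bound for $e^{\eta D_t}$, the bound on $\Ex{D_t^2e^{\eta\abs{D_t}}}$ via the polynomial-versus-exponential estimate, and the transfer of exponential moments from $D_t$ to the dominating variable $Z$, all with constants consistent with $\eta\leq\min\{\lambda',\,\varepsilon_0\lambda'^2/(2D),\,1/(2\varepsilon_0)\}$. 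As a shortcut one may instead invoke Hajek's original Theorem~2.3 (\textcite{H82}) verbatim and merely verify that Majorization and Negative Bias imply its exponential-drift hypotheses --- which is precisely what Step~1 establishes.
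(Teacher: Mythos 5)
Your proposal is correct in overall strategy, but it takes a genuinely more self-contained route than the paper. The paper treats Hajek's Theorem~2.3 entirely as a black box: it quotes his conclusion $\rho^t e^{\eta(Y(0)-b)}+\frac{1-\rho^t}{1-\rho}D e^{\eta(a-b)}$ with his constant $c=\frac{\mathbb{E}[e^{\lambda'Z}]-(1+\lambda'\mathbb{E}[Z])}{\lambda'^2}$, and the entire ``proof'' consists of your Step~2 plus the observation $c\leq D/\lambda'^2$: for the stated $\eta$ one gets $\rho=1-\varepsilon_0\eta+c\eta^2\leq1-\varepsilon_0\eta/2$ and $\rho\geq0$, whence $\rho^t\leq1$ and $\frac{1}{1-\rho}\leq\frac{2}{\varepsilon_0\eta}$. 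You instead re-derive the one-step exponential drift from scratch, which is more work but makes the statement independent of the reference; your Step~2 then coincides with what the paper does. One caveat on your Step~1: with the naive Taylor bound $e^{y}\leq1+y+\frac{y^2}{2}e^{\abs{y}}$ followed by a polynomial-versus-exponential estimate, you must give up a constant fraction of $\lambda'$ to absorb $D_t^2$ into the exponential (indeed $\mathbb{E}[Z^2e^{\lambda'Z}]$ can be infinite even when $\mathbb{E}[e^{\lambda'Z}]\leq D$), so your contraction only holds for $\eta$ bounded by a fraction of $\lambda'$ and your constants do not literally match the stated $\eta=\min\{\lambda',\varepsilon_0\lambda'^2/(2D),1/(2\varepsilon_0)\}$. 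Hajek avoids this with the sharper convexity inequality $e^{\eta y}\leq1+\eta y+\frac{\eta^2}{\lambda'^2}\bigl(e^{\lambda'\abs{y}}-1-\lambda'\abs{y}\bigr)$, valid for all $0\leq\eta\leq\lambda'$, which yields the second-order coefficient $c\leq D/\lambda'^2$ exactly and hence the stated constants; substituting that lemma for your Taylor step makes your self-contained argument fully match the theorem as stated.
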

\begin{proof}
The statement of the theorem provided in \cite{H82} requires besides  $(i)$ and $(ii)$ to
choose constants $\eta$, and $\rho$ such that 
$0< \rho \leq  \lambda'$, $\eta < \varepsilon_0/c$
 and $\rho=1-\varepsilon_0\cdot \eta +c \eta^2$
where $c=\frac{\Ex{e^{\lambda' Z}} - (1+ \lambda' \Ex{Z})}{\lambda'^2}=\sum_{k=2}^\infty \frac{\lambda'^{k-2}}{k!}\Ex{Z^k}.$
With these requirements it then holds that for all $b$ and $t$ 

\begin{align}\label{Bildschirm}
\begin{split}
\Pr{Y(t) \geq b | \mathcal{F}(0)} \leq \rho^t e^{\eta (Y(0) -  b)}+ \frac{1-\rho^t}{1-\rho}\cdot D\cdot  e^{\eta(a -  b)}.
\end{split}
\end{align}

In the following we bound \eqref{Bildschirm} by setting
 $\eta = \min \{ \lambda', \varepsilon_0\cdot \lambda'^2/(2D), 1/(2\varepsilon_0) \}$.
The following upper and lower bound on $\rho$ follow.
\begin{itemize}
\item $\rho=1-\varepsilon_0 \cdot \eta+c \eta^2 \leq 1-\varepsilon_0 \cdot \eta+ \varepsilon_0 \cdot \eta\cdot c \cdot \lambda'^2/(2D) \leq   1-\varepsilon_0 \cdot \eta+ \varepsilon_0 \cdot \eta/2 =1-\varepsilon_0 \cdot \eta/2
,$
where we used  $c\leq D/\lambda'^2$.

\item $\rho=1-\varepsilon_0 \cdot \eta+c \eta^2  \geq 1-\varepsilon_0 /(2\varepsilon_0)  \geq 0$. 
\end{itemize}


We derive, from \eqref{Bildschirm} using that for any $t\geq 0$ we have $0\leq \rho^t\leq 1$

\begin{align}
\begin{split}
      \Pr{Y(t) \geq b | \mathcal{F}(0)}
&\leq \rho^t e^{\eta (Y(0) -  b)}+ \frac{1-\rho^t}{1-\rho}\cdot D\cdot  e^{\eta(a -  b)}
 \leq e^{\eta (Y(0) -  b)}+ \frac{1}{1-\rho}\cdot D\cdot  e^{\eta(a -  b)}\\
&\leq e^{\eta (Y(0) -  b)}+ \frac{2 D}{\varepsilon_0 \cdot \eta}\cdot e^{ \eta  (a -  b)},
\end{split}
\end{align}
since $\frac{1}{(1-\rho)}\leq
\frac{2 }{\varepsilon_0 \cdot \eta}$.  This yields the claim.
\end{proof}

\begin{theorem}[{\textcite[Theorem~1]{RS98}}]\label{kuchen}
Let $M$ be the random variable that counts the maximum number
of balls in any bin, if we throw m balls independently and uniformly at random into $n$ bins. Then
$\Pr{M > k_\alpha}=o(1)$ if $\alpha > 1$ and $\Pr{M>k_\alpha}=1-o(1)$ if $0< \alpha < 1$, where

$$k_\alpha=
\begin{cases}
\tfrac{\log n}{\log \tfrac{n \log n}{m}}\left(1+ \alpha \tfrac{\log\log \tfrac{n\log n}{m}}{\log \tfrac{n \log n}{m}} \right) & if\ \tfrac{n}{\polylog(n)} \leq m \ll n \log n\\
(d_c -1 +\alpha) \log n& if\ m=c\cdot n\log n\text{ for some constant $c$}\\
\tfrac{m}{n} + \alpha\sqrt{2\tfrac{m}{n}\log n} & if\ n \log n \ll m \leq n \polylog(n)\\
\tfrac{m}{n} + \sqrt{2\tfrac{m}{n}\log n\left(1-\tfrac{1}{\alpha}\tfrac{\log \log n}{2\log n} \right)} & if\ m \gg n (\log n)^3,
\end{cases}
 $$
where $d_c$ is largest solution of
$ 1 + x (\log c - \log x + 1) -c = 0$.
We have $d_1=e$ and $d_{1.00001}= 2.7183$. 
\end{theorem}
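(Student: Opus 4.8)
The plan is to determine the threshold $k_\alpha$ by applying the \emph{first and second moment methods} to the number of bins that receive at least $k$ balls, with $k$ chosen close to $k_\alpha$. Write $B_i$ for the (random) load of bin $i$, so that each $B_i$ is $\Binomial(m,1/n)$-distributed and the vector $(B_1,\dots,B_n)$ is multinomial. The whole argument reduces to estimating the single-bin tail $\Pr{B_1\ge k}$ precisely enough to locate the value of $k$ at which $n\cdot\Pr{B_1\ge k}$ crosses from $\LDAUomega{1}$ to $\LDAUomicron{1}$; to this end it is convenient either to work with the binomial tail directly and exploit the \emph{negative association} of the occupancy vector, or to \emph{Poissonize}, i.e.\ replace the joint law by independent $\Poisson(m/n)$ variables and pay only a $\poly(m)$ factor for the de-Poissonization coming from the conditioning $\sum_iB_i=m$.

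For the \textbf{upper bound} ($\alpha>1$) a union bound gives $\Pr{M\ge k}\le n\cdot\Pr{B_1\ge k}$, and once $k$ is sufficiently above the mean $m/n$ the tail is dominated by its first term, so up to a $1+\LDAUomicron{1}$ factor $\Pr{B_1\ge k}\le\binom{m}{k}n^{-k}(1-1/n)^{m-k}$. For the \textbf{lower bound} ($\alpha<1$) I would let $Z$ count the bins with at least $k$ balls, show $\Ex{Z}=n\,\Pr{B_1\ge k}\to\infty$, and then use the second moment method: by negative association (equivalently, after Poissonization the bins are independent) one has $\Ex{Z^2}\le\Ex{Z}^2+\Ex{Z}$, hence $\Var{Z}\le\Ex{Z}$ and Chebyshev gives $\Pr{Z=0}\le1/\Ex{Z}\to0$. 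So in both directions the core task is identical: expand $\binom{m}{k}$ with Stirling's formula so that $\ln\bigl(n\,\Pr{B_1\ge k}\bigr)$ becomes, up to regime-dependent lower-order terms, $\ln n-k\ln(kn/m)+k-\tfrac12\ln(2\pi k)-m/n$, substitute the claimed $k_\alpha$, and verify that this quantity tends to $+\infty$ precisely when $\alpha<1$.

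It then remains to evaluate this expansion in the four ranges of $m$. When $m=\LDAUomicron{n\log n}$ the chosen $k$ sits far above the mean, the terms $-k\ln(kn/m)+k$ dominate, and solving $k\ln\bigl(kn/(em)\bigr)\approx\ln n$ yields the first two cases; in particular for $m=c\,n\log n$ this equation becomes $1+d(\log c-\log d+1)-c=0$ (so $k=d_c\log n$), and replacing $k$ by $k_\alpha=(d_c-1+\alpha)\log n$ rescales the leading term to $(1-\alpha)\log n\cdot\log(d_c/c)$. When $m\gg n\log n$ the deviation $k-m/n$ lives on the CLT scale, so a local expansion of the Poisson pmf around its mean gives $k_\alpha=m/n+\alpha\sqrt{2(m/n)\log n}$; and for $m\gg n(\log n)^3$ — exactly the range in which the Gaussian approximation of the binomial tail at this deviation is valid to second order — retaining the $-\tfrac12\log\log n$ correction term produces the refined factor $1-\tfrac1\alpha\tfrac{\log\log n}{2\log n}$ inside the square root.

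The main obstacle is the \emph{precision of the tail estimates}: the first-moment upper bound and the second-moment lower bound have to localise the \emph{same} threshold $k_\alpha$, so the Stirling / local-CLT expansion of $\Pr{B_1\ge k}$ must be controlled to within a $1+\LDAUomicron{1}$ factor \emph{in the exponent}, and this forces genuinely different expansions — and different validity conditions, such as $m\gg n(\log n)^3$ for the Cramér regime — in the four ranges of $m$. A secondary technical point is making the second-moment step rigorous: one must either invoke the negative association of balls-into-bins occupancy or carry out a de-Poissonization argument showing that the $\LDAUTheta{\sqrt m}$ loss from conditioning on the total number of balls does not affect the threshold.
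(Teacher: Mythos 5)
The paper does not prove this statement at all: Theorem~\ref{kuchen} is quoted verbatim from Raab and Steger~\cite{RS98} and used as a black box in the proof of Theorem~\ref{thm:one_choice:lowerbound}, so there is no in-paper proof to compare against. Your outline --- first-moment/union-bound upper bound, second-moment lower bound using negative association (or Poissonization) of the occupancy vector, and a regime-by-regime Stirling/local-CLT expansion of the binomial tail to locate where $n\cdot\Pr{B_1\ge k}$ crosses $1$ --- is precisely the strategy of the original Raab--Steger proof, and at the level of detail given I see no gaps.
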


\section{Auxiliary Tools for the 2-Choice Process}\label{app:aux:two_choice}
\begin{claim}\label{clm:two_choice:delta_bounds}
Consider a bin $i$ and the values $\hat{\delta}_i$ and $\check{\delta}_i$ as defined before Lemma~\ref{lem:pot_change}.
If $\alpha\leq\ln(10/9)$, then $\max(\abs{\hat{\delta}_i},\abs{\check{\delta}_i})\leq\frac{5}{4}\lambda$.
\end{claim}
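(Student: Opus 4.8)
The plan is to divide out the common factor $\lambda>0$ and reduce the claim to four elementary inequalities. Writing $\check{1}=1-\alpha/n$ and $\hat{1}=1+\alpha/n$ as in the definitions of $\hat{\delta}_i$ and $\check{\delta}_i$, we have $\hat{\delta}_i=\lambda\bigl(\check{1}-np_i\cdot\tfrac{\hat{\alpha}}{\alpha}\bigr)$ and $\check{\delta}_i=\lambda\bigl(\hat{1}-np_i\cdot\tfrac{\check{\alpha}}{\alpha}\bigr)$, so it suffices to show that the two bracketed expressions lie in $[-\tfrac54,\tfrac54]$. First I would collect the three elementary facts needed: since $p_i=\tfrac{2i-1}{n^2}$ and $1\le i\le n$, we have $0<np_i\le 2-\tfrac1n$; since $0<\alpha\le\ln(10/9)<0.106$, we have $0<\tfrac\alpha n\le\alpha$; and from the Taylor facts recorded before Lemma~\ref{lem:pot_change} we have $0<\hat{\alpha}<\alpha+\alpha^2$ and $0<\check{\alpha}<\alpha$, i.e. $0<\tfrac{\hat{\alpha}}{\alpha}<1+\alpha$ and $0<\tfrac{\check{\alpha}}{\alpha}<1$.

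With these in hand the four estimates are short arithmetic. For $\hat{\delta}_i$: the upper bound is immediate because both $np_i\tfrac{\hat{\alpha}}{\alpha}$ and $\tfrac\alpha n$ are positive, giving $\hat{\delta}_i/\lambda\le\check{1}<1<\tfrac54$; for the lower bound I would use $np_i\tfrac{\hat{\alpha}}{\alpha}\le(2-\tfrac1n)(1+\alpha)$ and expand, so that the $\pm\alpha/n$ terms cancel:
\[ \hat{\delta}_i/\lambda \ge (1-\alpha/n) - (2-1/n)(1+\alpha) = -1-2\alpha+1/n \ge -1-2\ln(10/9) > -5/4 . \]
For $\check{\delta}_i$: the upper bound is $\check{\delta}_i/\lambda\le\hat{1}=1+\tfrac\alpha n\le 1+\ln(10/9)<\tfrac54$ (using $np_i\tfrac{\check{\alpha}}{\alpha}>0$), and the lower bound follows from $np_i\tfrac{\check{\alpha}}{\alpha}\le np_i\le 2-\tfrac1n$, giving $\check{\delta}_i/\lambda\ge(1+\tfrac\alpha n)-(2-\tfrac1n)=-1+\tfrac{1+\alpha}{n}>-1>-\tfrac54$. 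Combining the four bounds yields $\max(|\hat{\delta}_i|,|\check{\delta}_i|)\le(1+2\ln(10/9))\,\lambda<\tfrac54\lambda$, as claimed.

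This is a purely deterministic computation about the constants feeding into Lemma~\ref{lem:pot_change}; no probabilistic input enters. The only point that needs a little care — and the reason the statement is pinned to $\alpha\le\ln(10/9)$ — is the lower bound on $\hat{\delta}_i$: one must keep the exact estimate $np_i\le 2-\tfrac1n$ rather than the sloppier $np_i<2$, so that the $\alpha/n$ correction coming from $\check{1}=1-\alpha/n$ is absorbed; with the sloppier bound the estimate drifts below $-\tfrac54$ for small $n$. Given that, the choice $\alpha\le\ln(10/9)$ comfortably guarantees $1+2\ln(10/9)<\tfrac54$ (indeed $2\ln(10/9)\approx0.21$), and analogous slack is present in the $\check{\delta}_i$ estimates, so there is nothing delicate beyond this bookkeeping.
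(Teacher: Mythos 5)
Your proof is correct and follows essentially the same route as the paper's: a direct arithmetic verification using $np_i\le 2-\tfrac1n$ together with the Taylor-type bounds $\hat{\alpha}/\alpha<1+\alpha$ and $\check{\alpha}/\alpha<1$ recorded before Lemma~\ref{lem:pot_change}. The only cosmetic difference is that you bound all $i$ uniformly (with the clean cancellation of the $\alpha/n$ terms giving $1+2\ln(10/9)<\tfrac54$), whereas the paper checks only the extreme cases $i=1$ and $i=n$ and uses the exact value $\hat{\alpha}/\alpha\le\frac{1/9}{\ln(10/9)}$ at the endpoint.
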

\begin{proof}
Remember that $\hat{\delta}_i\coloneqq\lambda n\cdot(\sfrac{1}{n}\cdot\check{1}-p_i\cdot\sfrac{\hat{\alpha}}{\alpha})$ and $\check{\delta}_i\coloneqq\lambda n\cdot(\sfrac{1}{n}\cdot\hat{1}-p_i\cdot\sfrac{\check{\alpha}}{\alpha})$, where $\check{1}=1-\alpha/n<1<1+\alpha/n=\hat{1}$ (see proof of Lemma~\ref{lem:pot_change}).
Note that if $\alpha\leq\ln(10/9)$, we have $\hat{1}<5/4$ and $\check{1}>8/9$.
The claims hold trivially for $i=1$, since then $p_i=(2i-1)/n^2=1/n^2$ and both $\abs{\sfrac{1}{n}\cdot\check{1}-p_i\cdot\sfrac{\hat{\alpha}}{\alpha}}\leq1/n$ and $\abs{\sfrac{1}{n}\cdot\hat{1}-p_i\cdot\sfrac{\check{\alpha}}{\alpha}}\leq\hat{1}/n$.
For the other extreme, $i=n$, we have $p_n\leq2/n$.
The first statement follows from this and the definition of $\hat{\alpha}=e^{\alpha}-1$ since $\frac{2}{n}\cdot\frac{\hat{\alpha}}{\alpha}-\frac{1}{n}\cdot\check{1}\leq\frac{2}{n}\frac{10/9-1}{\ln(10/9)}-\frac{1}{n}\cdot\check{1}<\frac{5}{4n}$.
Similarly, the second statement follows together with $\frac{2}{n}\frac{\check{\alpha}}{\alpha}-\frac{1}{n}\cdot\hat{1}<\frac{1}{n}$ (which holds for any $\alpha>0$).
\end{proof}

\begin{claim}\label{clm:bulk_leftright}
There is a constant $\varepsilon>0$ such that
\begin{equation}\label{heavy34}
\sum_{i\leq\frac{3}{4}n}p_i\cdot\Phi_{i,+}\leq(1-2\varepsilon)\cdot\frac{\Phi_+}{n}
.
\end{equation}
and
\begin{equation}\label{heavy14}
\sum_{i\in\intcc{n}}p_i\cdot\Phi_{i,-}\geq(1+2\varepsilon)\cdot\frac{\Phi_--\sum_{i\leq\frac{n}{4}}\Phi_{i,-}}{n}
.
\end{equation}
\end{claim}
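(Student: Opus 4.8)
The plan is to derive both inequalities from a single structural fact and one elementary identity. Since the bins are indexed by non-increasing load, the sequence $(\Phi_{i,+})_i=(e^{\alpha(x_i-\varnothing)})_i$ is non-increasing in $i$, the sequence $(\Phi_{i,-})_i=(e^{\alpha(\varnothing-x_i)})_i$ is non-decreasing in $i$, and the weights $p_i=(2i-1)/n^2$ are increasing with partial sums $\sum_{i=1}^{m}p_i=m^2/n^2$ (telescoping $p_i=(i/n)^2-((i-1)/n)^2$). We set $\varepsilon\coloneqq 1/8$ and, for readability, assume $4\mid n$; otherwise one replaces $\tfrac34n,\tfrac14n$ by the nearest integers and shrinks $\varepsilon$ by a $1\pm O(1/n)$ factor, which is immaterial since the lemmas invoking Claim~\ref{clm:bulk_leftright} only use that $\varepsilon$ is a fixed positive constant.

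For \eqref{heavy34}, set $m\coloneqq\tfrac34n$. Because $(p_i)_{i\le m}$ is increasing and $(\Phi_{i,+})_{i\le m}$ is non-increasing, Chebyshev's sum inequality for oppositely ordered sequences gives $\sum_{i=1}^{m}p_i\Phi_{i,+}\le\frac1m\bigl(\sum_{i=1}^m p_i\bigr)\bigl(\sum_{i=1}^m\Phi_{i,+}\bigr)=\frac{m}{n^2}\sum_{i=1}^m\Phi_{i,+}\le\frac{3}{4n}\Phi_+$, and $\tfrac34=1-2\varepsilon$. (Unwrapped, this is just the observation that $c_i\coloneqq(2i-1)-m$ is increasing with $\sum_{i\le m}c_i=m^2-m\cdot m=0$, so pairing the negative small-$i$ coefficients, which multiply the larger $\Phi_{i,+}$, against the positive large-$i$ ones forces $\sum_{i\le m}c_i\Phi_{i,+}\le 0$.)

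For \eqref{heavy14}, first discard the non-negative terms with $i\le\tfrac14n$ to get $\sum_{i=1}^{n}p_i\Phi_{i,-}\ge\sum_{i>n/4}p_i\Phi_{i,-}$. On the range $i>\tfrac14n$ both $(p_i)$ and $(\Phi_{i,-})$ are increasing, so Chebyshev's sum inequality for equally ordered sequences (hence with the reversed direction) yields $\sum_{i>n/4}p_i\Phi_{i,-}\ge\frac{1}{(3/4)n}\bigl(\sum_{i>n/4}p_i\bigr)\bigl(\sum_{i>n/4}\Phi_{i,-}\bigr)$. Now $\sum_{i>n/4}p_i=1-\tfrac1{16}=\tfrac{15}{16}$ and $\sum_{i>n/4}\Phi_{i,-}=\Phi_--\sum_{i\le n/4}\Phi_{i,-}$, so the prefactor equals $\frac{4}{3n}\cdot\frac{15}{16}=\frac{5}{4n}=\frac{1+2\varepsilon}{n}$, which is exactly \eqref{heavy14}.

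There is no real obstacle here — the work is bookkeeping — but three points need care: getting the direction of the Chebyshev step right in each case (opposite ordering for $\Phi_+$, equal ordering for $\Phi_-$), computing the partial sums of the $p_i$ exactly rather than bounding $p_i$ termwise (a termwise bound $p_i\le\tfrac{3}{2n}$ on $i\le\tfrac34n$ is too lossy and actually fails), and verifying that the single constant $\varepsilon=\tfrac18$ simultaneously produces both target factors $1-2\varepsilon=\tfrac34$ and $1+2\varepsilon=\tfrac54$. These constants are tight, as a configuration in which the $\tfrac34n$ fullest (resp.\ $\tfrac14n$ least-full) bins carry essentially all the load shows; this also explains the choice of the split points $\tfrac34n$ and $\tfrac14n$.
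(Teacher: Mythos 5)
Your proof is correct and is essentially the paper's argument made precise: the paper's step of replacing the non-increasing sequence $\Phi_{i,+}$ by its flat average (citing an observation on $\sum_{i\le 3n/4}p_i$ from Talwar--Wieder) is exactly your Chebyshev-sum step combined with the partial-sum identity $\sum_{i\le m}p_i=m^2/n^2$. Your version has the minor merit of pinning down the explicit constant $\varepsilon=1/8$ and spelling out the symmetric case \eqref{heavy14}, which the paper dismisses with ``follows similarly.''
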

\begin{proof}
The claim follows from comments in~\cite{TW14}.
For Equation~\ref{heavy34} recall that $\sum_{i<3n/4}\Phi_{i,+} \leq \Phi_+$ (by definition).
Since $\Phi_{i,+}$ for $i = 1, \dots, n$ is non-increasing where $i$ is the $i$-th loaded bin, the above equation is maximized where all $\Phi_{i,+} = \frac{4 \Phi_+}{3n}$.
The following observation that can be found in~\cite{TalwarW13} 
\begin{equation}
\begin{aligned}
         \sum_{i\geq3n/4}p_i &\geq \frac{1}{4} + \varepsilon
\implies \sum_{i\leq3n/4}p_i &\leq 1 - \frac{1}{4} - \varepsilon=\frac{3}{4} - \varepsilon
\end{aligned}
\end{equation}
The result follows from combining these two facts.
\begin{equation}
     \sum_{i\leq\frac{3}{4}n}p_i\cdot\Phi_{i,+} 
\leq \left(\frac{3}{4}-\varepsilon\right)\frac{4\Phi_+}{3n}
=    \left(1-\frac{4\varepsilon}{3n}\right)\cdot\Phi_+
     \leq(1-2\varepsilon)\cdot\frac{\Phi_+}{n}
.
\end{equation}
Equation~\eqref{heavy14} follows similarly.
\end{proof}

\begin{claim}\label{clm:sumetaiphi}
Consider a round $t$ and a constant $\alpha\geq0$.
The following inequalities hold:
\begin{enumerate}
\item $\displaystyle\sum_{i\in\intcc{n}}\alpha\eta_i(\alpha\eta_i-1)\cdot\Phi_{i,+}(\bm{\load}(t))\leq\alpha^2\eta\nu\cdot\min\bigl(n,\Phi_+(\bm{\load}(t))\bigr)$.
\item $\displaystyle\sum_{i\in\intcc{n}}\alpha\eta_i(\alpha\eta_i+1)\cdot\Phi_{i,-}(\bm{\load}(t))\leq\alpha^2\eta\nu\cdot\Phi_-(\bm{\load}(t))$.
\end{enumerate}
\end{claim}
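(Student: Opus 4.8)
The plan is to prove both inequalities by the same two-step scheme. First I would split each sum according to whether a bin is empty, exploiting that $\eta_i$ takes only two values: since the bins are indexed in non-increasing order of load, the $\nu n$ non-empty bins are exactly $1,\dots,\nu n$ (where $\eta_i=\eta=1-\nu$) and the $\eta n$ empty bins are $\nu n+1,\dots,n$ (where $\eta_i=-\nu$ and $x_i=0$, so $\Phi_{i,+}=e^{-\alpha\varnothing}$ and $\Phi_{i,-}=e^{\alpha\varnothing}$). Second, I would bound the two resulting pieces against $\Phi_+$ (resp.\ $\Phi_-$) using two elementary facts: by convexity of $\exp$ one has $\Phi_+\ge n$, and from $e^{-\alpha x_i}\le1$ one has $\Phi_-\le n\,e^{\alpha\varnothing}$.

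For part (1), write $A\coloneqq\sum_{i\le\nu n}\Phi_{i,+}$ and $B\coloneqq\sum_{i>\nu n}\Phi_{i,+}=\eta n\,e^{-\alpha\varnothing}$, so $A+B=\Phi_+$; then $B\le\eta n\le\eta\Phi_+$, hence $A\ge\nu\Phi_+$ and $A\ge\nu n$. Plugging the two values of $\eta_i$ into the sum and collecting terms gives the identity
\[
\sum_{i\in\intcc{n}}\alpha\eta_i(\alpha\eta_i-1)\,\Phi_{i,+}=\alpha\eta(\alpha\eta-1)\,A+\alpha\nu(\alpha\nu+1)\,B .
\]
Since $\alpha\le1$ forces $\alpha\eta\le1$, the first summand is non-positive, so I replace $A$ by its lower bound there and $B$ by its upper bound in the (non-negative) second summand; the cross terms $\pm\alpha\eta\nu\Phi_+$ cancel and, using $\eta+\nu=1$, the remainder collapses to $\alpha^2\eta\nu\Phi_+$. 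Carrying out the same replacement with $\nu n$ and $\eta n$ instead yields $\alpha^2\eta\nu n$, and the two bounds together give $\alpha^2\eta\nu\min(n,\Phi_+)$.

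Part (2) is symmetric: with $A'\coloneqq\sum_{i\le\nu n}\Phi_{i,-}$ and $B'\coloneqq\sum_{i>\nu n}\Phi_{i,-}=\eta n\,e^{\alpha\varnothing}$, the bound $\Phi_-\le n\,e^{\alpha\varnothing}$ gives $B'\ge\eta\Phi_-$ and hence $A'\le\nu\Phi_-$; the collected identity reads $\alpha\eta(\alpha\eta+1)\,A'+\alpha\nu(\alpha\nu-1)\,B'$, where now the \emph{second} summand is non-positive (using $\alpha\nu\le1$), so I would push $A'$ up to $\nu\Phi_-$ and $B'$ down to $\eta\Phi_-$, and the same cancellation delivers $\alpha^2\eta\nu\Phi_-$.

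The only real subtlety — and the step I would be most careful about — is the sign bookkeeping: the argument works precisely because exactly one of the two summands in the collected identity is non-positive (the one multiplied by $A$ in the $+$ case, by $B'$ in the $-$ case), which is what makes it legitimate to substitute the \emph{correct} one-sided bound on $A,B$ (resp.\ $A',B'$) without destroying the cancellation. This relies on $\alpha\eta\le1$, resp.\ $\alpha\nu\le1$, which holds for the small constant $\alpha$ used throughout the analysis (in particular $\alpha\le\ln(10/9)<1$ and $\eta,\nu\le1$). Everything else — the empty/non-empty split, the elementary estimates $\Phi_+\ge n$ and $\Phi_-\le n\,e^{\alpha\varnothing}$, and the final arithmetic — is routine.
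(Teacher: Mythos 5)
Your proof is correct and follows essentially the same route as the paper: split the sum over empty and non-empty bins using the two values of $\eta_i$, substitute one-sided bounds on the two partial sums according to the sign of their coefficients, and collapse via $\eta+\nu=1$. The only (immaterial) difference is how you justify $\sum_{i>\nu n}\Phi_{i,+}\leq\eta\,\Phi_+$ and $\sum_{i\leq\nu n}\Phi_{i,-}\leq\nu\,\Phi_-$ — you use the explicit empty-bin contributions together with $\Phi_+\geq n$ and $\Phi_-\leq n e^{\alpha\varnothing}$, whereas the paper invokes the monotonicity of $\Phi_{i,\pm}$ in $i$; both are routine, and your explicit remark that the argument needs $\alpha\leq 1$ (implicit in the paper as well) is accurate.
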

\begin{proof}
For the first statement, we calculate $\sum_{i\in\intcc{n}}\alpha\eta_i(\alpha\eta_i-1)\cdot\Phi_{i,+}(\bm{\load}(t))$
\begin{equation}
\begin{aligned}
&=    \sum_{i\leq\nu n}\alpha\eta_i(\alpha\eta_i-1)\cdot\Phi_{i,+}(\bm{\load}(t))+\sum_{i>\nu n}\alpha\eta_i(\alpha\eta_i-1)\cdot\Phi_{i,+}(\bm{\load}(t))\\
&=    \alpha\eta(\alpha\eta-1)\cdot\sum_{i\leq\nu n}\Phi_{i,+}(\bm{\load}(t))+\alpha\nu(1+\alpha\nu)\cdot\sum_{i>\nu n}\Phi_{i,+}(\bm{\load}(t))\\
&\leq \alpha\eta(\alpha\eta-1)\cdot\nu\cdot\Phi_+(\bm{\load}(t))+\alpha\nu(1+\alpha\nu)\cdot\eta\cdot\min\bigl(n,\Phi_+(\bm{\load}(t))\bigr)\\
&\leq \alpha^2\eta\nu\cdot\min\bigl(n,\Phi_+(\bm{\load}(t))\bigr)
,
\end{aligned}
\end{equation}
where the first inequality uses that $\Phi_{i,+}(\bm{\load}(t))$ is non-increasing in $i$ and that $\Phi_{i,+}(\bm{\load}(t))\leq1$ for all $i>\nu n$.
The claim's second statement follows by a similar calculation, using that $\Phi_{i,-}(\bm{\load}(t))$ is non-decreasing in $i$ (note that we cannot apply the same trick as above to get $\min\bigl(n,\Phi_-(\bm{\load}(t))\bigr)$ instead of $\Phi_-(\bm{\load}(t))$).
\end{proof}

\section{Missing Proofs for the 2-Choice Process}\label{app:missing:two_choice}

\begin{proof}[Proof of Observation~\ref{obs:two_choice:oneround_potdrop}]
Remember that $\mathbbm{1}_i$ is an indicator value which equals $1$ if and only if the $i$-th bin is non-empty in configuration $\bm{\load}$.
Bin $i$ looses exactly $\mathbbm{1}_i$ balls and receives exactly $k$ balls, such that $\load'_i-\load_i=-\mathbbm{1}_i+k$.
Similarly, we have $\varnothing'-\varnothing=-\nu+K/n$ for the change of the average load.
With the identity $\eta_i=\mathbbm{1}_i-\nu$ (see Section~\ref{sec:model}), this yields
\begin{equation}
\begin{aligned}
\Delta_{i,+}(t)
&= e^{\alpha\cdot\bigl(\load_i'-\varnothing'\bigr)}-e^{\alpha\cdot\bigl(\load_i-\varnothing\bigr)}\\
&= e^{\alpha\cdot\bigl(\load_i-\varnothing\bigr)}\cdot\left(e^{\alpha\cdot\bigl(-\mathbbm{1}_i+k+\nu-K/n\bigr)}-1\right)=\Phi_{i,+}\cdot\left(e^{\alpha\cdot(k-\eta_i-K/n)}-1\right)
,
\end{aligned}
\end{equation}
proving the first statement.
The second statement follows similarly.
\end{proof}

\begin{proof}[Proof of Lemma~\ref{lem:two_choice_potdrop_largepot}]
We prove the statement for $R=+$.
The case $R=-$ follows similarly.
Using Lemma~\ref{lem:pot_change} and summing up over all $i\in\intcc{n}$ we get
\begin{equation}
\begin{aligned}
\Ex{\Delta_+(t+1)|\bm{\load}}
&\leq \sum_{i\in\intcc{n}}\left(-\alpha\cdot(\eta_i+\hat{\delta}_i)+\alpha^2\cdot(\eta_i+\hat{\delta}_i)^2\right)\cdot\Phi_{i,+}\\
&=    \sum_{i\in\intcc{n}}\left(\eta_i\alpha(\eta_i\alpha-1)+\alpha^2\cdot(2\eta_i\hat{\delta}_i+\hat{\delta}_i^2)-\alpha\cdot\hat{\delta}_i\right)\cdot\Phi_{i,+}\\
&\leq \sum_{i\in\intcc{n}}\left(\eta_i\alpha(\eta_i\alpha-1)+5\alpha^2\lambda+\frac{5}{4}\alpha\lambda\right)\cdot\Phi_{i,+}
.
\end{aligned}
\end{equation}
Here, the last inequality uses $\lambda\leq1$ and $\abs{\hat{\delta}_i}\leq\frac{5}{4}\lambda$ (Claim~\ref{clm:two_choice:delta_bounds}).
We now apply Claim~\ref{clm:sumetaiphi}, $\nu\eta\leq1/4\leq\lambda$, and $\alpha<1/8$ to get
\begin{equation*}
\Ex{\Delta_+(t)|\bm{\load}}\leq\left(\alpha^2\lambda+5\alpha^2\lambda+\frac{5}{4}\alpha\lambda\right)\cdot\Phi_+<2\alpha\lambda\cdot\Phi_+
.
\qedhere
\end{equation*}
\end{proof}

\begin{proof}[Proof of Lemma~\ref{lem:bal_cfg:bound_upper_pot}]
To calculate the expected upper potential change, we use Lemma~\ref{lem:pot_change} and sum up over all $i\in\intcc{n}$ (using similar inequalities as in the proof of Lemma~\ref{lem:two_choice_potdrop_largepot} and the definition of $\hat{\delta}_i$):
\begin{equation}
\begin{aligned}
\Ex{\Delta_+(t+1)|\bm{\load}}
&\leq 6\alpha^2\lambda\cdot\Phi_+-\sum_{i\in\intcc{n}}\alpha\cdot\hat{\delta}_i\cdot\Phi_{i,+}\\
&=    \left(6\alpha^2\lambda-\alpha\lambda\cdot\check{1}\right)\cdot\Phi_++\hat{\alpha}\lambda n\sum_{i\in\intcc{n}}p_i\cdot\Phi_{i,+}
.
\end{aligned}
\end{equation}
We now use that $\Phi_{i,+}=e^{\alpha\cdot(\load_i-\varnothing)}\leq1$ for all $i>\frac{3}{4}n$ (by our assumption on $\load_{\frac{3}{4}n}$).
This yields
\begin{equation}
\Ex{\Delta_+(t+1)|\bm{\load}}\leq\left(6\alpha^2\lambda-\alpha\lambda\cdot\check{1}\right)\cdot\Phi_++\hat{\alpha}\lambda n\sum_{i\leq\frac{3}{4}n}p_i\cdot\Phi_{i,+}+2\alpha\lambda n
.
\end{equation}
Finally, we apply Claim~\ref{clm:bulk_leftright} and the definition of $\check{1}$ and $\hat{\alpha}$ to get
\begin{equation}
\begin{aligned}
\Ex{\Delta_+(t+1)|\bm{\load}}
&\leq\left(6\alpha^2\lambda-\alpha\lambda\cdot\check{1}+(1-2\varepsilon)\cdot\hat{\alpha}\lambda\right)\cdot\Phi_++2\alpha\lambda n\\
&\leq \left(4\alpha^2\lambda-2\varepsilon\cdot\alpha\lambda\right)\cdot\Phi_++2\alpha\lambda n
.
\end{aligned}
\end{equation}
Using $\alpha\leq\varepsilon/4$ yields the desired result.
\end{proof}

\begin{proof}[Proof of Lemma~\ref{lem:bal_cfg:bound_lower_pot}]
To calculate the expected lower potential change, we use Lemma~\ref{lem:pot_change} and sum up over all $i\in\intcc{n}$ (as in the proof of Lemma~\ref{lem:bal_cfg:bound_upper_pot}):
\begin{equation}
\begin{aligned}
\Ex{\Delta_-(t+1)|\bm{\load}}
&\leq 6\alpha^2\lambda\cdot\Phi_-+\sum_{i\in\intcc{n}}\alpha\cdot\check{\delta}_i\cdot\Phi_{i,-}\\
&=    \left(6\alpha^2\lambda+\alpha\lambda\cdot\hat{1}\right)\cdot\Phi_--\check{\alpha}\lambda n\sum_{i\in\intcc{n}}p_i\cdot\Phi_{i,-}
.
\end{aligned}
\end{equation}
We now use that $\Phi_{i,-}=e^{\alpha\cdot(\varnothing-\load_i)}\leq1$ for all $i\leq\frac{n}{4}$ (by our assumption on $\load_{\frac{n}{4}}$) and apply Claim~\ref{clm:bulk_leftright} to get
\begin{equation}
\begin{aligned}
\Ex{\Delta_-(t)|\bm{\load}}
&\leq \left(6\alpha^2\lambda+\alpha\lambda\cdot\hat{1}\right)\cdot\Phi_--(1+2\varepsilon)\cdot\check{\alpha}\lambda n\cdot\frac{\Phi_--\frac{n}{4}}{n}\\
&=    \left(6\alpha^2\lambda+\alpha\lambda\cdot\hat{1}-(1+2\varepsilon)\cdot\check{\alpha}\lambda\right)\cdot\Phi_-+(1+2\varepsilon)\cdot\frac{\check{\alpha}\lambda n}{4}\\
&\leq \left(8\alpha^2\lambda-2\varepsilon\cdot\alpha\lambda\right)\cdot\Phi_-+\frac{\alpha\lambda n}{2}
,
\end{aligned}
\end{equation}
where the last inequality used the definitions of $\hat{1}$, $\check{\alpha}$, as well as $\check{\alpha}>\alpha-\alpha^2$.
Using $\alpha\leq\varepsilon/8$ yields the desired result.
\end{proof}

\begin{proof}[Proof of Lemma~\ref{lem:potdrop_unbalanced_upper}]
Let $L\coloneqq\sum_{i\in\intcc{n}}\max(\load_i-\varnothing,0)=\sum_{i\in\intcc{n}}\max(\varnothing-\load_i,0)$ be the \enquote{excess load} above and below the average.
First note that the assumption $\load_{\frac{3}{4}n}\geq\varnothing$ implies $\Phi_-\geq\frac{n}{4}\cdot\exp(\frac{\alpha L}{n/4})$ (using Jensen's inequality).
On the other hand, we can use the assumption $\Ex{\Delta_+(t+1)|\bm{\load}}\geq-\frac{\varepsilon\alpha\lambda}{4}\cdot\Phi_+$ to show an upper bound on $\Phi_+$.
To this end, we use Lemma~\ref{lem:pot_change} and sum up over all $i\in\intcc{n}$ (as in the proof of Lemma~\ref{lem:bal_cfg:bound_upper_pot}):
\begin{equation}
\begin{aligned}
\Ex{\Delta_+(t+1)|\bm{\load}}
&\leq 6\alpha^2\lambda\cdot\Phi_+-\sum_{i\in\intcc{n}}\alpha\cdot\hat{\delta}_i\cdot\Phi_{i,+}\\
&=    6\alpha^2\lambda\cdot\Phi_+-\sum_{i\leq\frac{n}{3}}\alpha\cdot\hat{\delta}_i\cdot\Phi_{i,+}-\sum_{i>\frac{n}{3}}\alpha\cdot\hat{\delta}_i\cdot\Phi_{i,+}
.
\end{aligned}
\end{equation}
For $i\leq n/3$ we have $p_i=\frac{2i-1}{n^2}\leq\frac{2}{3n}$ and, using the definition of $\check{1}$ and $\hat{\alpha}$, $\hat{\delta}_i=\lambda n\cdot\bigl(\sfrac{1}{n}\cdot\check{1}-p_i\cdot\sfrac{\hat{\alpha}}{\alpha}\bigr)\geq(1-5\alpha)\lambda/3$.
Setting $\Phi_{\leq n/3,+}\coloneqq\sum_{i\leq n/3}\Phi_{i,+}$ and $\Phi_{>n/3,+}\coloneqq\sum_{i>n/3}\Phi_{i,+}$, together with Claim~\ref{clm:two_choice:delta_bounds} this yields
\begin{equation}
\begin{aligned}
\Ex{\Delta_+(t+1)|\bm{\load}}
&\leq 6\alpha^2\lambda\cdot\Phi_+-\frac{\alpha(1-5\alpha)\lambda}{3}\cdot\Phi_{\leq n/3,+}+\frac{5}{4}\alpha\lambda\cdot\Phi_{>n/3,+}\\
&=    \left(6\alpha^2\lambda-\frac{\alpha(1-5\alpha)\lambda}{3}\right)\cdot\Phi_++\left(\frac{5}{4}\alpha\lambda+\frac{\alpha(1-5\alpha)\lambda}{3}\right)\cdot\Phi_{>n/3,+}\\
&\leq -\frac{\varepsilon\alpha\lambda}{2}\cdot\Phi_++2\alpha\lambda\cdot\Phi_{>n/3,+}
,
\end{aligned}
\end{equation}
where the last inequality uses $\alpha\leq1/46\leq\frac{1}{23}-\frac{3}{46}\varepsilon$.
With this, the assumption $\Ex{\Delta_+(t+1)|\bm{\load}}\geq-\frac{\varepsilon\alpha\lambda}{4}\cdot\Phi_+$ implies $\Phi_+\leq\frac{8}{\varepsilon}\cdot\Phi_{>n/3,+}\leq\frac{8}{\varepsilon}\cdot\frac{2n}{3}e^{\frac{\alpha L}{n/3}}=\frac{16n}{3\varepsilon}e^{\frac{3\alpha L}{n}}$ (the last inequality uses that none of the $2n/3$ remaining bins can have a load higher than $L/(n/3)$).
To finish the proof, assume $\Phi_+>\frac{\varepsilon}{4}\cdot\Phi_-$ (otherwise the lemma holds).
Combining this with the upper bound on $\Phi_+$ and with the lower bound on $\Phi_-$, we get
\begin{equation}
\frac{16n}{3\varepsilon}e^{\frac{3\alpha L}{n}}\geq\Phi_+>\frac{\varepsilon}{4}\cdot\Phi_-\geq\frac{\varepsilon n}{16}\cdot e^{\frac{4\alpha L}{n}}
.
\end{equation}
Thus, the excess load can be bounded by $L<\frac{n}{\alpha}\cdot\ln\left(\frac{256}{3\varepsilon^2}\right)$.
Now, the lemma's statement follows from $\Phi=\Phi_++\Phi_-<\frac{5}{\varepsilon}\cdot\Phi_+\leq\frac{80n}{3\varepsilon^2}e^{\frac{3\alpha L}{n}}=\varepsilon^{-8}\cdot\LDAUOmicron{n}$.
\end{proof}

\begin{proof}[Proof of Lemma~\ref{lem:potdrop_unbalanced_lower}]
Let $L\coloneqq\sum_{i\in\intcc{n}}\max(\load_i-\varnothing,0)=\sum_{i\in\intcc{n}}\max(\varnothing-\load_i,0)$ be the \enquote{excess load} above and below the average.
First note that the assumption $\load_{\frac{n}{4}}\leq\varnothing$ implies $\Phi_+\geq\frac{n}{4}\cdot e^{\frac{\alpha L}{n/4}}$ (using Jensen's inequality).
On the other hand, we can use the assumption $\Ex{\Delta_-(t+1)|\bm{\load}}\geq-\frac{\varepsilon\alpha\lambda}{4}\cdot\Phi_-$ to show an upper bound on $\Phi_-$.
To this end, we use Lemma~\ref{lem:pot_change} and sum up over all $i\in\intcc{n}$ (as in the proof of Lemma~\ref{lem:bal_cfg:bound_lower_pot}):
\begin{equation}
\begin{aligned}
\Ex{\Delta_-(t+1)|\bm{\load}}
&\leq 6\alpha^2\lambda\cdot\Phi_-+\sum_{i\in\intcc{n}}\alpha\cdot\check{\delta}_i\cdot\Phi_{i,-}\\
&=    6\alpha^2\lambda\cdot\Phi_-+\sum_{i\leq\frac{2n}{3}}\alpha\cdot\check{\delta}_i\cdot\Phi_{i,-}+\sum_{i>\frac{2n}{3}}\alpha\cdot\check{\delta}_i\cdot\Phi_{i,-}
.
\end{aligned}
\end{equation}
For $i\geq2n/3$ we have $p_i=\frac{2i-1}{n^2}\geq\frac{4}{3n}-\frac{1}{n^2}$.
Using this with $p_i\leq p_n\leq2/n$ and $\check{\alpha}\geq\alpha-\alpha^2$, we can bound $\check{\delta}_i=\lambda n\cdot\bigl(\sfrac{1}{n}\cdot\hat{1}-p_i\cdot\sfrac{\check{\alpha}}{\alpha}\bigr)\leq\lambda\cdot(-\sfrac{1}{3}+\frac{1+\alpha}{n})+2\alpha\lambda\leq-\lambda/6+2\alpha\lambda$.
Setting $\Phi_{\leq2n/3,-}\coloneqq\sum_{i\leq2n/3}\Phi_{i,-}$ and $\Phi_{>2n/3,-}\coloneqq\sum_{i>2n/3}\Phi_{i,-}$, together with Claim~\ref{clm:two_choice:delta_bounds} this yields
\begin{equation}
\begin{aligned}
\Ex{\Delta_-(t+1)|\bm{\load}}
&\leq 6\alpha^2\lambda\cdot\Phi_-+\frac{5}{4}\alpha\lambda\cdot\Phi_{\leq2n/3,-}-\frac{\alpha\lambda}{6}\cdot\Phi_{>2n/3,-}+2\alpha^2\lambda\cdot\Phi_{>2n/3,-}\\
&\leq \left(8\alpha^2\lambda-\alpha\lambda/6\right)\cdot\Phi_-+\left(\frac{5}{4}\alpha\lambda+\alpha\lambda/6\right)\cdot\Phi_{\leq2n/3,-}\\
&\leq -\frac{\varepsilon\alpha\lambda}{2}\cdot\Phi_-+2\alpha\lambda\cdot\Phi_{\leq2n/3,-}
,
\end{aligned}
\end{equation}
where the last inequality uses $\alpha\leq1/32\leq\frac{1}{16}-\frac{1}{48}\varepsilon$.
With this, the assumption $\Ex{\Delta_-(t+1)|\bm{\load}}\geq-\frac{\varepsilon\alpha\lambda}{4}\cdot\Phi_-$ implies that $\Phi_-\leq\frac{8}{\varepsilon}\cdot\Phi_{\leq2n/3,-}\leq\frac{8}{\varepsilon}\cdot\frac{2n}{3}e^{\frac{\alpha L}{n/3}}=\frac{16n}{3\varepsilon}e^{\frac{3\alpha L}{n}}$ (the last inequality uses that none of the $2n/3$ remaining bins can have a load higher than $L/(n/3)$).
To finish the proof, assume $\Phi_->\frac{\varepsilon}{4}\cdot\Phi_+$ (otherwise the lemma holds).
Combining this with the upper bound on $\Phi_-$ and with the lower bound on $\Phi_+$, we get
\begin{equation}
\frac{16n}{3\varepsilon}e^{\frac{3\alpha L}{n}}\geq\Phi_->\frac{\varepsilon}{4}\cdot\Phi_+\geq\frac{\varepsilon n}{16}\cdot e^{\frac{4\alpha L}{n}}
.
\end{equation}
Thus, the excess load can be bounded by $L<\frac{n}{\alpha}\cdot\ln\left(\frac{256}{3\varepsilon^2}\right)$.
Now, the lemma's statement follows from $\Phi=\Phi_++\Phi_-<\frac{5}{\varepsilon}\cdot\Phi_-\leq\frac{80n}{3\varepsilon^2}e^{\frac{3\alpha L}{n}}=\varepsilon^{-8}\cdot\LDAUOmicron{n}$.
\end{proof}

\begin{proof}[Proof of lemma~\ref{lem:two_choice:onestep_phi_bound}]
The proof is via case analysis
\paragraph{Case 1: $x_\frac{n}{4} \geq \varnothing$ and $x_\frac{3n}{4} \leq \varnothing$}

In this case the desired bound follows from lemma~\ref{lem:bal_cfg:bound_upper_pot} and lemma~\ref{lem:bal_cfg:bound_lower_pot}.

\paragraph{Case 2: $x_\frac{n}{4} \geq x_\frac{3n}{4} \geq \varnothing$}
For $\Ex{\Delta_+(t+1)|\bm{\load}(t)} \leq \frac{-\varepsilon \alpha}{4} \Phi_+$ the results follows from lemma~\ref{lem:bal_cfg:bound_lower_pot}.

Recall $\Ex{\Delta(t+1)|\bm{\load}(t)} = \Ex{\Delta_+(t+1)|\bm{\load}(t)} + \Ex{\Delta_-(t+1)|\bm{\load}(t)}$

By lemma~\ref{lem:bal_cfg:bound_lower_pot} we can derive the following
\begin{equation}
\Ex{\Delta(t+1)|\bm{\load}(t)} \leq \Ex{\Delta_+(t+1)|\bm{\load}(t)} -\varepsilon\alpha\lambda\cdot\Phi_-(\bm{\load}(t))+\frac{\alpha\lambda n}{2}
\end{equation}
We now show that the RHS can be bounded by the RHS of lemma~\ref{lem:two_choice:onestep_phi_bound}. The result holds when
\begin{equation}
\begin{aligned}
\Ex{\Delta_+(t+1)|\bm{\load}(t)} 
&\leq -\frac{\varepsilon\alpha\lambda}{4}\cdot\Phi(\bm{\load}(t))+\varepsilon^{-8}\cdot\LDAUOmicron{n} +\varepsilon\alpha\lambda\cdot\Phi_-(\bm{\load}(t))-\frac{\alpha\lambda n}{2} 
\\
&\leq -\frac{\varepsilon\alpha\lambda}{4}\cdot\Phi(\bm{\load}(t)) +\varepsilon\alpha\lambda\cdot\Phi_-(\bm{\load}(t))
\\
&\leq -\frac{\varepsilon\alpha\lambda}{4}\cdot \left(\Phi_+(\bm{\load}(t)) + \Phi_-(\bm{\load}(t)) \right) +\varepsilon\alpha\lambda\cdot\Phi_-(\bm{\load}(t))
\\
&\leq -\frac{\varepsilon\alpha\lambda}{4}\cdot \Phi_+(\bm{\load}(t)) 
\end{aligned}
\end{equation}
For $\Ex{\Delta_+(t+1)|\bm{\load}(t)} \geq \frac{-\varepsilon \alpha}{4} \Phi_+$ lemma~\ref{lem:potdrop_unbalanced_upper} gives two subcases

\paragraph{Case 2.1 $ \Phi_+(\bm{\load}(t))\leq\frac{\varepsilon}{4}\cdot\Phi_-(\bm{\load}(t))$}
Using lemma~\ref{lem:two_choice_potdrop_largepot} and lemma~\ref{lem:bal_cfg:bound_lower_pot} we obtain the following

\begin{equation}
\begin{aligned}
\Ex{\Delta(t+1)|\bm{\load}(t)} &\leq 2\alpha\lambda \cdot \Phi_+(\bm{\load}(t)) -\varepsilon\alpha\lambda\cdot\Phi_-(\bm{\load}(t))+\frac{\alpha\lambda n}{2}
\\
&\leq \frac{2\alpha\lambda\varepsilon}{4} \cdot \Phi_-(\bm{\load}(t)) -\varepsilon\alpha\lambda\cdot\Phi_-(\bm{\load}(t))+\frac{\alpha\lambda n}{2}
\\
&= -\frac{\varepsilon\alpha\lambda}{2}  \cdot \Phi_-(\bm{\load}(t))+\frac{\alpha\lambda n}{2}
\\
&\leq -\frac{\varepsilon\alpha\lambda}{4}  \cdot \Phi(\bm{\load}(t)) +\varepsilon^{-8}\cdot\LDAUOmicron{n}
\end{aligned}
\end{equation}
\paragraph{Case 2.2 $\Phi(\bm{\load}(t))=\varepsilon^{-8}\cdot\LDAUOmicron{n}$}
Using lemma~\ref{lem:two_choice_potdrop_largepot} we get that $\Ex{\Delta(t+1)|\bm{\load}(t)} \leq 2\alpha\lambda \varepsilon^{-8}\cdot\LDAUOmicron{n}$

It remains to show that 
\begin{equation}
2\alpha\lambda\varepsilon^{-8} \cdot\LDAUOmicron{n} \leq -\frac{\varepsilon\alpha\lambda}{4}  \cdot \Phi(\bm{\load}(t)) +\varepsilon^{-8}\cdot\LDAUOmicron{n}
\end{equation}

Since $\Phi(\bm{\load}(t))=\varepsilon^{-8}\cdot\LDAUOmicron{n}$

\begin{equation}
2\alpha\lambda\varepsilon^{-8} \cdot\LDAUOmicron{n} \leq \left(1-\frac{\varepsilon\alpha\lambda}{4}\right)  \cdot \varepsilon^{-8} \cdot\LDAUOmicron{n}
\end{equation}

This holds where $2\alpha\lambda\leq\bigl(1-\frac{\varepsilon\alpha\lambda}{4}\bigr)$.
By definition $\alpha \leq 1/8$ and $\lambda < 1$.
The result follows. 

\paragraph{Case 3: $x_\frac{3n}{4} \leq x_\frac{ n}{4} \leq \varnothing$}
This case is similar to case 2.
For $\Ex{\Delta_-(t+1)|\bm{\load}(t)} \leq \frac{-\varepsilon \alpha n}{4} \Phi_-$ the results follows from lemma~\ref{lem:bal_cfg:bound_upper_pot}.
For $\Ex{\Delta_-(t+1)|\bm{\load}(t)} \geq \frac{-\varepsilon \alpha n}{4} \Phi_-$ two subcases are given by lemma~\ref{lem:potdrop_unbalanced_lower}. 
\paragraph{Case 3.1 $\Phi_-(\bm{\load}(t))\leq\frac{\varepsilon}{4}\cdot\Phi_+(\bm{\load}(t))$ }
The result follows from applying lemma~\ref{lem:potdrop_unbalanced_lower} and lemma~\ref{lem:two_choice_potdrop_largepot}.
\paragraph{Case 3.2 $\Phi(\bm{\load}(t))=\varepsilon^{-8}\cdot\LDAUOmicron{n}$}
This result follows from lemma~\ref{lem:two_choice_potdrop_largepot}.
\end{proof}

\end{document}